\documentclass[11pt,a4paper]{article}
\usepackage{amssymb}
\usepackage{amsmath}
\usepackage{amsfonts}
\usepackage{bbm}
\usepackage{amsthm}
\usepackage{mathrsfs}
\usepackage{hyperref}
\usepackage{color}
\usepackage[margin=2.41cm]{geometry}
\usepackage[all,cmtip]{xy}
\usepackage[utf8]{inputenc}
\usepackage{graphicx}
\usepackage{varwidth}
\usepackage{comment}

\usepackage{upgreek}
\usepackage{rotating}

\usepackage{tikz}
\usetikzlibrary{shapes.geometric}

%\usepackage{needspace}\newcommand{\minispace}{\needspace{2\baselineskip}}\newcommand{\spaceplease}{\needspace{5\baselineskip}}\newcommand{\morespace}{\needspace{8\baselineskip}}\newcommand{\extremespace}{\needspace{14\baselineskip}}\newcommand{\monsterspace}{\needspace{25\baselineskip}}

%%%%%%

\definecolor{darkred}{rgb}{0.8,0.1,0.1}
\hypersetup{
     colorlinks=false,         % false: boxed links; true: colored links,false is default
     linkcolor=darkred,
     citecolor=blue,
}

\theoremstyle{plain}
\newtheorem{theo}{Theorem}[section]
\newtheorem{lem}[theo]{Lemma}
\newtheorem{propo}[theo]{Proposition}
\newtheorem{cor}[theo]{Corollary}

\theoremstyle{definition}
\newtheorem{defi}[theo]{Definition}

\newenvironment{ex}
  {\pushQED{\qed}\exx}
  {\popQED\endexx}

\newenvironment{rem}
  {\pushQED{\qed}\remm}
  {\popQED\endremm}

\numberwithin{equation}{section}

\def\nn{\nonumber}

\def\bbK{\mathbb{K}}
\def\bbR{\mathbb{R}}
\def\bbC{\mathbb{C}}

\def\bbZ{\mathbb{Z}}

\def\hom{\underline{\mathrm{hom}}}

\def\Imm{\mathrm{Im}}
\def\Ker{\mathrm{Ker}}

\def\id{\mathrm{id}}
\def\supp{\mathrm{supp}}

\def\dd{\mathrm{d}}
\def\vol{\mathrm{vol}}
\def\sc{\mathrm{sc}}
\def\cc{\mathrm{c}}

\def\pc{\mathrm{pc}}
\def\fc{\mathrm{fc}}

\def\1{I}
\def\oone{\mathbbm{1}}
\def\op{\mathrm{op}}

\def\Loc{\mathbf{Loc}}

\def\Vec{\mathbf{Vec}}
\def\Ch{\mathbf{Ch}}

\def\astdgAlg{\mathbf{dg}^{\ast}\mathbf{Alg}}
\def\astdguLie{\mathbf{dg}^{\ast}\mathbf{uLie}}
\def\PCh{\mathbf{PoCh}}

\def\CC{\mathbf{C}}

\def\hAQFT{\mathbf{hAQFT}}

\def\AAA{\mathfrak{A}}
\def\LLL{\mathfrak{L}}

\def\QQQ{\mathfrak{Q}}

\def\FFF{\mathfrak{F}}
\def\Sol{\mathfrak{Sol}}
\def\CCR{\mathfrak{CCR}}

\def\KG{\mathrm{KG}}
\def\YM{\mathrm{YM}}

\def\sk{\vspace{1mm}}

\makeatletter
\let\@fnsymbol\@alph
\makeatother

%Checks if some references are not used
%\usepackage{refcheck}
%

%%%%%%%%%%%%%%%%%%%%%%%%%%%%%%%%%%%%%%%%%%%%%%%%%%%%%%%%%%%%%%%%%%%%%%%%

\title{%
Linear Yang-Mills theory as a homotopy AQFT
}

\author{%
Marco Benini$^{1,2,a}$, 
Simen Bruinsma$^{3,b}$\ and\ 
Alexander Schenkel$^{3,c}$\vspace{4mm}\\
{\small ${}^1$ Fachbereich Mathematik, Universit\"at Hamburg,}\\
{\small Bundesstr.~55, 20146 Hamburg, Germany.}\vspace{2mm}\\
{\small ${}^2$ Dipartimento di Matematica, Universit\`a di Genova,}\\
{\small Via Dodecaneso 35, 16146 Genova, Italy.}\vspace{2mm}\\
{\small ${}^3$ School of Mathematical Sciences, University of Nottingham,}\\
{\small University Park, Nottingham NG7 2RD, United Kingdom.}\vspace{4mm}\\
{\small \begin{tabular}{ll}
Email: & ${}^a$~\texttt{marco.benini@uni-hamburg.de}, \texttt{benini@dima.unige.it}\\
& ${}^b$~\texttt{simen.bruinsma@nottingham.ac.uk}\\
& ${}^c$~\texttt{alexander.schenkel@nottingham.ac.uk}\vspace{2mm}
\end{tabular}
}
}

\date{October 2019}

%%%%%%%%%%%%%%%%%%%%%%%%%%%%%%%%%%%%%%%%%%%%%%%%%%%%%%%%%%%%%%%%%%%%%%%%

\begin{document}

\maketitle

\vspace{-5mm}

\begin{abstract}
\noindent It is observed that the shifted Poisson structure (antibracket) on the solution complex of Klein-Gordon and linear Yang-Mills theory on globally hyperbolic Lorentzian manifolds admits retarded/advanced trivializations (analogs of retarded/advanced Green's operators). Quantization of the associated unshifted Poisson structure determines a unique (up to equivalence) homotopy algebraic quantum field theory (AQFT), i.e.\ a functor that assigns differential graded $\ast$-algebras of observables and fulfills homotopical analogs of the AQFT axioms. For Klein-Gordon theory the construction is equivalent to the standard one, while for linear Yang-Mills it is richer and reproduces the BRST/BV field content (gauge fields, ghosts and antifields). 
\end{abstract}

\vspace{-1mm}

\paragraph*{Report no.:} ZMP-HH/19-10, Hamburger Beitr\"age zur Mathematik Nr.\ 789
\vspace{-2mm}

\paragraph*{Keywords:} algebraic quantum field theory, locally covariant quantum field theory, 
gauge theory, derived critical locus, homotopical algebra, chain complexes, BRST/BV formalism
\vspace{-2mm}

\paragraph*{MSC 2010:} 81Txx, 18G55, 18G10
\vspace{-1mm}

\renewcommand{\baselinestretch}{0.8}\normalsize
\tableofcontents
\renewcommand{\baselinestretch}{1.0}\normalsize

\newpage

%%%%%%%%%%%%%%%%%%%%%%%%%%%%%%%%%%%%%%%%%%%%%%%%
%%%%%%%%%%%%%%%%%%%%%%%%%%%%%%%%%%%%%%%%%%%%%%%%

\section{\label{sec:intro}Introduction and summary}
Because of their outstanding significance in physics and their intricate connection
to mathematics, quantum gauge theories continuously attract a high level of 
attention throughout different fields of research. In the context of algebraic 
quantum field theory (AQFT) \cite{HK,BFV}, which is a powerful axiomatic 
framework for quantum field theory on Lorentzian manifolds,
it is a long-standing open problem to identify the characteristic features
of quantum gauge theories and their gauge symmetries from a model-independent
perspective. To support these more abstract developments, concrete
examples of quantum gauge theories were constructed in the context of AQFT. 
Most of these studies focused on the case of Yang-Mills theory with structure group $\bbR$ or $U(1)$,
see e.g.\ \cite{SDH,BDS14,BDHS14,FewsterLang,BeniniMaxwell,BSSdiffcoho},
but there also exist similar developments for e.g.\ linearized gravity \cite{FewsterHunt,BDM,Khavkine,Khavkine2}
and linearized supergravity \cite{HackSchenkel}. In addition to such non-interacting models,
examples of perturbatively interacting quantum gauge theories were constructed
in \cite{HollandsYM,FredenhagenRejzner,FredenhagenRejzner2,TehraniZahn} by
means of an appropriate adaption of the BRST/BV formalism to AQFT.
\sk

One of the main conceptual insights of these studies was the observation
that quantum gauge theories, when formulated traditionally in terms of 
gauge-invariant on-shell observable algebras, are in conflict with crucial axioms of AQFT.
The first observation \cite{Dappiaggi} was that quantum gauge 
theories may violate the isotony axiom of AQFT, which demands that the push-forward
$\AAA(f) : \AAA(M)\to\AAA(N)$ of observables along every spacetime embedding $f:M\to N$
is an injective map. It was later understood that the violation of isotony
is due to topological charges in quantum gauge theories, e.g.\
electric and magnetic fluxes in Abelian Yang-Mills theory, and hence it is a feature
that is expected on physical grounds, see e.g.\ \cite{SDH,BDS14,BDHS14,BeniniMaxwell,BSSdiffcoho,BBSS} 
for a detailed explanation. The second observation is more subtle as it is related
to local-to-global properties (i.e.\ {\em descent}) of AQFTs. Within the traditional formulation
in terms of gauge-invariant on-shell observable algebras, quantum gauge theories have very
poor local-to-global properties as witnessed for example by the observation
in \cite{Dappiaggi,FewsterLang} that Fredenhagen's universal algebra (which is a certain local-to-global construction)
for Abelian Yang-Mills theory fails to encode crucial gauge theoretic features such as 
Dirac's charge quantization and Aharonov-Bohm phases. It was later understood
and emphasized in \cite{BeniniSchenkelSzabo} that the failure of (too naive versions of) 
local-to-global constructions is due to higher categorical structures in classical and 
quantum gauge theories, which are neglected (i.e.\ truncated) when working in a
traditional AQFT setting that is based on gauge-invariant on-shell observables.
\sk

Our approach towards resolving this conflict at the interface of AQFT and gauge theory
is the recent {\em homotopical AQFT program}
\cite{BeniniSchenkelSzabo,fibcat,BeniniSchenkelSchreiber,AQFToperad,involution,hAQFT,BruinsmaSchenkel},
whose aim is to refine the foundations of AQFT by introducing new concepts from higher category theory.
We refer to \cite{BSreview} for a recent summary and state-of-the-art review of this approach.
Informally speaking, the main difference between a {\em homotopy AQFT} and an ordinary AQFT
is that it assigns to each spacetime a higher categorical algebra 
in contrast to an ordinary $\ast$-algebra of observables, such that suitable 
homotopy coherent analogs of the AQFT axioms hold true.
These statements can be made precise by using techniques from operad theory \cite{AQFToperad,involution,hAQFT}.
Such higher observable algebras should be understood as quantizations of function
algebras on higher categorical spaces called  {\em (derived) stacks}, which are crucial for the description
of field and solution spaces in a gauge theory, see e.g.\ \cite{Schreiber} and \cite{BSreview}
for an introduction and also \cite{BeniniSchenkelSchreiber} for a concrete description of the Yang-Mills stack.
In the context of linear and perturbative quantum gauge theory, the higher field and solution spaces
may be described by {\em chain complexes of vector spaces} and the higher
quantum observable algebras by {\em differential graded $\ast$-algebras}.
A more physical approach to such higher categorical structures is given by the BRST/BV formalism,
which has already found many interesting applications in perturbative AQFT, see e.g.\
\cite{HollandsYM,FredenhagenRejzner,FredenhagenRejzner2,TehraniZahn}.
\sk

One of the most pressing current issues of the homotopical AQFT program is that
there is up to now no fully worked out physical example of a quantum gauge theory 
in this framework. (Various oversimplified toy-models appeared 
previously in e.g.\ \cite{fibcat,hAQFT,BruinsmaSchenkel}.) It is the aim of the present paper
to address this issue by constructing a first proper example of a homotopy AQFT, 
namely {\em linear quantum Yang-Mills theory} with structure group $\bbR$ on 
globally hyperbolic Lorentzian manifolds. Let us emphasize that,
even though linear Yang-Mills theory is clearly one of the simplest examples 
of a gauge theory, its construction as a homotopy AQFT is far from trivial
because one has to work consistently within a higher categorical context.
\sk

A central role in our construction is played by (a linear analog of) the 
{\em derived critical locus} of the linear Yang-Mills action functional,
which yields a chain complex that encodes very refined information 
about the solutions to the linear Yang-Mills equation. By general results
of derived algebraic geometry \cite{DAG,DAG2,Pridham}, this chain complex
carries a canonical {\em shifted Poisson structure}, which is the crucial
ingredient in the factorization algebra approach to quantum field theory
by Costello and Gwilliam \cite{CostelloGwilliam}. One of our main 
observations in this paper is that this shifted Poisson structure is trivial 
in homology due to the geometry of globally hyperbolic Lorentzian manifolds 
and that it can be trivialized by two different kinds of homotopies 
that play a similar role as retarded/advanced Green's operators in ordinary field theory.
Taking the difference between a retarded and an advanced trivialization
allows us to define an unshifted Poisson structure and hence to 
study the canonical quantization of linear Yang-Mills theory. One of the 
technical challenges that we address in this paper is a 
homotopical analysis of the construction sketched above, which is
required to ensure that it is meaningful within our higher categorical context,
i.e.\ compatible with quasi-isomorphisms of chain complexes
and chain homotopies between Poisson structures. For this we shall use 
techniques from both model category theory 
\cite{Hovey,Hirschhorn} and homotopical category theory \cite{DHKS,Riehl}.
In order to make the bulk of this paper accessible to a 
broader audience, we limit our use of such homotopical 
techniques to the bare minimum that is required to ensure
consistency of our results.
\sk

Let us now explain in more detail our constructions and results by outlining 
the content of the present paper: In Section \ref{sec:prelim} we recall some
preliminary results concerning retarded/advanced Green's operators
for Green hyperbolic operators on globally hyperbolic
Lorentzian manifolds and concerning chain complexes of vector spaces.
These techniques will be frequently used throughout the whole paper.
In Section \ref{sec:Sol} we introduce a flexible concept of
field complexes for linear gauge theories and compute the solution
complexes corresponding to a quadratic action functional 
via a linear analog of the derived critical locus construction.
We apply these techniques to two explicit examples, given
by Klein-Gordon and linear Yang-Mills theory 
on globally hyperbolic Lorentzian manifolds,
and explain how they relate to the BRST/BV formalism from physics.
In particular, the derived critical locus construction produces
the field content of the BRST/BV formalism, i.e.\ fields,
ghosts and antifields, together with the relevant differentials.
\sk

In Section \ref{sec:Poisson} we describe and analyze the
shifted Poisson structure on the solution complex that 
exists canonically due to its construction as a derived critical
locus. (In the terminology of the BRST/BV formalism, 
this is called the antibracket.) Our main novel observation
is that, for Klein-Gordon and linear Yang-Mills theory 
on globally hyperbolic Lorentzian manifolds,
this shifted Poisson structure is trivial in homology
and that it can be trivialized by two distinct types of homotopies
that play a similar role to retarded/advanced Green's operators
in ordinary field theory. 
We formalize this insight by introducing
an abstract concept of {\em retarded/advanced trivializations} 
(Definition \ref{def:retadvtrivialization}). 
We prove that these trivializations exist for our two running examples
and that they are unique in an appropriate sense: 
For Klein-Gordon theory they are unique, 
while for linear Yang-Mills theory they are
not unique in a strict sense but rather {\em unique up to chain homotopies},
which is an appropriate and expected relaxation within our higher categorical context
of the uniqueness result for retarded/advanced Green's operators in ordinary field theory.
Taking the difference between (a compatible pair of) a retarded and an advanced trivialization
allows us to define an unshifted Poisson structure (Definition \ref{def:unshiftedPoisson}),
which is again unique up to chain homotopies.
\sk

In Section \ref{sec:Quantization} we study in detail homotopical properties of
the canonical commutation relations (CCR) quantization of
unshifted Poisson complexes into differential graded $\ast$-algebras.
Our main result in this section is Proposition \ref{prop:CCRhomotopical},
which proves that CCR quantization is compatible with weak equivalences
of Poisson complexes and also with homotopies of unshifted Poisson structures. 
This proof requires a rather technical result that is 
proven in Appendix \ref{app:technical}. Hence, the examples
obtained by our construction in Section \ref{sec:Poisson} can be quantized
consistently. We spell out the quantization of Klein-Gordon and
linear Yang-Mills theory in this approach explicitly in Examples \ref{ex:quantumKG} and
\ref{ex:quantumYM}.
\sk

In Section \ref{sec:AQFT} we investigate functoriality 
of our constructions and answer affirmatively 
our initial question whether they define examples of 
homotopy AQFTs (Definition \ref{def:hAQFT}).
A key ingredient for these studies is an appropriate concept
of natural retarded/advanced trivializations 
(Definition \ref{def:naturaltrivializations})
and of natural unshifted Poisson structures 
(Definition \ref{def:naturalunshiftedPoisson}).
Our construction in Section \ref{sec:Poisson} 
determines such natural structures for both Klein-Gordon 
and linear Yang-Mills theory, see Proposition \ref{prop:naturalKGandYM}.
The main result of this paper is Theorem \ref{thm:KGandYMhAQFT},
which proves that our construction yields a description of
Klein-Gordon and linear Yang-Mills theory as homotopy AQFTs.
Concerning uniqueness (up to natural weak equivalences) of our
construction via natural retarded/advanced trivializations, 
we observe that there are subtle differences between
Klein-Gordon and linear Yang-Mills theory. While our construction
determines Klein-Gordon theory uniquely (up to natural weak equivalences) 
on the category $\Loc$ of all globally hyperbolic Lorentzian manifolds,
we can currently only ensure uniqueness (up to natural weak equivalences) 
for linear Yang-Mills theory on each slice category
$\Loc/\overline{M}$, for $\overline{M}\in\Loc$. (See 
Theorem \ref{thm:KGandYMhAQFT} and Remarks \ref{rem:hAQFTKG} 
and \ref{rem:hAQFTYM} for the details.) In AQFT terminology,
this means that, even though we successfully provide a construction of
linear Yang-Mills theory as a homotopy AQFT in the locally covariant framework 
\cite{BFV}, we can currently only ensure that each of its restrictions
to a Haag-Kastler style homotopy AQFT on a fixed spacetime $\overline{M}\in\Loc$
is determined uniquely (up to natural weak equivalences) by our methods.
This potential non-uniqueness of linear quantum Yang-Mills theory in the 
locally covariant setting is linked to features of the category of globally hyperbolic spacetimes
$\Loc$ which, in contrast to the slice categories $\Loc/\overline{M}$, has no terminal object.

%%%%%%%%%%%%%%%%%%%%%%%%%%%%%%%%%%%%%%%%%%%%%%%%
%%%%%%%%%%%%%%%%%%%%%%%%%%%%%%%%%%%%%%%%%%%%%%%%

\section{\label{sec:prelim}Preliminaries}

\subsection{\label{subsec:Green}Green's operators}
We briefly review those aspects of the theory
of Green hyperbolic operators on globally hyperbolic Lorentzian manifolds
that are required for this work. The reader is referred to \cite{BGP,Bar} for the details.
\sk

Let $M$ be an oriented and time-oriented globally hyperbolic 
Lorentzian manifold of dimension $m\geq 2$. Let $F\to M$
be a finite-rank real vector bundle and denote its vector 
space of sections by $\FFF(M) = \Gamma^\infty(M,F)$.
A linear differential operator $P : \FFF(M)\to \FFF(M)$ is called Green hyperbolic
if it admits retarded and advanced Green's operators
$G^\pm : \FFF_\cc(M)\to \FFF(M)$, where the subscript $\cc$ denotes
compactly supported sections. Recall that a retarded/advanced Green's 
operator is by definition a linear map $G^\pm : \FFF_\cc(M)\to \FFF(M)$
which satisfies the following properties:
\begin{itemize}
\item[(i)] $G^\pm P \varphi = \varphi$, for all $\varphi\in\FFF_\cc(M)$;
\item[(ii)] $P G^\pm \varphi = \varphi$, for all $\varphi\in\FFF_\cc(M)$;
\item[(iii)] $\supp(G^\pm\varphi) \subseteq J^\pm_M(\supp(\varphi))$, for all
$\varphi\in\FFF_\cc(M)$, where $J_M^\pm(S)\subseteq M$ denotes the causal 
future/past of a subset $S\subseteq M$.
\end{itemize}
It was proven in \cite{Bar} that retarded/advanced Green's operators are necessarily unique and
that they admit unique extensions
\begin{flalign}
G^\pm\,:\,  \FFF_{\pc/\fc}(M)~\longrightarrow ~\FFF_{\pc/\fc}(M)
\end{flalign}
to sections with past/future compact support, such that the three properties above 
hold true for all $\varphi\in \FFF_{\pc/\fc}(M)$.
(Recall that $s\in\FFF(M)$ has past/future compact support if there exists
a Cauchy surface $\Sigma\subset M$ such that $\supp(s) \subseteq J^\pm_M(\Sigma)$.)
The difference $G := G^+ - G^- : \FFF_\cc(M)\to \FFF(M)$ of the retarded and 
advanced Green's operator (on compactly supported sections) is often called
the causal propagator. From the properties of $G^\pm$ it follows that
\begin{flalign}
\xymatrix{
0 \ar[r] & \FFF_\cc(M) \ar[r]^-{P} & \FFF_\cc(M) \ar[r]^-{G} & \FFF_\sc(M) \ar[r]^-{P} & \FFF_\sc(M) \ar[r] & 0
}
\end{flalign}
is an exact sequence, where the subscript $\sc$ denotes sections
of spacelike compact support. (Recall that $s\in\FFF(M)$ has spacelike compact support if there exists
a compact subset $K\subseteq M$ such that $\supp(s) \subseteq J^+_M(K)\cup J^-_M(K)$.)
In particular, this implies that $P G = 0 = GP$.
\sk

For every vector bundle $F\to M$ that is endowed with a fiber metric $h$
one can define the integration pairing
\begin{flalign}\label{eqn:fiberpairingintegration}
\langle s,s^\prime\rangle \,:=\, \int_M h(s,s^\prime)\,\vol_M\quad, 
\end{flalign}
for all $s,s^\prime\in \FFF(M)$ with compactly overlapping support. 
Let us consider two such vector bundles $F_1\to M$ and $F_2\to M$
with fiber metrics and a linear differential operator $Q : \FFF_1(M)\to \FFF_2(M)$.
There exists a formal adjoint differential operator
$Q^\ast : \FFF_2(M)\to \FFF_1(M)$ defined by 
\begin{flalign}
\langle s_2 , Qs_1\rangle_2^{} \,=\, \langle Q^\ast s_2,s_1\rangle_{1}^{}\quad,  
\end{flalign}
for all $s_1\in \FFF_1(M)$ and $s_2 \in \FFF_2(M)$ with compactly overlapping support.
A linear differential operator $P : \FFF(M)\to \FFF(M)$ with source and target
determined by the same vector bundle $F\to M$ with fiber metric $h$ 
is called formally self-adjoint if $P^\ast = P$. If $P : \FFF(M)\to \FFF(M)$ 
is a formally self-adjoint Green hyperbolic operator, then its Green's operators
satisfy
\begin{flalign}\label{eqn:Gpmadjoint}
\langle \varphi, G^+ \varphi^\prime \rangle \,=\, \langle G^- \varphi,\varphi^\prime\rangle\quad, 
\end{flalign}
for all $\varphi,\varphi^\prime \in \FFF_\cc(M)$. This implies that the 
causal propagator $G = G^+ - G^-$ is formally skew-adjoint, i.e.\
\begin{flalign}\label{eqn:Gskewadjoint}
\langle \varphi, G \varphi^\prime \rangle \,=\, - \langle G \varphi,\varphi^\prime\rangle\quad, 
\end{flalign}
for all $\varphi,\varphi^\prime \in \FFF_\cc(M)$.
\begin{ex}\label{ex:forms}
The following class of examples is most relevant for this work.
Consider the $p$-th exterior power $F=\bigwedge^p T^\ast M \to M$ of the cotangent bundle.
Then $\FFF(M) = \Omega^p(M)$ is the vector space of $p$-forms. The orientation and Lorentzian
metric on $M$ define a Hodge operator $\ast : \Omega^p(M)\to \Omega^{m-p}(M)$
and thereby a fiber metric, whose integration pairing reads as
\begin{flalign}
\langle \omega,\zeta\rangle = \int_M \omega\wedge\ast\zeta\quad,
\end{flalign}
for all $\omega,\zeta\in \Omega^p(M)$ with compactly overlapping support.
The de Rham differential $\dd : \Omega^p(M)\to \Omega^{p+1}(M)$ is a linear differential
operator and its formal adjoint is the codifferential 
$\delta:= \dd^\ast : \Omega^{p+1}(M)\to \Omega^{p}(M)$.
The d'Alembert operator on $p$-forms is defined by 
\begin{flalign}\label{eqn:dAlembert}
\square :=\delta\dd + \dd\delta \,:\, \Omega^p(M)\longrightarrow \Omega^p(M)
\end{flalign}
and it is formally self-adjoint. Because of $\dd^2 =0$ 
and $\delta^2 =0$, the d'Alembert operators in different degrees are related by
\begin{flalign}\label{eqn:squareddelta}
\dd\,\square = \square\, \dd \quad,\qquad \delta\,\square = \square\,\delta\quad.
\end{flalign}
The d'Alembert operators are Green hyperbolic and because of 
\eqref{eqn:squareddelta} the retarded/advanced Green's operators 
in different degrees are related by
\begin{flalign}
\dd\, G^\pm = G^\pm \,\dd \quad,\qquad\delta\,G^\pm = G^\pm\,\delta\quad.
\end{flalign}
Finally, we note that the Klein-Gordon-type operators $ \square -m^2 : \Omega^p(M)\to \Omega^p(M)$,
where $m \in \bbR_{\geq0}$ is a mass term, are formally self-adjoint Green hyperbolic operators too.
\end{ex}

\subsection{\label{subsec:Ch}Chain complexes}
Chain complexes play a crucial role in formulating and 
proving our results in this paper. The present subsection
contains a brief review of basic aspects of the theory 
of chain complexes that are necessary for this work. This will in 
particular allow us to fix the notations and conventions that we employ
in the main part of this paper. For more details on chain complexes
we refer to \cite{Weibel} and also to \cite{Hovey}.
\sk

Let us fix a field $\bbK$ of characteristic zero and consider $\bbK$-vector spaces. 
In the main sections $\bbK$ will be either the real numbers $\bbR$ or the complex numbers $\bbC$.
A chain complex is a family of vector spaces 
$\{V_n\}_{n\in\bbZ}^{}$ together with a differential, i.e.\ a family of linear maps 
$\{\dd_n : V_n\to V_{n-1}\}_{n\in\bbZ}^{}$ such that $\dd_{n-1}\, \dd_n =0$ for all $n\in\bbZ$. 
To simplify notations, we often denote this data collectively by $V$ and write $\dd : V_n\to V_{n-1}$
for every component of the differential. A chain map $f : V\to W$ is a family of linear maps 
$\{f_n : V_n\to W_n \}_{n\in\bbZ}^{}$ that is compatible with the differentials, i.e.\ $\dd\,f_n = f_{n-1}\,\dd$ 
for all $n\in\bbZ$. We denote by $\Ch_\bbK$ the category of chain complexes of 
$\bbK$-vector spaces with chain maps as morphisms.
\sk

The tensor product $V\otimes W\in\Ch_\bbK$ of two chain complexes 
$V,W\in\Ch_\bbK$ is defined by
\begin{flalign}\label{eqn:tensorproduct}
(V\otimes W)_n\,:=\,\bigoplus_{m\in\bbZ} V_m\otimes W_{n-m}\quad,
\end{flalign}
for all $n\in\bbZ$, together with the differential obtained by the graded Leibniz rule 
$\dd(v\otimes w) := \dd v \otimes w + (-1)^{m}\,  v\otimes \dd w$, for all $v\in V_m$ and $w\in W_{n-m}$.
Note that the $\otimes$ on the right-hand side of \eqref{eqn:tensorproduct} is the 
tensor product of vector spaces. The unit for this tensor product
is given by $\bbK\in\Ch_\bbK$, which we regard as a chain complex concentrated in degree $0$
with trivial differential. The tensor product of chain complexes is symmetric via the chain
isomorphisms $\gamma : V\otimes W\to W\otimes V$ defined by the usual sign-rule
$\gamma(v\otimes w) := (-1)^{m\,k} \,w\otimes v$, for all $v\in V_m$ and $w\in W_k$.
Finally, the mapping complex $\hom(V,W)\in\Ch_\bbK$ between two chain complexes
$V,W\in\Ch_\bbK$ is defined by
\begin{subequations}\label{eqn:internalhom}
\begin{flalign}
\hom(V,W)_n \,:=\, \prod_{m\in\bbZ} \mathrm{Lin}(V_m, W_{n+m})\quad,
\end{flalign}
for all $n\in\bbZ$, where $ \mathrm{Lin}$ denotes the vector space of linear maps
between vector spaces, together with the ``adjoint'' differential 
$\partial : \hom(V,W)_n\to \hom(V,W)_{n-1}$  defined by
\begin{flalign}
\partial L \,:=\, \big\{\dd\,L_m - (-1)^{n}\, L_{m-1}\, \dd: V_m\to W_{n-1+m}\big\}_{m\in\bbZ}^{} \in \hom(V,W)_{n-1}\quad,
\end{flalign}
\end{subequations}
for all $L = \{L_m : V_m\to W_{n+m} \}_{m\in\bbZ}^{} \in \hom(V,W)_n$.
In summary, this endows $\Ch_\bbK$ with the structure of a closed symmetric monoidal category.
\sk

To every chain complex $V\in\Ch_\bbK$ one can assign its homology
$H_\bullet(V) = \{H_n(V)\}_{n\in\bbZ}^{}$, which is the graded vector space
defined by $H_n(V) := \Ker(\dd : V_n\to V_{n-1})/\Imm(\dd : V_{n+1}\to V_n)$, for all $n\in\bbZ$.
A chain map $f:V\to W$ is called a quasi-isomorphism if it induces an isomorphism
$H_\bullet(f) : H_\bullet(V)\to H_\bullet(W)$ in homology. Quasi-isomorphic
chain complexes should be regarded as ``being the same'', which can be made precise
by using techniques from model category theory \cite{Hovey} or 
$\infty$-category theory \cite{HTT,HA}. It is proven in \cite{Hovey} that
$\Ch_\bbK$ carries the structure of a symmetric monoidal model category, 
whose weak equivalences are the quasi-isomorphisms and fibrations are the degree-wise surjective maps.
Every object in the model category  $\Ch_\bbK$ is both fibrant and cofibrant. 
Readers who are not familiar with model categories should 
read the previous statements informally as that ``there exists technology to perform
a variety of constructions with chain complexes that are compatible with quasi-isomorphisms''.
In this paper we try to keep the model categorical technicalities to a bare minimum. We refer
to \cite{BSreview} for a detailed explanation why such 
techniques are conceptually crucial for formalizing (quantum) gauge theories.
\sk

Let us also briefly recall the concept of chain homotopies. 
A chain homotopy between two chain maps $f,g : V\to W$
is a family of linear maps $\lambda = \{\lambda_n : V_n \to W_{n+1}\}_{n\in\bbZ}^{}$
such that $f_n - g_n = \dd\,\lambda_n + \lambda_{n-1}\,\dd$, for all $n\in\bbZ$.
This definition can be rephrased very conveniently by using the mapping complexes
from \eqref{eqn:internalhom}. Note that a chain map $f:V\to W$ is precisely
a $0$-cycle in $\hom(V,W)\in\Ch_\bbK$, i.e.\ an element $f\in\hom(V,W)_0$
of degree $0$ satisfying $\partial f =0$. A chain homotopy 
between two chain maps $f,g : V\to W$ is precisely a $1$-chain in $\hom(V,W)\in\Ch_\bbK$,
i.e.\ an element $\lambda\in \hom(V,W)_1$ of degree $1$, such that $\partial\lambda = f-g$.
Observe that such chain homotopies exist if and only if the homology class
$[f-g]\in H_0(\hom(V,W))$ vanishes. This picture immediately generalizes to higher homotopies:
Given two chain homotopies $\lambda,\lambda^\prime \in \hom(V,W)_1$ between
$f,g:V\to W$, then $\lambda-\lambda^\prime$ is a $1$-cycle in $\hom(V,W)\in\Ch_\bbK$, i.e.\
$\partial(\lambda-\lambda^\prime) =0$. A (higher) chain homotopy between 
$\lambda$ and $\lambda^\prime$ is a $2$-chain $\chi\in \hom(V,W)_2$
such that $\partial\chi = \lambda-\lambda^\prime$. Observe that such (higher) chain homotopies
exist if and only if the homology class $[\lambda-\lambda^\prime]\in H_1(\hom(V,W))$ vanishes. 
The pattern for even higher chain homotopies is now evident.
\sk

We conclude this subsection by fixing our conventions for shiftings (also called suspensions)
of chain complexes. Given any $V\in \Ch_\bbK$ and $p\in\bbZ$, we define $V[p]\in \Ch_\bbK$
by $V[p]_n := V_{n-p}$, for all $n\in\bbZ$, together with the differential 
$\dd_{n}^{V[p]} := (-1)^p\,\dd_{n-p}^V$,
where we temporarily used a superscript on $\dd$ in order to indicate the relevant chain complex.
Note that $V[p][q] = V[p+q]$, for all $p,q\in\bbZ$, and that $V[0] = V$. From the definition of
the tensor product \eqref{eqn:tensorproduct}, one finds that $V[p] \cong \bbK[p]\otimes V$.
For every $V,W\in\Ch_\bbK$ and $p\in\bbZ$, there exists a chain isomorphism 
$\hom(V,W[p]) \cong \hom(V,W)[p]$ determined by the components
\begin{flalign}
\nn \hom(V,W[p])_n ~&\longrightarrow~ \hom(V,W)[p]_n\quad,\\
\{L_m : V_m\to W[p]_{n+m}\}_{m\in\bbZ}^{~}~&\longmapsto~\{L_m: V_m\to W_{n-p+m}\}_{m\in\bbZ}^{~}\quad. \label{eqn:shiftmappingcomplex}
\end{flalign}

%%%%%%%%%%%%%%%%%%%%%%%%%%%%%%%%%%%%%%%%%%%%%%%%
%%%%%%%%%%%%%%%%%%%%%%%%%%%%%%%%%%%%%%%%%%%%%%%%

\section{\label{sec:Sol}Field and solution complexes}
Let $M$ be an oriented and time-oriented globally hyperbolic
Lorentzian manifold of dimension $m\geq 2$. In this section
all chain complexes will be over $\bbR$, i.e.\  the relevant category is $\Ch_\bbR$.
Our aim is to investigate the solution chain complexes 
for a class of linear gauge field theories on $M$, which we will obtain
from a derived critical locus construction. The following definition
will be self-explanatory after Examples \ref{ex:fieldKG} and \ref{ex:fieldYM}.
\begin{defi}\label{def:fieldcomplex}
A {\em field complex} on $M$ is a chain complex
\begin{flalign}\label{eqn:fieldcomplex}
\FFF(M)\,:=\, \Big(
\xymatrix@C=2em{
\FFF_{0}(M) & \ar[l]_-{Q} \FFF_1(M)
}
\Big)
\end{flalign}
concentrated in homological degrees $0$ and $1$, where 
\begin{itemize}
\item[(i)] $\FFF_n(M) = \Gamma^\infty(M,F_n)$ is the vector space of sections of
a finite-rank real vector bundle $F_n\to M$ with fiber metric $h_n$, for $n=0,1$, and

\item[(ii)] $Q : \FFF_1(M)\to \FFF_0(M)$ is a linear differential operator.
\end{itemize}
\end{defi}

\begin{ex}\label{ex:fieldKG}
Scalar fields on $M$ are described by the field complex
\begin{flalign}
\Big(
\xymatrix@C=2em{
\Omega^0(M) & \ar[l]_-{0} 0
}
\Big)
\end{flalign}
concentrated in homological degree $0$. The fiber metrics are the ones obtained 
from the Hodge operator, see Example \ref{ex:forms}. 
The elements in degree $0$ are interpreted as scalar fields $\Phi\in\Omega^0(M)$
and triviality of the complex in degree $1$ means that there are no gauge transformations,
as it should be in a scalar field theory.
\end{ex}

\begin{ex}\label{ex:fieldYM}
Gauge fields with structure group $G=\bbR$ on $M$ are described by the field complex
\begin{flalign}
\Big(
\xymatrix@C=2em{
\Omega^1(M) & \ar[l]_-{\dd} \Omega^0(M)
}
\Big)
\end{flalign}
where $\dd$ is the de Rham differential. The fiber metrics are the ones obtained from the Hodge operator, see 
Example \ref{ex:forms}.  The elements in degree $0$ are interpreted as
gauge fields $A\in \Omega^1(M)$ and the elements in degree $1$ as 
gauge transformations $\epsilon\in\Omega^0(M)$. The differential $\dd$
encodes how gauge transformations act on gauge fields, i.e.\ $A\to A +\dd\epsilon$.
\end{ex}

\begin{rem}
We would like to mention very briefly that Definition \ref{def:fieldcomplex}
admits an obvious generalization to longer complexes
\begin{flalign}
\FFF(M)\,=\, \Big(
\xymatrix@C=2em{
\FFF_{0}(M) & \ar[l]_-{Q_1} \FFF_1(M)& \ar[l]_-{Q_2} \FFF_2(M) & \ar[l]_-{Q_3} \cdots
}
\Big)\quad,
\end{flalign}
where each $\FFF_n(M) = \Gamma^\infty(M,F_n)$ is the vector space of sections of
a finite-rank real vector bundle $F_n\to M$ with fiber metric $h_n$ and each
$Q_n : \FFF_n(M)\to \FFF_{n-1}(M)$ is a linear differential operator. Such generalization
is relevant for the description of {\em higher} gauge theories, which include 
gauge transformations between gauge transformations. For example, the complex
\begin{flalign}
\Big(
\xymatrix@C=2em{
\Omega^p(M) & \ar[l]_-{\dd} \Omega^{p-1}(M)& \ar[l]_-{\dd} \cdots & \ar[l]_-{\dd} \Omega^0(M)
}
\Big)
\end{flalign}
describes $p$-form gauge fields $A\in\Omega^p(M)$ with gauge transformations
$A\to A+\dd\Lambda$, for $\Lambda\in\Omega^{p-1}(M)$,
$2$-gauge transformations $\Lambda \to \Lambda + \dd\lambda$, for $\lambda\in\Omega^{p-2}(M)$,
and so on. Our results and constructions in this paper apply to this more general case as well, however
we decided to focus on $1$-gauge theories as in Definition \ref{def:fieldcomplex} in order to
improve readability. In particular, our main examples of interest
are described by $2$-term field complexes, see Examples \ref{ex:fieldKG} and \ref{ex:fieldYM}.
\end{rem}

In order to encode the dynamics, we consider a formally self-adjoint 
linear differential operator
\begin{flalign}
P \,:\, \FFF_0(M)~\longrightarrow~ \FFF_0(M)\quad,
\end{flalign}
which we interpret as the equation of motion operator for the fields of the theory.
The corresponding quadratic action functional
\begin{flalign}\label{eqn:action}
S(s_0) \,:=\,\frac{1}{2} \,\langle s_0 , P s_0\rangle  = \frac{1}{2}\, \int_M h_0(s_0, P s_0)\,\vol_M
\end{flalign}
is given by the integration pairing \eqref{eqn:fiberpairingintegration}.
This action is gauge-invariant if and only if $P$ satisfies
\begin{subequations}\label{eqn:PQzero}
\begin{flalign}
P\,Q \,=\, 0\quad,
\end{flalign}
which from now on is always assumed.
Because $P$ is formally self-adjoint, it follows that
\begin{flalign}
0 = (P\,Q)^\ast = Q^\ast\,P^\ast = Q^\ast\,P\quad.
\end{flalign}
\end{subequations}
The variation of the action defines a section $\delta^{\mathrm{v}}S : \FFF(M)\to T^\ast\FFF(M)$
of the cotangent  bundle over $\FFF(M)$. As in \cite[Section 3.4]{BSreview}, we 
define the latter as the product complex
\begin{subequations}
\begin{flalign}
T^\ast \FFF(M) \,:=\, \FFF(M)\times \FFF_\cc(M)^\ast
\end{flalign}
with
\begin{flalign}
\FFF_\cc(M)^\ast \,:=\, \Big(
\xymatrix@C=2em{
\stackrel{(-1)}{\FFF_{1}(M)} & \ar[l]_-{-Q^\ast} \stackrel{(0)}{\FFF_0(M)}
}
\Big)
\end{flalign}
the smooth dual of the compactly supported
field complex $\FFF_\cc(M)$. Here and in the following we use round brackets to
indicate homological degrees. Explicitly, we obtain
\begin{flalign}
T^\ast\FFF(M) \,=\, \Big(
\xymatrix@C=2.5em{
\stackrel{(-1)}{\FFF_1(M)} &\ar[l]_-{-Q^\ast \pi_2}\stackrel{(0)}{\FFF_0(M)\times \FFF_0(M)} & \ar[l]_-{\iota_1 Q} \stackrel{(1)}{\FFF_1(M)}
}
\Big)\quad,
\end{flalign}
\end{subequations}
where $\iota_1 : \FFF_0(M)\to \FFF_0(M)\oplus \FFF_0(M)= 
\FFF_0(M)\times \FFF_0(M)$  denotes the inclusion into the first factor and 
$\pi_2 :  \FFF_0(M)\times \FFF_0(M) \to  \FFF_0(M)$ the projection onto the second factor.
The chain map $\delta^\mathrm{v}S: \FFF(M)\to T^\ast \FFF(M)$ obtained by varying the action
then reads explicitly as
\begin{flalign}
\parbox{0.5cm}{\xymatrix{
\FFF(M)\ar[d]_-{\delta^\mathrm{v}S}\\
T^\ast\FFF(M)
}
}~~=~~~~ \left(\parbox{2cm}{\xymatrix@C=2em{
\ar[d]_-{0}0  &\ar[l]_-{0} \ar[d]_-{(\id, P )}\FFF_0(M) & \ar[l]_-{Q} \FFF_1(M)\ar[d]_-{\id} \\
\FFF_1(M) &\ar[l]^-{-Q^\ast \pi_2} \FFF_0(M)\times\FFF_0(M) & \ar[l]^-{\iota_1 Q} \FFF_1(M)
}}\right)\quad.
\end{flalign}
Note the appearance of the equation of motion operator $P : \FFF_0(M)\to \FFF_0(M)$
in the middle vertical arrow. Hence, in order to enforce the equation of motion, 
we have to intersect $\delta^\mathrm{v}S$ with the zero-section 
\begin{flalign}
\parbox{0.5cm}{\xymatrix{
\FFF(M)\ar[d]_-{0}\\
T^\ast\FFF(M)
}
}~~=~~~~~ \left(\parbox{2cm}{\xymatrix@C=2em{
\ar[d]_-{0}0  &\ar[l]_-{0} \ar[d]_-{(\id, 0 )}\FFF_0(M) & \ar[l]_-{Q} \FFF_1(M)\ar[d]_-{\id} \\
\FFF_1(M) &\ar[l]^-{-Q^\ast \pi_2} \FFF_0(M)\times\FFF_0(M) & \ar[l]^-{\iota_1 Q} \FFF_1(M)
}}\right)\quad.
\end{flalign}
This is the content of the following
\begin{defi}\label{def:solutioncomplex}
Let $\FFF(M)$ be a field complex on $M$ 
and $P : \FFF_0(M)\to\FFF_0(M)$ a formally self-adjoint linear differential 
operator satisfying \eqref{eqn:PQzero}.
The corresponding {\em solution complex} on $M$ is defined as the 
derived critical locus of the action functional $S$ in \eqref{eqn:action}.   
Concretely, it is given by the homotopy pullback
\begin{flalign}\label{eqn:derivedlocus}
\xymatrix{
\ar@{-->}[d]\Sol(M)\ar@{-->}[r] & \ar@{}[dl]_-{h~~~~~} \FFF(M)\ar[d]^-{\delta^{\mathrm{v}}S}\\
\FFF(M)\ar[r]_-{0} & T^\ast \FFF(M)
}
\end{flalign}
in the model category $\Ch_\bbR$.
\end{defi}
\begin{rem}
We would like to add an informal discussion of the important role of homotopy pullbacks 
(see e.g.\  \cite{Hovey,Hirschhorn}) for the benefit of those readers who are not 
familiar with model categories. First, let us note that if \eqref{eqn:derivedlocus} 
would be an ordinary categorical pullback, then it would enforce the equation
of motion in a strict fashion, i.e.\ $P s_0 = 0$. There are however
problems with this naive approach, because it is not guaranteed that replacing
$\FFF(M)$ by a quasi-isomorphic chain complex will yield quasi-isomorphic
solution complexes $\Sol(M)$. (Recall that quasi-isomorphic chain complexes
should be  regarded as ``being the same''.) A homotopy pullback is a suitable deformation
(called a derived functor) of the ordinary pullback that does not suffer from
this problem. Consequently, our chain complex $\Sol(M)$ from 
Definition \ref{def:solutioncomplex} is invariant (up to quasi-isomorphisms)
under changing $\FFF(M)$ by quasi-isomorphisms. One should think
of our solution complex $\Sol(M)$ as enforcing the equation of motion
$P s_0=0$ in only a weak sense, i.e.\ ``up to homotopy''.
\end{rem}

\begin{propo}\label{propo:solutioncomplex}
A model for the solution complex $\Sol(M)$ 
from Definition \ref{def:solutioncomplex} is given by
\begin{flalign}\label{eqn:solutioncomplex}
\Sol(M) \,=\, \Big(
\xymatrix@C=2em{
\stackrel{(-2)}{\FFF_1(M)} 
& \ar[l]_-{Q^\ast} \stackrel{(-1)}{\FFF_0(M)} 
& \ar[l]_-{P }\stackrel{(0)}{\FFF_0(M)} 
& \ar[l]_-{Q} \stackrel{(1)}{\FFF_1(M)}
}
\Big)\quad.
\end{flalign}
\end{propo}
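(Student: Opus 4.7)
The plan is to simplify the homotopy pullback \eqref{eqn:derivedlocus} by exploiting the product structure of $T^\ast\FFF(M)$, reducing it to a homotopy fiber of a single map, which then has a compact standard model.

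First I would observe that $T^\ast\FFF(M)=\FFF(M)\times\FFF_\cc(M)^\ast$ is a direct product in $\Ch_\bbR$ and that both chain maps $0, \delta^{\mathrm{v}}S:\FFF(M)\to T^\ast\FFF(M)$ in \eqref{eqn:derivedlocus} are sections of the projection $\pi_1:T^\ast\FFF(M)\to\FFF(M)$: their first components are both the identity, while their second components are the zero map and $P$ respectively, the latter viewed as a chain map $\FFF(M)\to\FFF_\cc(M)^\ast$ concentrated in homological degree $0$. Here I would verify that $P$ is indeed a chain map, which is precisely the statement \eqref{eqn:PQzero}: the condition $PQ=0$ handles compatibility from degree $1$ to degree $0$, and $Q^\ast P=(PQ)^\ast=0$ handles compatibility from degree $0$ to degree $-1$.

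Second, since both sections agree after composition with $\pi_1$, the homotopy pullback collapses onto the second factor: a standard argument (products commute with limits and $\pi_1$ is split) gives
\begin{flalign*}
\Sol(M)\,\simeq\,\FFF(M)\times_{\FFF_\cc(M)^\ast}^{h}\FFF(M)\,\simeq\,\mathrm{hofib}\bigl(P:\FFF(M)\to\FFF_\cc(M)^\ast\bigr)\quad,
\end{flalign*}
using that $\Ch_\bbR$ is additive, so the homotopy equalizer of two parallel maps is the homotopy fiber of their difference.

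Third, I would apply the standard mapping-cocone model for the homotopy fiber of a chain map $f:X\to Y$, namely $\mathrm{hofib}(f)_n=X_n\oplus Y_{n+1}$ with differential $\dd(x,y)=(\dd x,f(x)-\dd y)$. Plugging in $X=\FFF(M)$, $Y=\FFF_\cc(M)^\ast$ and reading off degrees $n=1,0,-1,-2$ gives a four-term complex whose components are $\FFF_1, \FFF_0, \FFF_0, \FFF_1$ exactly as in \eqref{eqn:solutioncomplex}. A direct check identifies the induced differentials with $Q$ in degree $1\to 0$, with $P$ in degree $0\to -1$ (coming from the chain-map component of $P$), and with $\pm Q^\ast$ in degree $-1\to -2$ (coming from the differential of $\FFF_\cc(M)^\ast$).

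The main obstacle is bookkeeping: confirming the reduction from the homotopy pullback over $T^\ast\FFF(M)$ to the homotopy fiber over $\FFF_\cc(M)^\ast$ is done cleanly in the stable/additive context of $\Ch_\bbR$, and then tracking sign conventions so that the canonical mapping-cocone differentials really match those displayed in \eqref{eqn:solutioncomplex} (any residual overall signs can be absorbed into an isomorphism of chain complexes and are immaterial for the quasi-isomorphism class, which is all the homotopy pullback determines).
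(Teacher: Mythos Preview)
Your argument is correct in outline and conclusion. Since the paper's own proof merely cites \cite[Proposition~3.21]{BSreview} without giving any detail, there is little to compare against; your route via the section structure of $0$ and $\delta^{\mathrm v}S$ and the mapping-cocone model for the homotopy fiber is a standard and clean way to carry out precisely this computation. With the convention $\dd(x,y)=(\dd x,\,f(x)-\dd y)$ and the differential $-Q^\ast$ on $\FFF_\cc(M)^\ast$, your $\pm Q^\ast$ in degree $-1\to-2$ in fact resolves to $+Q^\ast$, matching \eqref{eqn:solutioncomplex} on the nose rather than merely up to isomorphism.

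There is one notational slip worth correcting. The displayed intermediate object $\FFF(M)\times^h_{\FFF_\cc(M)^\ast}\FFF(M)$, read as the homotopy \emph{pullback} of $P$ against $0$, is not equivalent to $\mathrm{hofib}(P)$ but to $\mathrm{hofib}(P)\oplus\FFF(M)$, because the second leg $0:\FFF(M)\to\FFF_\cc(M)^\ast$ contributes a free copy of $\FFF(M)$. What your surrounding prose correctly argues is that the homotopy pullback over $T^\ast\FFF(M)=\FFF(M)\times\FFF_\cc(M)^\ast$ of two sections of $\pi_1$ is the homotopy \emph{equalizer} of their $\FFF_\cc(M)^\ast$-components, and in the additive category $\Ch_\bbR$ this homotopy equalizer of $P$ and $0$ is indeed $\mathrm{hofib}(P-0)=\mathrm{hofib}(P)$. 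So either replace the intermediate $\times^h$ by a homotopy equalizer, or pass directly from $\Sol(M)$ to $\mathrm{hofib}(P)$; the rest of the proof then goes through without change.
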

\begin{proof}
The homotopy pullback in \eqref{eqn:derivedlocus} can be computed by
using some basic model category technology, yielding the result in \eqref{eqn:solutioncomplex}. 
The proof for linear Yang-Mills theory in \cite[Proposition 3.21]{BSreview}
generalizes in a straightforward way to our present scenario and hence it will not be repeated.
\end{proof}

\begin{ex}\label{ex:solutionKG}
For the scalar field complex from Example \ref{ex:fieldKG}, 
we choose the massive Klein-Gordon operator $P= \square - m^2: \Omega^0(M)\to \Omega^0(M)$.
The action in \eqref{eqn:action} is then the usual Klein-Gordon action
\begin{flalign}
S(\Phi) = \frac{1}{2}\langle \Phi, \square\Phi - m^2\Phi \rangle=\frac{1}{2} \int_M
 \Big(\dd \Phi\wedge\ast \dd\Phi - m^2 \Phi^2\,\vol_M\Big)\quad.
\end{flalign}
The corresponding solution complex from Proposition \ref{propo:solutioncomplex}
explicitly reads as
\begin{flalign}
\Sol^{\KG}_{~}(M) \,=\, \Big(
\xymatrix@C=3em{
0
& \ar[l]_-{0} \stackrel{(-1)}{\Omega^0(M)}
& \ar[l]_-{\square-m^2 }\stackrel{(0)}{\Omega^0(M)}
& \ar[l]_-{0} 0
}
\Big)\quad.
\end{flalign}
The components of this complex admit a physical interpretation in terms of the
BRST/BV formalism:
\begin{itemize}
\item the fields in degree $0$ are the scalar fields $\Phi\in\Omega^0(M)$;
\item the fields in degree $-1$ are the antifields $\Phi^\ddagger \in \Omega^0(M)$.
\end{itemize}
Note that only the zeroth homology of $\Sol^{\KG}_{~}(M)$ is non-vanishing.
It is given by the ordinary solution space
$H_0(\Sol^{\KG}_{~}(M)) = \big\{\Phi \in \Omega^0(M)\,:\, \square\Phi - m^2\Phi =0\big\}$
of Klein-Gordon theory. It follows that $\Sol^{\KG}_{~}(M)$ is quasi-isomorphic
to its zeroth homology $H_0(\Sol^{\KG}_{~}(M)) $, regarded as a chain complex concentrated in degree $0$.
In other words, for Klein-Gordon theory on $M$ it does not make any difference if we work with the solution
complex $\Sol^{\KG}_{~}(M)$ or with the ordinary solution space $H_0(\Sol^{\KG}_{~}(M)) $.
\end{ex}

\begin{ex}\label{ex:solutionYM}
For the gauge field complex from Example \ref{ex:fieldYM}, 
we choose the linear Yang-Mills operator $P = \delta\dd: \Omega^1(M)\to \Omega^1(M)$.
The action in \eqref{eqn:action} is then the usual linear Yang-Mills action
\begin{flalign}
S(A) = \frac{1}{2}\langle A, \delta\dd A \rangle = \frac{1}{2}\langle \dd A, \dd A \rangle=
\frac{1}{2} \int_M F\wedge\ast F\quad,
\end{flalign}
with $F = \dd A\in\Omega^2(M)$ the field strength.
The corresponding solution complex from Proposition 
\ref{propo:solutioncomplex} explicitly reads as
\begin{flalign}
\Sol^{\YM}_{~}(M) \,=\, \Big(
\xymatrix@C=2em{
\stackrel{(-2)}{\Omega^0(M)} 
& \ar[l]_-{\delta} \stackrel{(-1)}{\Omega^1(M)} 
& \ar[l]_-{\delta\dd }\stackrel{(0)}{\Omega^1(M)} 
& \ar[l]_-{\dd} \stackrel{(1)}{\Omega^0(M)}
}
\Big)\quad.
\end{flalign}
The components of this complex admit a physical interpretation in terms of the
BRST/BV formalism:
\begin{itemize}
\item the fields in degree $0$ are the gauge fields $A\in\Omega^1(M)$;
\item the fields in degree $1$ are the ghost fields $c\in\Omega^0(M)$;
\item the fields in degrees $-1$ and $-2$ are the antifields
$A^\ddagger\in\Omega^1(M)$ and $c^\ddagger\in \Omega^0(M)$.
\end{itemize}
The homologies of $\Sol^{\YM}_{~}(M)$ can be computed explicitly and admit a physical interpretation.
\begin{itemize}
\item  $H_1(\Sol^{\YM}_{~}(M)) \cong H^0_\mathrm{dR}(M)$ is the zeroth de Rham cohomology 
of $M$. It describes those gauge transformations that act trivially on gauge fields, i.e.\ 
it encodes the extent to which the gauge group fails to act freely. This homology is never trivial, because 
the dimension of the vector space $H^0_\mathrm{dR}(M) \cong \bbR^{\pi_0(M)}$ is
given by the number of connected components of the manifold $M$.

\item $H_0(\Sol^{\YM}_{~}(M)) = \{A\in\Omega^1(M) : \delta\dd A =0\}\big/ \dd\Omega^0(M)$
is the usual vector space of gauge equivalence classes of linear Yang-Mills solutions.

\item $H_{-1}(\Sol^{\YM}_{~}(M)) \cong H^1_\delta(M)\cong H^{m-1}_{\mathrm{dR}}(M)$
is the first $\delta$-cohomology or equivalently the $m{-}1$-th de Rham cohomology of $M$.
It captures obstructions to solving the inhomogeneous linear Yang-Mills equation
$\delta\dd A = j$ with $j\in\Omega^1_\delta(M)$ a $\delta$-closed $1$-form, i.e.\ $\delta j=0$.
For the explicit computation of $H_{-1}(\Sol^{\YM}_{~}(M)) $
one uses standard techniques from the theory of normally hyperbolic operators \cite{BGP,Bar}
in order to prove that $\delta\dd A = j$ admits a solution $A$ if and only if $j=\delta \zeta$
is $\delta$-exact.

\item $H_{-2}(\Sol^{\YM}_{~}(M))\cong H^0_\delta(M) \cong H^m_{\mathrm{dR}}(M)\cong 0$
is the zeroth $\delta$-cohomology or equivalently the $m$-th de Rham cohomology of $M$.
This is trivial because every globally hyperbolic Lorentzian manifold is diffeomorphic
to a product manifold $M\cong \bbR\times \Sigma$.
\end{itemize}
We in particular observe that $\Sol^{\YM}_{~}(M)$ can not be quasi-isomorphic
to a chain complex concentrated in degree $0$, hence it contains more refined information than the
vector space of gauge equivalence classes of linear Yang-Mills solutions, i.e.\ the zeroth homology $H_0(\Sol^{\YM}_{~}(M))$.
It is the latter that is traditionally considered in the AQFT literature, see e.g.\ 
\cite{SDH,BDS14,BDHS14,FewsterLang,BeniniMaxwell,BSSdiffcoho}.
\end{ex}

%%%%%%%%%%%%%%%%%%%%%%%%%%%%%%%%%%%%%%%%%%%%%%%%
%%%%%%%%%%%%%%%%%%%%%%%%%%%%%%%%%%%%%%%%%%%%%%%%

\section{\label{sec:Poisson}Shifted and unshifted Poisson structures}
A general result of derived algebraic geometry \cite{DAG,DAG2,Pridham}
states that every derived critical locus comes endowed with a shifted symplectic 
structure and hence a shifted Poisson structure. Such shifted Poisson structures 
play a fundamental role in the factorization algebra approach to 
quantum field theory by Costello and Gwilliam \cite{CostelloGwilliam}. 
We explain below that the solution complex $\Sol(M)$ from Proposition \ref{propo:solutioncomplex} 
carries a natural shifted Poisson structure. For our two examples given by Klein-Gordon and 
linear Yang-Mills theory, we shall make the interesting observation that this shifted Poisson 
structure defines a trivial homology class, which crucially relies on our hypothesis that 
$M$ is a globally hyperbolic Lorentzian manifold. In these examples there exist two distinct 
types of chain homotopies (called retarded and advanced) that trivialize the shifted Poisson 
structure, which play an analogous role to the retarded and advanced Green's operators
in ordinary field theory, see e.g.\ \cite{BGP,BDH}.
Taking the difference between a compatible pair of retarded and advanced trivializations 
allows us to define an unshifted Poisson structure on $\Sol(M)$, which is the necessary 
ingredient for canonical commutation relations (CCR) quantization in Section \ref{sec:Quantization}.
\sk

Both the shifted and unshifted Poisson structures will be defined
on the smooth dual of the solution complex from Proposition 
\ref{propo:solutioncomplex}, which should be interpreted
as a chain complex of linear observables. 
\begin{defi}\label{def:linobscomplex}
The {\em complex of linear observables} for the solution complex $\Sol(M)$ 
from \eqref{eqn:solutioncomplex} is defined by
\begin{flalign}\label{eqn:LLLcomplex}
\LLL(M) \,:=\, \Big(
\xymatrix@C=2em{
\stackrel{(-1)}{\FFF_{1,\cc}(M)} 
& \ar[l]_-{-Q^\ast} \stackrel{(0)}{\FFF_{0,\cc}(M)} 
& \ar[l]_-{P }\stackrel{(1)}{\FFF_{0,\cc}(M)} 
& \ar[l]_-{-Q} \stackrel{(2)}{\FFF_{1,\cc}(M)}
}
\Big)\quad,
\end{flalign}
where the subscript $\cc$ denotes compactly supported sections.
The integration pairings \eqref{eqn:fiberpairingintegration} define
evaluation chain maps
\begin{subequations}\label{eqn:LLLcomplexevaluation}
\begin{flalign}
\langle \,\cdot\, , \,\cdot\, \rangle \,: \,\LLL(M)\otimes\Sol(M)~&\longrightarrow~\bbR \\
\langle \,\cdot\, , \,\cdot\, \rangle \,: \,\Sol(M)\otimes\LLL(M)~&\longrightarrow~\bbR
\end{flalign}
\end{subequations}
between linear observables and solutions.
\end{defi}

In order to define the shifted Poisson structure on $\Sol(M)$,
let us consider the $[1]$-shifting (see Section \ref{subsec:Ch}) 
of the solution complex $\Sol(M)$ in \eqref{eqn:solutioncomplex}, i.e.\
\begin{flalign}\label{eqn:Solshifted}
\Sol(M)[1]\,=\,\Big(
\xymatrix@C=2em{
\stackrel{(-1)}{\FFF_1(M)} 
& \ar[l]_-{-Q^\ast} \stackrel{(0)}{\FFF_0(M)} 
& \ar[l]_-{-P }\stackrel{(1)}{\FFF_0(M)} 
& \ar[l]_-{-Q} \stackrel{(2)}{\FFF_1(M)}
}
\Big)\quad,
\end{flalign}
and observe that the inclusion maps $\iota : \FFF_{n,\cc}(M) \to \FFF_n(M)$
of compactly supported sections define a chain map
\begin{flalign}\label{eqn:jmapping}
\parbox{0.5cm}{\xymatrix{
\LLL(M)\ar[d]_-{j}\\
\Sol(M)[1]
}
}~:=~~~~ \left(\parbox{2cm}{\xymatrix@C=2em{
\FFF_{1,\cc}(M) \ar[d]_-{\iota}
& \ar[l]_-{-Q^\ast} \FFF_{0,\cc}(M) \ar[d]_-{\iota}
& \ar[l]_-{P } \FFF_{0,\cc}(M)\ar[d]_-{-\iota}
& \ar[l]_-{-Q} \FFF_{1,\cc}(M) \ar[d]_-{-\iota}\\ 
\FFF_1(M)
& \ar[l]^-{-Q^\ast} \FFF_0(M)
& \ar[l]^-{-P } \FFF_0(M)
& \ar[l]^-{-Q} \FFF_1(M)
}
}\right)\quad.
\end{flalign}

\begin{defi}\label{def:shiftedPoisson}
The {\em shifted Poisson structure} on the solution complex
$\Sol(M)$ in \eqref{eqn:solutioncomplex} 
is the chain map $\Upsilon : \LLL(M)\otimes\LLL(M)\to \bbR[1]$ 
defined by the composition
\begin{flalign}\label{eqn:shiftedPoisson}
\xymatrix{
\ar[d]_-{\id\otimes j~~} \LLL(M)\otimes\LLL(M) \ar[rr]^-{\Upsilon} && \bbR[1]\\
\LLL(M)\otimes\bbR[1]\otimes \Sol(M) \ar[rr]_-{\gamma\otimes\id} && \bbR[1]\otimes
\LLL(M)\otimes \Sol(M) \ar[u]_-{\id\otimes \langle\,\cdot\,,\, \cdot\,\rangle }
}
\end{flalign}
where $\gamma$ is the symmetric braiding in $\Ch_\bbR$,
$\LLL(M)$ is the complex of linear observables \eqref{eqn:LLLcomplex} for $\Sol(M)$ 
and we implicitly used the isomorphism $\Sol(M)[1] \cong \bbR[1] \otimes \Sol(M)$, 
see Section \ref{subsec:Ch}. 
\end{defi}

\begin{rem}\label{rem:shiftingconventions}
In the terminology of the BRST/BV formalism, the shifted Poisson bracket is called the antibracket.
\end{rem}

As we explain in detail in the two subsections below, our examples given by Klein-Gordon 
and linear Yang-Mills theory on an oriented and time-oriented globally hyperbolic Lorentzian manifold $M$
have the interesting feature that the homology class $[j]=0\in H_0(\hom(\LLL(M),\Sol(M)[1]))$
of the chain map \eqref{eqn:jmapping} is trivial and as a consequence
the homology class $[\Upsilon] =0 \in H_0(\hom(\LLL(M)\otimes \LLL(M),\bbR[1]))$
of the shifted Poisson structure is trivial too. We shall obtain an interpretation of the 
trivializations of $j$ as analogs of the Green's operators in ordinary field theory. Before working
out the details for our two examples, we would like to introduce some general 
terminology and definitions that will be useful for this task. First, let us introduce
the past/future compact analog of the complex \eqref{eqn:LLLcomplex}, i.e.\
\begin{flalign}\label{eqn:LLLcomplexshiftpcfc}
\LLL_{\pc/ \fc}(M)\,:=\, 
\Big(
\xymatrix@C=2em{
\stackrel{(-1)}{\FFF_{1,\pc/\fc}(M)} 
& \ar[l]_-{-Q^\ast} \stackrel{(0)}{\FFF_{0,\pc/\fc}(M)} 
& \ar[l]_-{P }\stackrel{(1)}{\FFF_{0,\pc/\fc}(M)} 
& \ar[l]_-{-Q} \stackrel{(2)}{\FFF_{1,\pc/\fc}(M)}
}
\Big)\quad.
\end{flalign}
Observe that the chain map $j$ in \eqref{eqn:jmapping} factors 
through the canonical inclusions $\iota : \LLL(M)\to \LLL_{\pc/\fc}(M)$, i.e.\ 
we have a commutative triangle
\begin{flalign}\label{eqn:jfactorization}
\xymatrix{
\ar[rd]_-{\iota~~} \ar[rr]^-{j}\LLL(M) &  & \Sol(M)[1]\\
&\LLL_{\pc/\fc}(M) \ar[ru]_-{~~j_{\pc/\fc}^{}}&
}
\end{flalign}
where $j_{\pc/\fc}$ is the evident extension of the chain map \eqref{eqn:jmapping} 
to sections with past/future compact support.
\begin{defi}\label{def:retadvtrivialization}
A {\em retarded/advanced trivialization} is a contracting homotopy of
the chain complex $\LLL_{\pc/\fc}(M)$, i.e.\ a $1$-chain
$\Lambda^\pm \in \hom(\LLL_{\pc/\fc}(M),\LLL_{\pc/\fc}(M))_1$
such that $\id = \partial \Lambda^\pm$.
\end{defi}

The following are some simple properties of retarded/advanced trivializations.
\begin{lem}\label{lem:retadvtriv}
\begin{itemize}
\item[a)] If $\Lambda^\pm \in \hom(\LLL_{\pc/\fc}(M),\LLL_{\pc/\fc}(M))_1$ is a retarded/advanced trivialization,
then $j = \partial(j_{\pc/\fc}\,\Lambda^\pm\,\iota)$ and 
$\Upsilon = \partial\big((\id\otimes \langle\,\cdot\,,\,\cdot\,\rangle) \,(\gamma\otimes\id)\,
(\id\otimes (j_{\pc/\fc}\,\Lambda^{\pm}\,\iota))\big)$.
In particular, the homology classes $[j]=0$ and $[\Upsilon]=0$ are trivial.

\item[b)] If $\Lambda^\pm, \widetilde{\Lambda}^\pm \in \hom(\LLL_{\pc/\fc}(M),\LLL_{\pc/\fc}(M))_1$
are two retarded/advanced trivializations, then 
$\widetilde{\Lambda}^\pm - \Lambda^\pm = \partial \lambda^\pm$ 
for some $2$-chain $\lambda^\pm \in \hom(\LLL_{\pc/\fc}(M),\LLL_{\pc/\fc}(M))_2$.

\item[c)] If $\Lambda^\pm \in \hom(\LLL_{\pc/\fc}(M), \LLL_{\pc/\fc}(M))_1$ is a pair of retarded/advanced 
trivializations, then 
\begin{flalign}\label{eqn:causalpropagator}
\Lambda \,:=\, j_{\pc}\,\Lambda^{+}\,\iota - j_{\fc}\,\Lambda^{-} \,\iota 
\,\in\, \hom\big(\LLL(M),\Sol(M)[1]\big)_1 
\end{flalign}
is a $1$-cycle, i.e.\ $\partial \Lambda =0$. Via the chain isomorphism \eqref{eqn:shiftmappingcomplex},
this defines a chain map $\Lambda : \LLL(M)\to \Sol(M)$ to the unshifted solution complex.
\end{itemize}
\end{lem}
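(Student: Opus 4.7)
The plan is to treat the three parts in sequence, observing that (b) follows from the fact that $\Lambda^\pm$ itself furnishes a contracting homotopy on the endomorphism mapping complex, (a) is a direct computation with the graded Leibniz rule for $\partial$, and (c) is an immediate consequence of (a).

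For part (a), I would exploit the factorization $j = j_{\pc/\fc}\,\iota$ from \eqref{eqn:jfactorization}. Since $\iota$ and $j_{\pc/\fc}$ are chain maps, hence $0$-cycles in the relevant mapping complexes, the graded Leibniz rule for $\partial$ reduces $\partial(j_{\pc/\fc}\,\Lambda^\pm\,\iota)$ to $j_{\pc/\fc}\,(\partial\Lambda^\pm)\,\iota$, and the defining property $\partial\Lambda^\pm = \id$ collapses this to $j$. The analogous statement for $\Upsilon$ is then obtained by running the same argument through the composition \eqref{eqn:shiftedPoisson}: the evaluation pairing and the symmetric braiding $\gamma$ are chain maps, so the only factor in that composition which is not $\partial$-closed is the instance of $j$, and replacing it by its primitive produces a primitive for $\Upsilon$. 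Triviality of the homology classes $[j]$ and $[\Upsilon]$ is then immediate.

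For part (b), I first note that $\mu := \widetilde{\Lambda}^\pm - \Lambda^\pm$ is automatically a $1$-cycle in $\hom(\LLL_{\pc/\fc}(M),\LLL_{\pc/\fc}(M))$ because both trivializations satisfy $\partial(\cdot) = \id$. The plan is to produce an explicit primitive, taking $\lambda^\pm := \Lambda^\pm\,\mu$ (a $2$-chain), and to verify $\partial\lambda^\pm = \mu$ in one line via the graded Leibniz rule, using $\partial\Lambda^\pm = \id$, $\partial\mu = 0$, and the degree $1$ of $\Lambda^\pm$. This is the concrete shadow of the general fact that a contractible complex has contractible endomorphism complex, so all its positive-degree cycles are boundaries.

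Part (c) is then essentially tautological: applying $\partial$ to \eqref{eqn:causalpropagator} and invoking part (a) twice yields $\partial\Lambda = j - j = 0$, so $\Lambda$ is a $1$-cycle in $\hom(\LLL(M),\Sol(M)[1])$. Reinterpreting this as a chain map $\LLL(M)\to\Sol(M)$ uses the chain isomorphism $\hom(V,W[p]) \cong \hom(V,W)[p]$ from \eqref{eqn:shiftmappingcomplex} with $p=1$. No step presents a genuine obstacle; the only bookkeeping worth care is the signs in the graded Leibniz rule, which are dictated by the formula for $\partial$ in \eqref{eqn:internalhom}.
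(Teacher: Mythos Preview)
Your argument is correct in all three parts. Parts (a) and (c) are exactly the ``straightforward checks'' the paper alludes to, just made explicit.

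For part (b), however, you take a genuinely different route from the paper. The paper argues abstractly: since $\Lambda^\pm$ contracts $\LLL_{\pc/\fc}(M)$, that complex is acyclic; because every object in $\Ch_\bbR$ is fibrant and cofibrant, the mapping complex functor $\hom(-,-)$ preserves quasi-isomorphisms, so $\hom(\LLL_{\pc/\fc}(M),\LLL_{\pc/\fc}(M))$ is acyclic too, and hence the $1$-cycle $\widetilde{\Lambda}^\pm - \Lambda^\pm$ must be a boundary. Your approach is more elementary and more explicit: you write down the primitive $\lambda^\pm = \Lambda^\pm\,(\widetilde{\Lambda}^\pm - \Lambda^\pm)$ directly and verify $\partial\lambda^\pm = \mu$ by the graded Leibniz rule. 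This has the advantage of avoiding any appeal to model-categorical generalities and of producing a concrete formula; the paper's argument, on the other hand, immediately generalizes to show that \emph{all} higher cycles in the endomorphism complex are boundaries (as exploited in Remark~\ref{rem:contractiblechoices}), though of course your explicit primitive does that just as well by iteration.
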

\begin{proof}
Items~a) and c) are straightforward checks. For item~b) we note that 
the homology of $\LLL_{\pc/\fc}(M)$ is trivial because $\Lambda^\pm$
is by definition a contracting homotopy of $\LLL_{\pc/\fc}(M)$. Because
all objects in $\Ch_\bbR$ are fibrant and cofibrant, the mapping complex functor
$\hom$ preserves quasi-isomorphisms, hence the homology of $\hom(\LLL_{\pc/\fc}(M),\LLL_{\pc/\fc}(M))$
is trivial too. Since $\partial(\widetilde{\Lambda}^\pm - \Lambda^\pm)=\id-\id=0$, it 
then follows that there exists a $2$-chain $\lambda^\pm \in \hom(\LLL_{\pc/\fc}(M),\LLL_{\pc/\fc}(M))_2$
such that $\widetilde{\Lambda}^\pm - \Lambda^\pm=\partial\lambda^\pm$.
\end{proof}

\begin{rem}\label{rem:contractiblechoices}
Lemma \ref{lem:retadvtriv} b) states that retarded/advanced trivializations
are unique up to homotopy, provided they exist. From the proof of the lemma we see that even more is true
and that such homotopies are unique up to higher homotopies. Indeed, if
$\lambda^\pm,\widetilde{\lambda}^\pm \in \hom(\LLL_{\pc/\fc}(M),\LLL_{\pc/\fc}(M))_2$
are $2$-chains such that $\partial\widetilde{\lambda}^\pm =
\widetilde{\Lambda}^\pm - \Lambda^\pm = \partial \lambda^\pm$, then $\widetilde{\lambda}^\pm-\lambda^\pm$
is a $2$-cycle and hence, because of acyclicity of the mapping complex $\hom(\LLL_{\pc/\fc}(M),\LLL_{\pc/\fc}(M))$,
there exists a $3$-chain $\zeta^\pm \in \hom(\LLL_{\pc/\fc}(M),\LLL_{\pc/\fc}(M))_3$ such that
$\widetilde{\lambda}^\pm-\lambda^\pm = \partial \zeta^\pm$. The same argument
applies to even higher homotopies, which implies that retarded/advanced trivializations
are unique up to contractible choices.
\end{rem}

\begin{defi}\label{def:compatiblepair}
A pair $\Lambda^\pm \in \hom(\LLL_{\pc/\fc}(M),\LLL_{\pc/\fc}(M))_1$ of retarded/advanced trivializations
is called {\em compatible} if the corresponding chain map $\Lambda : \LLL(M)\to \Sol(M)$
from \eqref{eqn:causalpropagator} satisfies the formal skew-adjointness property
\begin{flalign}
\xymatrix{
\ar[d]_-{-\Lambda\otimes \id}\LLL(M)\otimes \LLL(M) \ar[rr]^-{ \id\otimes\Lambda} 
&&  \LLL(M)\otimes\Sol(M)\ar[d]^-{\langle\,\cdot\,,\,\cdot\,\rangle}\\
\Sol(M)\otimes\LLL(M) \ar[rr]_-{\langle\,\cdot\,,\,\cdot\,\rangle}&&\bbR
}
\end{flalign}
with respect to the integration pairings \eqref{eqn:LLLcomplexevaluation}.
\end{defi}

\begin{defi}\label{def:unshiftedPoisson}
Suppose that $ \Lambda^\pm \in \hom(\LLL_{\pc/\fc}(M),\LLL_{\pc/\fc}(M))_1$ is a 
compatible pair of retarded/advanced trivializations. 
The corresponding {\em unshifted Poisson structure} on the solution complex $\Sol(M)$ 
in \eqref{eqn:solutioncomplex} is the chain map $\tau : \LLL(M)\otimes\LLL(M)\to \bbR$ defined by
the composition
\begin{flalign}\label{eqn:unshiftedPoisson}
\xymatrix{
\ar[dr]_-{\id\otimes \Lambda~~} \LLL(M)\otimes\LLL(M) \ar[rr]^-{\tau} && \bbR\\
&\LLL(M)\otimes\Sol(M)\ar[ur]_-{\langle\,\cdot\,,\, \cdot\,\rangle }&
}
\end{flalign}
where $\LLL(M)$ is the complex of linear observables \eqref{eqn:LLLcomplex} for $\Sol(M)$
and $\Lambda$ is given in \eqref{eqn:causalpropagator}.
\end{defi}

\begin{rem}
Because  $\Lambda^\pm \in \hom(\LLL_{\pc/\fc}(M),\LLL_{\pc/\fc}(M))_1$ is by hypothesis 
a compatible pair (see Definition \ref{def:compatiblepair}), it follows that the unshifted
Poisson structure from Definition \ref{def:unshiftedPoisson} is (graded) antisymmetric, i.e.\
$\tau \, \gamma = -\tau$ with $\gamma$ the symmetric braiding in $\Ch_\bbR$. Hence,
$\tau$ canonically defines a chain map (denoted with abuse of notation by the same symbol)
\begin{flalign}
\tau \,:\, \LLL(M)\wedge \LLL(M) ~\longrightarrow~\bbR
\end{flalign}
on the (graded) exterior product, or equivalently a $0$-cycle
$\tau \in \hom(\bigwedge^2\LLL(M),\bbR)_0$ of the corresponding mapping complex.
This perspective will be valuable below for studying homotopies between unshifted Poisson structures.
\end{rem}

\begin{cor}\label{cor:tauinv}
Suppose that $ \Lambda^\pm,\widetilde{\Lambda}^\pm \in \hom(\LLL_{\pc/\fc}(M),\LLL_{\pc/\fc}(M))_1$ are two  
compatible pairs of retarded/advanced trivializations and denote the corresponding
unshifted Poisson structures by $\tau, \widetilde{\tau}\in  \hom(\bigwedge^2\LLL(M),\bbR)_0$. 
Then there exists a $1$-chain $\rho \in  \hom(\bigwedge^2\LLL(M),\bbR)_1 $ such that
$\widetilde{\tau}-\tau = \partial\rho $. In particular, $[\tau] = [\widetilde{\tau}]$ 
define the same homology class in $H_0(\hom(\bigwedge^2\LLL(M),\bbR))$.
\end{cor}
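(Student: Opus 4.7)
The plan is to propagate the homotopy uniqueness of retarded/advanced trivializations from Lemma \ref{lem:retadvtriv} b) through the two constructions that produce $\tau$ from the data $\Lambda^\pm$, namely \eqref{eqn:causalpropagator} and \eqref{eqn:unshiftedPoisson}. The strategy rests on the fact that both constructions are assembled from chain maps ($j_{\pc/\fc}$, $\iota$, $\id$, the evaluation pairing $\langle\,\cdot\,,\,\cdot\,\rangle$) composed with the trivializations, so homotopies between trivializations transfer along them by the graded Leibniz rule for the differential $\partial$ on mapping complexes.

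First, I apply Lemma \ref{lem:retadvtriv} b) separately to the retarded and advanced parts to obtain $2$-chains $\lambda^\pm \in \hom(\LLL_{\pc/\fc}(M),\LLL_{\pc/\fc}(M))_2$ with $\widetilde{\Lambda}^\pm - \Lambda^\pm = \partial \lambda^\pm$. Forming the analog of \eqref{eqn:causalpropagator} one level up,
\[
\mu \,:=\, j_{\pc}\,\lambda^{+}\,\iota \,-\, j_{\fc}\,\lambda^{-}\,\iota \,\in\, \hom\big(\LLL(M),\Sol(M)[1]\big)_2\,,
\]
and using that $j_{\pc/\fc}$ and $\iota$ are chain maps (i.e.\ $0$-cycles in the relevant mapping complexes), a direct computation gives $\partial \mu = j_{\pc}(\widetilde{\Lambda}^+ - \Lambda^+)\iota - j_{\fc}(\widetilde{\Lambda}^- - \Lambda^-)\iota = \widetilde{\Lambda} - \Lambda$. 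Via the shift isomorphism \eqref{eqn:shiftmappingcomplex}, $\mu$ is equivalently a $1$-chain in $\hom(\LLL(M),\Sol(M))_1$ witnessing that $\widetilde{\Lambda}$ and $\Lambda$ are chain homotopic as maps $\LLL(M)\to\Sol(M)$.

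Next, I package this homotopy into the unshifted Poisson structure. Since the evaluation pairing in \eqref{eqn:LLLcomplexevaluation} is a chain map, setting
\[
\rho' \,:=\, \langle\,\cdot\,,\,\cdot\,\rangle \circ (\id \otimes \mu) \,\in\, \hom\big(\LLL(M)\otimes \LLL(M),\bbR\big)_1
\]
and applying the graded Leibniz rule yields $\partial \rho' = \langle\,\cdot\,,\,\cdot\,\rangle \circ (\id \otimes \partial \mu) = \langle\,\cdot\,,\,\cdot\,\rangle \circ (\id \otimes (\widetilde{\Lambda}-\Lambda)) = \widetilde{\tau} - \tau$ at the level of the tensor product. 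Because both pairs are compatible, $\tau$ and $\widetilde{\tau}$ are graded antisymmetric (by the remark following Definition \ref{def:unshiftedPoisson}), so their difference factors through $\bigwedge^2\LLL(M)$; antisymmetrizing $\rho := \tfrac{1}{2}(\rho' - \rho'\,\gamma)$ with $\gamma$ the symmetric braiding produces the required $\rho \in \hom(\bigwedge^2 \LLL(M),\bbR)_1$ with $\partial\rho = \widetilde{\tau}-\tau$, which immediately implies $[\tau]=[\widetilde{\tau}]$ in homology.

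The only nontrivial point is bookkeeping: one must verify that the signs coming from the shift isomorphism \eqref{eqn:shiftmappingcomplex} and from the graded Leibniz rule for $\partial$ on the various mapping complexes are consistent, and that antisymmetrization commutes with $\partial$. Both checks are routine given the sign conventions fixed in Section \ref{subsec:Ch}, so no genuine obstacle is expected.
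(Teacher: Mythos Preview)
Your proof is correct and follows essentially the same route as the paper: apply Lemma \ref{lem:retadvtriv} b) to obtain $\lambda^\pm$, form the $1$-chain $\langle\,\cdot\,,\,\cdot\,\rangle\,(\id\otimes(j_{\pc}\,\lambda^+\,\iota - j_{\fc}\,\lambda^-\,\iota))$ (your $\rho'$, the paper's $\widetilde{\rho}$), observe that its boundary is $\widetilde{\tau}-\tau$, and then antisymmetrize to descend to $\bigwedge^2\LLL(M)$. The only difference is that you spell out the intermediate step $\partial\mu = \widetilde{\Lambda}-\Lambda$ explicitly, which the paper absorbs into a single displayed equation.
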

\begin{proof}
By Lemma \ref{lem:retadvtriv} b), there exists $\lambda^\pm \in \hom(\LLL_{\pc/\fc}(M),\LLL_{\pc/\fc}(M))_2$ 
such that $\widetilde{\Lambda}^\pm - \Lambda^\pm = \partial \lambda^\pm$, hence
\begin{flalign}\label{eqn:tauhomotopyTMP}
\widetilde{\tau} - \tau = \partial\Big(\langle\,\cdot\,,\,\cdot\,\rangle\, \big(\id\otimes (j_\pc\,\lambda^+\,\iota - j_\fc \,\lambda^-\,\iota) \big)\Big)=: \partial\widetilde{\rho}\quad.
\end{flalign}
Consider the decomposition 
$\widetilde{\rho} = \widetilde{\rho}_a + \widetilde{\rho}_s = \tfrac{1}{2}\,\widetilde{\rho}\,(\id - \gamma) 
+\tfrac{1}{2} \, \widetilde{\rho} \, (\id + \gamma)$ of $\widetilde{\rho}$ into its (graded) 
antisymmetric and symmetric parts.
Because both $\widetilde{\tau}$ and $\tau$ are (graded) antisymmetric, 
taking the (graded) antisymmetrization of \eqref{eqn:tauhomotopyTMP} implies that
$\widetilde{\tau} - \tau = \partial \widetilde{\rho}_a$ with the (graded) antisymmetric $1$-chain
$\widetilde{\rho}_a\in  \hom(\bigwedge^2\LLL(M),\bbR)_1$.
\end{proof}

\begin{rem}
We would like to emphasize that our Definition \ref{def:unshiftedPoisson} of unshifted Poisson structures 
leaves one important question unanswered: Do compatible pairs of retarded/advanced trivializations 
exist? We do already know from Lemma \ref{lem:retadvtriv} b) that, provided they exist,
retarded/advanced trivializations are unique up to homotopy, and so are their associated
unshifted Poisson structures, see Corollary \ref{cor:tauinv}.
Note that such questions are analogs of existence and uniqueness for Green's operators 
in ordinary field theory. We shall now investigate these issues in detail for 
Klein-Gordon and linear Yang-Mills theory. This will in particular clarify the relationship 
between retarded/advanced trivializations  and retarded/advanced Green's operators.
\end{rem}

\subsection{\label{subsec:KG}Klein-Gordon theory}
Recall the Klein-Gordon solution complex $\Sol^\KG_{~}(M)$ from Example
\ref{ex:solutionKG}. The corresponding complex of linear observables from
Definition \ref{def:linobscomplex} then reads as
\begin{flalign}\label{eqn:LLLKG}
\LLL^{\KG}_{~}(M) \,=\, \Big(
\xymatrix@C=2.5em{
0 
& \ar[l]_-{0} \stackrel{(0)}{\Omega^0_\cc(M)} 
& \ar[l]_-{\square-m^2}\stackrel{(1)}{\Omega^0_\cc(M)} 
& \ar[l]_-{0} 0
}
\Big)\quad.
\end{flalign}
Elements $\varphi\in\LLL^{\KG}_{0}(M)=\Omega^0_\cc(M)$ 
in degree $0$ are interpreted as linear scalar field observables
and elements $\alpha \in \LLL^{\KG}_{1}(M)=\Omega^0_\cc(M)$ in degree $1$ as linear
antifield observables. The evaluation of these observables on $\Sol^{\KG}_{~}(M)$
is described by \eqref{eqn:LLLcomplexevaluation} and reads as
\begin{flalign}
\langle \varphi,\Phi\rangle = \int_M \varphi\, \Phi \,\vol_M\quad,\qquad
\langle \alpha,\Phi^\ddagger \rangle = \int_M \alpha\, \Phi^\ddagger \,\vol_M\quad,
\end{flalign}
for all $\Phi\in \Sol^\KG_{0}(M)=\Omega^0(M)$ and all $\Phi^\ddagger\in \Sol^\KG_{-1}(M)=\Omega^0(M)$.
Note that only the zeroth homology of $\LLL^{\KG}_{~}(M)$ is non-vanishing.
It is given by the ordinary vector space $H_0(\LLL^{\KG}_{~}(M)) = 
\Omega_\cc^0(M)\big/(\square-m^2)\Omega_\cc^0(M)$ of linear on-shell observables for
Klein-Gordon theory, see e.g.\ \cite{BDH}. It follows that $\LLL^{\KG}_{~}(M)$ is quasi-isomorphic
to its zeroth homology $H_0(\LLL^{\KG}_{~}(M)) $, regarded as a chain complex concentrated in degree $0$.
In other words, for Klein-Gordon theory on $M$ it does not make any difference if we work with the 
complex of linear observables $\LLL^{\KG}_{~}(M)$ or with the ordinary vector space $H_0(\LLL^{\KG}_{~}(M)) $
of linear on-shell observables.
\sk

The shifted Poisson structure $\Upsilon^{\KG}_{~} : \LLL^{\KG}_{~}(M)\otimes \LLL^{\KG}_{~}(M)\to\bbR[1]$ 
from Definition \ref{def:shiftedPoisson} describes the following 
pairing between scalar field  observables and antifield observables
\begin{flalign}
\Upsilon^{\KG}_{~}(\alpha,\varphi) \,=\, - \int_M \alpha\,\varphi \,\vol_M \,=\, \Upsilon^{\KG}_{~}(\varphi,\alpha)\quad, 
\end{flalign}
for all $\alpha \in \LLL^{\KG}_{1}(M)=\Omega^0_\cc(M)$ and $\varphi\in\LLL^{\KG}_{0}(M)=\Omega^0_\cc(M) $.
\sk

Our next aim is to classify all retarded/advanced trivializations 
in the sense of Definition \ref{def:retadvtrivialization} for Klein-Gordon theory.
Recalling the complex $\LLL^{\KG}(M)$ from \eqref{eqn:LLLKG}, a retarded/advanced 
trivialization $\Lambda^\pm$ may be visualized by the down-right pointing arrows in the diagram
\begin{flalign}
\xymatrix@C=4em{
0 \ar[d]_-{0}\ar[dr]^-{0}
& \ar[l]_-{0}  \Omega^0_{\pc/\fc}(M) \ar[d]_-{\id}\ar[dr]^-{\Lambda^\pm_0}
& \ar[l]_-{\square-m^2} \Omega^0_{\pc/\fc}(M)\ar[d]_-{\id} \ar[dr]^-{0}
& \ar[l]_-{0} 0 \ar[d]_-{0}\\ 
0
& \ar[l]^-{0} \Omega^0_{\pc/\fc}(M)
& \ar[l]^-{\square-m^2}\Omega^0_{\pc/\fc}(M)
& \ar[l]^-{0} 0
}
\end{flalign}
which in the present case are simply given by the data of a single 
linear map $\Lambda^\pm_0 : \Omega^0_{\pc/\fc}(M)\to \Omega^0_{\pc/\fc}(M)$.
The condition $\id = \partial\Lambda^\pm $ is equivalent to the two equalities
\begin{flalign}\label{eqn:tmpKGtriv}
(\square-m^2)\,\Lambda^\pm_0 \,=\, \id\quad,\qquad
\Lambda^\pm_0\,(\square -m^2) \,= \,\id\quad.
\end{flalign}
\begin{propo}\label{propo:KG}
For Klein-Gordon theory, there exists a unique retarded/advanced trivialization
$\Lambda^\pm \in \hom(\LLL^{\KG}_{\pc/\fc}(M),\LLL^{\KG}_{\pc/\fc}(M))_1$.
It is given by the unique (extended) retarded/advanced Green's operator
$\Lambda_0^\pm= G^\pm : \Omega^0_{\pc/\fc}(M)\to \Omega^0_{\pc/\fc}(M)$
for $\square-m^2$.
\end{propo}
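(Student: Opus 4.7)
The plan is to treat existence and uniqueness separately, in both cases exploiting that the constraint defining a retarded/advanced trivialization has already been reduced in \eqref{eqn:tmpKGtriv} to the pair of strict equalities $(\square-m^2)\,\Lambda^\pm_0 = \id$ and $\Lambda^\pm_0\,(\square-m^2) = \id$ for a single linear endomorphism $\Lambda^\pm_0$ of $\Omega^0_{\pc/\fc}(M)$. So the task is simply to classify such $\Lambda^\pm_0$.

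For existence I would take $\Lambda^\pm_0 := G^\pm$, with $G^\pm$ the unique extension to past/future compactly supported sections of the retarded/advanced Green's operator for the formally self-adjoint Green hyperbolic operator $\square - m^2$, as recalled in Section~\ref{subsec:Green}. The two identities to verify are then precisely properties (i) and (ii) from that recollection, applied now on $\Omega^0_{\pc/\fc}(M)$ in place of $\Omega^0_\cc(M)$.

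For uniqueness, given any $\Lambda^\pm_0$ satisfying both equalities, I would consider the triple composition $\Lambda^\pm_0\,(\square-m^2)\,G^\pm$ as an endomorphism of $\Omega^0_{\pc/\fc}(M)$. Bracketing the first two factors and using $\Lambda^\pm_0\,(\square-m^2) = \id$ evaluates this composition to $G^\pm$, while bracketing the last two factors and using $(\square-m^2)\,G^\pm = \id$ evaluates it to $\Lambda^\pm_0$. This forces $\Lambda^\pm_0 = G^\pm$. The composition is well defined precisely because the extended Green's operator preserves past/future compact support.

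There is no substantive obstacle here; the argument is a direct combination of the algebraic unpacking already performed in the excerpt with the existence and uniqueness theorems for retarded/advanced Green's operators. The only point worth emphasising is that, in contrast with the linear Yang-Mills case to be treated later, uniqueness for Klein-Gordon holds strictly and not merely up to chain homotopy. This is consistent with the observation in Example~\ref{ex:solutionKG} that $\Sol^{\KG}_{~}(M)$ is quasi-isomorphic to a chain complex concentrated in a single degree, leaving no room for higher homotopical ambiguity.
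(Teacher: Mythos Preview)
Your proof is correct. For existence it coincides with the paper's argument. For uniqueness you take a more elementary route: you argue directly that any two-sided inverse of $\square - m^2$ on $\Omega^0_{\pc/\fc}(M)$ is unique, via the standard sandwich $\Lambda^\pm_0\,(\square-m^2)\,G^\pm$. The paper instead invokes the general Lemma~\ref{lem:retadvtriv}~b), which says that any two retarded/advanced trivializations differ by $\partial\lambda^\pm$ for some $2$-chain $\lambda^\pm$, and then observes that $\hom(\LLL^{\KG}_{\pc/\fc}(M),\LLL^{\KG}_{\pc/\fc}(M))_2 = 0$ because the complex is concentrated in degrees $0$ and $1$. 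Your argument is shorter and self-contained; the paper's argument has the virtue of making transparent, within the homotopical framework, exactly \emph{why} Klein-Gordon is rigid while linear Yang-Mills is not: the space of $2$-chains is trivial here but non-trivial there (cf.\ Remark~\ref{rem:YMhomotopy}). Both are perfectly valid.
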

\begin{proof}
Recall from Section \ref{subsec:Green} that the (extended) retarded/advanced Green's 
operator $G^\pm : \Omega^0_{\pc/\fc}(M)\to \Omega^0_{\pc/\fc}(M)$ for $\square-m^2$ satisfies 
\eqref{eqn:tmpKGtriv} and hence setting $\Lambda_0^\pm= G^\pm$ defines a retarded/advanced 
trivialization $\Lambda^\pm \in \hom(\LLL^{\KG}_{\pc/\fc}(M),\LLL^{\KG}_{\pc/\fc}(M))_1$. 
Uniqueness follows from Lemma \ref{lem:retadvtriv} b) and the fact that 
$\hom(\LLL^{\KG}_{\pc/\fc}(M),\LLL^{\KG}_{\pc/\fc}(M))_2=0$ is the zero vector space.
\end{proof}

Because of \eqref{eqn:Gskewadjoint}, the unique retarded and advanced trivializations 
obtained above form a compatible pair in the sense of Definition \ref{def:compatiblepair}. 
The corresponding unshifted Poisson structure from Definition
\ref{def:unshiftedPoisson} then reads as
\begin{flalign}\label{eqn:tauKG}
\tau^{\KG}(\varphi_1,\varphi_2) \, =\, \int_M \varphi_1\, G\varphi_2 \,\vol_M \quad,
\end{flalign}
for all $\varphi_1, \varphi_2 \in \LLL^{\KG}_{0}(M) = \Omega^0_\cc(M)$,
where $G := G^+ - G^-$ is the causal propagator for $\square - m^2$.
This is precisely the usual Poisson structure for Klein-Gordon theory, see e.g.\ \cite{BDH}.

\subsection{\label{subsec:YM}Linear Yang-Mills theory}
Recall the linear Yang-Mills solution complex $\Sol^\YM_{~}(M)$ from 
Example \ref{ex:solutionYM}. The corresponding complex of linear observables from Definition
\ref{def:linobscomplex} reads as
\begin{flalign}\label{eqn:LLLYM}
\LLL^{\YM}_{~}(M) \,=\, \Big(
\xymatrix@C=2.5em{
\stackrel{(-1)}{\Omega^0_\cc(M)} 
& \ar[l]_-{-\delta} \stackrel{(0)}{\Omega^1_\cc(M)} 
& \ar[l]_-{\delta\dd}\stackrel{(1)}{\Omega^1_\cc(M)} 
& \ar[l]_-{-\dd} \stackrel{(2)}{\Omega^0_\cc(M)}
}
\Big)\quad.
\end{flalign}
Elements $\varphi \in \LLL^{\YM}_{0}(M) = \Omega^1_\cc(M)$ in degree $0$ are interpreted 
as linear gauge field observables and elements $\chi \in \LLL^{\YM}_{-1}(M) = \Omega^0_\cc(M)$ 
in degree $-1$ as linear ghost field observables. Elements 
$\alpha \in \LLL^{\YM}_{1}(M) = \Omega^1_\cc(M)$ in degree $1$ 
and $\beta \in \LLL^{\YM}_{2}(M) = \Omega^0_\cc(M)$ in degree $2$ are interpreted as 
linear observables for the antifields $A^\ddagger$ and $c^\ddagger$. 
The evaluation of these observables on $\Sol^\YM_{~}(M)$ is described 
by \eqref{eqn:LLLcomplexevaluation} and reads as
\begin{subequations}
\begin{flalign}
\langle \varphi, A \rangle &= \int_M \varphi \wedge \ast A ~~\quad,\qquad
\langle \chi , c \rangle = \int_M \chi \, c\, \vol_M \quad, \qquad\\
\langle \alpha , A^\ddagger \rangle &=\int_M \alpha \wedge \ast A^\ddagger \quad,\qquad
\langle \beta, c^\ddagger \rangle = \int_M \beta \, c^\ddagger \,\vol_M \quad,
\end{flalign}
\end{subequations}
for all gauge fields $A \in \Sol^\YM_{0}(M)=\Omega^1(M)$, ghost fields 
$c \in \Sol^\YM_{1}(M)=\Omega^0(M)$ and antifields $A^\ddagger\in \Sol^\YM_{-1}(M)=\Omega^1(M)$ 
and $c^\ddagger \in \Sol^\YM_{-2}(M)=\Omega^0(M)$.
The homologies of  $\LLL^{\YM}_{~}(M)$ can be computed explicitly and admit a physical interpretation,
see also Example \ref{ex:solutionYM}.
\begin{itemize}
\item $H_{-1}(\LLL^{\YM}_{~}(M)) = H_{\cc,\delta}^0(M) \cong H_{\cc,\mathrm{dR}}^m(M)$ 
is by Poincar\'e duality the linear dual of the vector space $H_1(\Sol^{\YM}_{~}(M))
\cong H_{\mathrm{dR}}^0(M)$, i.e.\ it consists 
of linear observables testing those ghost fields that act trivially on gauge fields. 

\item $H_0(\LLL^{\YM}_{~}(M)) = \Omega_{\cc,\delta}^1(M) / \delta\dd \Omega_\cc^1(M)$ is the usual 
vector space of linear gauge-invariant on-shell observables, see e.g.\
\cite{SDH,BDS14,BDHS14,FewsterLang,BeniniMaxwell,BSSdiffcoho}.

\item $H_{1}(\LLL^{\YM}_{~}(M)) =\Omega^1_{\cc,\delta\dd} (M)/ \dd \Omega_\cc^0(M) 
\cong H_{\cc,\mathrm{dR}}^1(M)$ is by Poincar\'e duality the linear dual of the vector space 
$H_{-1}(\Sol^{\YM}_{~}(M)) \cong H_{\mathrm{dR}}^{m-1}(M)$, i.e.\ it consists of 
linear observables testing obstructions to solving the inhomogeneous linear 
Yang-Mills equation $\delta\dd A = j$ with $j\in\Omega^1_\delta(M)$.

\item $H_{2}(\LLL^{\YM}_{~}(M)) = H^0_{\cc,\mathrm{dR}}(M) \cong 0$, 
because $M\cong \bbR\times\Sigma$.
\end{itemize}

The shifted Poisson structure $\Upsilon^{\YM}_{~} : \LLL^{\YM}_{~}(M)\otimes \LLL^{\YM}_{~}(M)\to\bbR[1]$ 
from Definition \ref{def:shiftedPoisson} describes the following 
pairing between gauge or respectively ghost field observables and their 
corresponding antifield observables
\begin{subequations}
\begin{flalign}
\Upsilon^{\YM}_{~}(\alpha,\varphi) \,&=\, - \int_M \alpha \wedge \ast \varphi \,=\, \Upsilon^{\YM}_{~}(\varphi,\alpha)\quad,\qquad\\
\Upsilon^{\YM}_{~}(\beta,\chi) \,&=\, \int_M \beta \, \chi\,\vol_M \,=\, \Upsilon^{\YM}_{~}(\chi,\beta)\quad, 
\end{flalign}
\end{subequations}
for all $\varphi \in \LLL^{\YM}_{0}(M) = \Omega^1_\cc(M)$, 
$\alpha \in \LLL^{\YM}_{1}(M) = \Omega^1_\cc(M)$,
$\chi \in \LLL^{\YM}_{-1}(M) = \Omega^0_\cc(M)$ 
and $\beta \in \LLL^{\YM}_{2}(M) = \Omega^0_\cc(M)$.
\sk

We now construct a compatible pair of retarded/advanced trivializations
in the sense of Definitions \ref{def:retadvtrivialization} and \ref{def:compatiblepair} 
for linear Yang-Mills theory.
Recalling the complex $\LLL^{\YM}(M)$ from \eqref{eqn:LLLYM}, a retarded/advanced 
trivialization $\Lambda^\pm$ may be visualized by the down-right pointing arrows in the diagram
\begin{flalign}
\xymatrix@C=3.2em{
0 \ar[d]_-{0} \ar[dr]^-{0}
& \ar[l]_-{0} \Omega^0_{\pc/\fc}(M) \ar[d]_-{\id} \ar[dr]^-{\Lambda^\pm_{-1}}
& \ar[l]_-{-\delta} \Omega^1_{\pc/\fc}(M) \ar[d]_-{\id} \ar[dr]^-{\Lambda^\pm_0}
& \ar[l]_-{\delta\dd} \Omega^1_{\pc/\fc}(M) \ar[d]_-{\id} \ar[dr]^-{\Lambda^\pm_1}
& \ar[l]_-{-\dd} \Omega^0_{\pc/\fc}(M) \ar[d]_-{\id} \ar[dr]^-{0}
& \ar[l]_-{0} 0 \ar[d]_-{0}\\ 
0
& \ar[l]^-{0} \Omega^0_{\pc/\fc}(M)
& \ar[l]^-{-\delta} \Omega^1_{\pc/\fc}(M)
& \ar[l]^-{\delta\dd} \Omega^1_{\pc/\fc}(M)
& \ar[l]^-{-\dd} \Omega^0_{\pc/\fc}(M)
& \ar[l]^-{0} 0
}
\end{flalign}
which in the present case are three linear maps 
$\Lambda^\pm_{-1} : \Omega^0_{\pc/\fc}(M)\to \Omega^1_{\pc/\fc}(M)$, 
$\Lambda^\pm_0 : \Omega^1_{\pc/\fc}(M)\to \Omega^1_{\pc/\fc}(M)$ 
and $\Lambda^\pm_1 : \Omega^1_{\pc/\fc}(M)\to \Omega^0_{\pc/\fc}(M)$,
subject to the four identities  
\begin{flalign}\label{eqn:tmpYMtriv}
-\delta\, \Lambda^\pm_{-1} = \id \quad, \qquad 
\delta\dd\, \Lambda^\pm_0 - \Lambda^\pm_{-1}\, \delta = \id \quad, \qquad
\Lambda^\pm_0\, \delta\dd - \dd\, \Lambda^\pm_1 = \id \quad,\qquad
-\Lambda^\pm_1\, \dd = \id \quad.
\end{flalign}
\begin{propo}\label{propo:YMtrivializations}
Denote by $G^\pm: \Omega^1_{\pc/\fc}(M) \to \Omega^1_{\pc/\fc}(M)$ the
(extended) retarded/advanced Green's operators for the d'Alembert operator 
$\square: \Omega^1(M)\to\Omega^1(M)$ on $1$-forms.
The choices
\begin{flalign}\label{eqn:YMtriv}
\Lambda^\pm_{-1} \,=\, -G^\pm\, \dd \quad, \qquad 
\Lambda^\pm_0 \,=\, G^\pm \quad,\qquad
\Lambda^\pm_1 \,=\, -\delta\, G^\pm \quad
\end{flalign}
define a compatible pair of retarded/advanced trivializations for linear Yang-Mills theory.
\end{propo}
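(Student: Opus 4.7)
The proof naturally splits into two independent steps: showing that each of $\Lambda^\pm$ defines a retarded/advanced trivialization in the sense of Definition \ref{def:retadvtrivialization}, and showing that the resulting pair is compatible in the sense of Definition \ref{def:compatiblepair}.

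For the trivialization step, I would substitute the formulas \eqref{eqn:YMtriv} into the four identities \eqref{eqn:tmpYMtriv} and reduce each one to the defining property $\square\, G^\pm = G^\pm \,\square = \id$ on $\Omega^p_{\pc/\fc}(M)$, using the commutation relations $\dd\, G^\pm = G^\pm\, \dd$ and $\delta\, G^\pm = G^\pm \,\delta$ recorded in Example \ref{ex:forms}. The two ``outer'' identities read $-\delta(-G^\pm \dd) = G^\pm \,\delta\dd$ and $-(-\delta G^\pm)\dd = G^\pm\, \delta\dd$ on $\Omega^0_{\pc/\fc}(M)$; since $\delta$ vanishes on $0$-forms, the d'Alembertian there reduces to $\square = \delta\dd$, so both collapse to $G^\pm\,\square = \id$. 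The two ``inner'' identities similarly reduce to $(\delta\dd + \dd\delta)\,G^\pm = \square\,G^\pm = \id$ on $\Omega^1_{\pc/\fc}(M)$ (and its commuted version) after inserting $\square = \delta\dd + \dd\delta$ and commuting $G^\pm$ past $\dd$ and $\delta$.

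For the compatibility step, I would first extract the explicit components of the chain map $\Lambda : \LLL^{\YM}(M)\to \Sol^{\YM}(M)$ obtained via \eqref{eqn:causalpropagator}. Composing $\Lambda^\pm$ with the inclusions $\iota$ into past/future compactly supported sections and with the component-wise signs of $j_{\pc/\fc}$ read from \eqref{eqn:jmapping}, the non-zero components come out as
\[ \Lambda_{-1} \,=\, -G\,\dd \quad,\qquad \Lambda_{0} \,=\, -G \quad,\qquad \Lambda_{1} \,=\, \delta\, G \quad, \]
where $G := G^+ - G^-$ is the causal propagator for $\square$. With this explicit description, the skew-adjointness diagram of Definition \ref{def:compatiblepair} only needs to be checked in the three bidegrees where the integration pairing is non-trivial, namely on $\LLL^{\YM}_n(M)\otimes \LLL^{\YM}_{-n}(M)$ for $n\in\{0,\pm 1\}$. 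In each case the required identity is a direct consequence of the formal skew-adjointness $\langle \varphi,G\varphi^\prime\rangle = -\langle G\varphi,\varphi^\prime\rangle$ from \eqref{eqn:Gskewadjoint}, combined with the adjointness $\langle \,\cdot\,,\delta\,\cdot\,\rangle = \langle \dd\,\cdot\,,\,\cdot\,\rangle$ between $\dd$ and $\delta$.

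The main obstacle I expect is not analytical but combinatorial: consistently tracking the signs in the differentials of $\LLL(M)$ and $\Sol(M)[1]$, in the chain map $j$ of \eqref{eqn:jmapping}, in the Koszul rule governing the tensor products, and in the two evaluation chain maps of \eqref{eqn:LLLcomplexevaluation} (which must be related by a degree-dependent sign in order for both to be chain maps to $\bbR$). Once these conventions are fixed consistently, all genuine analytical content of the proposition reduces to the single pair of facts $\square\, G^\pm = G^\pm\,\square = \id$ on $\Omega^p_{\pc/\fc}(M)$ and the formal skew-adjointness \eqref{eqn:Gskewadjoint} of the causal propagator.
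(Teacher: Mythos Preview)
Your proposal is correct and follows essentially the same route as the paper, which simply remarks that the result ``follows immediately from the properties of Green's operators stated in Section~\ref{subsec:Green}, see in particular Example~\ref{ex:forms}.'' You have unpacked this into the verification of the four identities \eqref{eqn:tmpYMtriv} via $\square\,G^\pm = G^\pm\,\square = \id$ and the commutation of $G^\pm$ with $\dd$ and $\delta$, plus the compatibility check via the formal skew-adjointness \eqref{eqn:Gskewadjoint} of the causal propagator; this is exactly the intended argument, only made explicit, and your flagging of the Koszul sign bookkeeping (in particular the degree-dependent sign relating the two evaluation pairings in \eqref{eqn:LLLcomplexevaluation}) is an accurate assessment of where the only real care is needed.
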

\begin{proof}
This follows immediately from the properties of Green's operators
stated in Section \ref{subsec:Green}, see in particular Example \ref{ex:forms}.
\end{proof}
\begin{rem}\label{rem:YMhomotopy}
In contrast to the example of Klein-Gordon theory from Section \ref{subsec:KG},
the retarded/advanced trivializations of linear Yang-Mills theory are not unique,
but only unique up to contractible choices, see Lemma \ref{lem:retadvtriv} and Remark \ref{rem:contractiblechoices}. 
Any other retarded/advanced trivialization $\widetilde{\Lambda}^\pm$
differs from our $\Lambda^\pm$ above by the differential of a $2$-chain 
$\lambda^\pm \in \hom(\LLL^{\YM}_{\pc/\fc}(M),\LLL^{\YM}_{\pc/\fc}(M))_2$. Explicitly,
the three non-zero components of $\widetilde{\Lambda}^\pm$ read as
\begin{flalign}
\begin{array}{rll}
\widetilde{\Lambda}^\pm_{-1} ~=
 & -G^\pm\, \dd 
  + \, \delta\dd\, \lambda^\pm_{-1} 
 & : \quad \Omega^0_{\pc/\fc}(M) ~ \longrightarrow~ \Omega^1_{\pc/\fc}(M) \quad, 
\\
\widetilde{\Lambda}^\pm_0 ~=
 & G^\pm 
  - \, \dd\, \lambda^\pm_0 + \lambda^\pm_{-1}\, \delta
 & : \quad \Omega^1_{\pc/\fc}(M)~ \longrightarrow~ \Omega^1_{\pc/\fc}(M) \quad, 
\\
\widetilde{\Lambda}^\pm_1 ~=
 & -\delta\, G^\pm 
  - \, \lambda^\pm_0\, \delta\dd
 & : \quad \Omega^1_{\pc/\fc}(M) ~ \longrightarrow~ \Omega^0_{\pc/\fc}(M) \quad,
\end{array}
\end{flalign}
where $\lambda^\pm_{-1}: \Omega^0_{\pc/\fc}(M) \to \Omega^1_{\pc/\fc}(M)$ 
and $\lambda^\pm_0: \Omega^1_{\pc/\fc}(M) \to \Omega^0_{\pc/\fc}(M)$ are 
the two non-zero components of the $2$-chain $\lambda^\pm$. 
\end{rem}

The unshifted Poisson structure (see Definition \ref{def:unshiftedPoisson}) that corresponds to our
compatible pair of retarded/advanced trivializations $\Lambda^\pm$ from Proposition 
\ref{propo:YMtrivializations} reads as
\begin{subequations}\label{eqn:tauYM}
\begin{flalign}
\tau^{\YM}(\varphi_1,\varphi_2) \,&=\, \int_M \varphi_1\wedge\ast G\varphi_2 = -\tau^{\YM}(\varphi_2,\varphi_1)\quad,\qquad\\
\tau^{\YM}(\alpha,\chi) \,&=\, - \int_M \alpha \wedge \ast G\dd\chi =  \tau^{\YM}(\chi,\alpha)\quad, 
\end{flalign}
\end{subequations}
for all $\varphi_1, \varphi_2 \in \LLL^{\YM}_0(M) = \Omega^1_\cc(M)$,
$\alpha \in \LLL^{\YM}_{1}(M) = \Omega^1_\cc(M)$ and $\chi\in \LLL^{\YM}_{-1}(M) = \Omega^0_\cc(M)$,
where $G := G^+ - G^-$ is the causal propagator for the d'Alembert operator $\square$ on $1$-forms.
Note that this Poisson structure acts non-trivially on pairs $(\varphi_1,\varphi_2)$ of linear gauge field observables
and also non-trivially on pairs $(\alpha,\chi)$ consisting of a linear antifield observable $\alpha$ and a linear
ghost field observable $\chi$. It extends to the richer level of chain complexes of linear observables 
$\LLL^{\YM}(M)$ the usual Poisson structure on linear gauge-invariant on-shell observables $H_0(\LLL^{\YM}(M))$,
see e.g.\ \cite{SDH,BDS14,BDHS14,FewsterLang,BeniniMaxwell,BSSdiffcoho}.
\sk

To conclude this section, we would like to emphasize that any other choice of a compatible pair
of retarded/advanced trivializations $\widetilde{\Lambda}^\pm$  (see Remark \ref{rem:YMhomotopy} 
for a concrete description) defines an unshifted Poisson structure $\widetilde{\tau} = \tau^{\YM} + \partial\rho$
that agrees with \eqref{eqn:tauYM} up to homotopy, see Corollary \ref{cor:tauinv}. We shall prove
in Proposition \ref{prop:CCRhomotopical} that the quantization of two homotopic Poisson structures
yields quasi-isomorphic observable algebras, i.e.\ the quasi-isomorphism type of the 
resulting quantum theory depends only on the {\em uniquely defined} homology classes $[\Lambda^\pm]
\in  H_1(\hom(\LLL^{\YM}_{\pc/\fc}(M),\LLL^{\YM}_{\pc/\fc}(M)))$.

%%%%%%%%%%%%%%%%%%%%%%%%%%%%%%%%%%%%%%%%%%%%%%%%
%%%%%%%%%%%%%%%%%%%%%%%%%%%%%%%%%%%%%%%%%%%%%%%%

\section{\label{sec:Quantization}Quantization}
The goal of this section is to develop a chain complex analog
of the usual canonical commutation relations (CCR) quantization
of vector spaces endowed with Poisson structures, see e.g.\ \cite{BDH}.
The input of our construction is a pair $(V,\tau)$ consisting of
a chain complex $V\in\Ch_\bbR$ and a chain map
$\tau : V\wedge V\to\bbR$. We shall call $(V,\tau)$ 
an {\em unshifted Poisson complex}. The output of our construction
is a differential graded unital and associative $\ast$-algebra
$\CCR(V,\tau)$ over the field of complex numbers $\bbC$
that implements the canonical commutation relations determined by $\tau$.
We shall investigate homotopical properties of this quantization prescription
and in particular prove that, up to quasi-isomorphism, the quantization $\CCR(V,\tau)$ 
does only depend on the quasi-isomorphism type of $(V,\tau)$ and on the
homology class $[\tau]\in H_0(\hom(\bigwedge^2 V,\bbR))$ of $\tau$.
In the context of our examples from Section \ref{sec:Poisson}, this means that
both Klein-Gordon theory and linear Yang-Mills theory can be consistently quantized
by our methods.
\sk

Let us now explain in some detail the CCR quantization $\CCR(V,\tau)$
of an unshifted Poisson complex $(V,\tau)$. We denote by $T_\bbC^\otimes V$
the free differential graded unital and associative $\ast$-algebra generated by $V\in\Ch_\bbR$.
Concretely, $T_\bbC^\otimes V$ is given by
\begin{flalign}
T_\bbC^\otimes V\,:=\, \bigoplus_{n=0}^\infty V_\bbC^{\otimes n}\quad,
\end{flalign}
where $V_\bbC := V\otimes\bbC\in\Ch_\bbC$ is the complexification of $V$,
together with the usual multiplication $\mu :  T_\bbC^\otimes V \otimes T_\bbC^\otimes V \to T_\bbC^\otimes V$
and unit $\eta : \bbC\to T_\bbC^\otimes V$ determined by 
$\mu((v_1\otimes\cdots\otimes v_n) \otimes (v_1^\prime\otimes\cdots\otimes v_m^\prime) ) = 
v_1\otimes\cdots\otimes v_n \otimes v_1^\prime\otimes\cdots\otimes v_m^\prime$,
for all $v_1,\dots,v_n,v_1^\prime,\dots,v_m^\prime\in V$, and $\oone := \eta(1) = 1\in V_\bbC^{\otimes 0}=\bbC$. 
The $\bbC$-antilinear $\ast$-involution is determined by $v^\ast = v$, for all $v\in V$.
The CCR quantization is defined as the quotient
\begin{flalign}
\CCR(V,\tau)\,:=\, T_\bbC^\otimes V\big/ \mathcal{I}_{(V,\tau)} 
\end{flalign}
by the two-sided differential graded $\ast$-ideal $\mathcal{I}_{(V,\tau)}\subseteq T_\bbC^\otimes V$
generated by the (graded) canonical commutation relations
\begin{flalign}\label{eqn:CCR}
v_1 \otimes v_2 - (-1)^{\vert v_1\vert \,\vert v_2\vert}\, v_2\otimes v_1 \,=\, i\,\tau(v_1,v_2)\,\oone\quad,
\end{flalign}
for all homogeneous elements $v_1,v_2\in V$ with degrees denoted by $\vert v_1\vert,\vert v_2\vert \in\bbZ$.
We note that CCR quantization is functorial 
\begin{flalign}\label{eqn:CCRfunctor}
\CCR \,:\, \PCh_\bbR~\longrightarrow~ \astdgAlg_\bbC
\end{flalign}
for the following  natural choices of categories:
\begin{itemize}
\item $\PCh_\bbR$ denotes the category of unshifted Poisson complexes, i.e.\ objects are
pairs $(V,\tau)$ consisting of a chain complex $V\in\Ch_\bbR$ and a chain map $\tau : V\wedge V\to\bbR$
and  morphisms $f: (V,\tau)\to (V^\prime,\tau^\prime)$ are chain maps $f : V\to V^\prime$ 
preserving the Poisson structures $\tau^\prime \, (f\wedge f) = \tau$.

\item $\astdgAlg_\bbC$ denotes the usual category of differential graded unital and associative $\ast$-algebras.
\end{itemize}
\begin{rem}\label{rem:CCR}
Every ordinary Poisson vector space $(V,\tau)$ defines an unshifted Poisson complex
whose underlying chain complex is concentrated in degree $0$.
In such cases our CCR quantization $\CCR(V,\tau)$ yields a differential graded
$\ast$-algebra concentrated in degree $0$, which coincides with the usual CCR algebra
from the non-homotopical framework, see e.g.\ \cite{BDH}.
\end{rem}

For our homotopical analysis of CCR quantization, we endow both $\PCh_\bbR$ 
and $\astdgAlg_\bbC$ with the structure of a {\em homotopical category} in the sense of
\cite{DHKS,Riehl}. This is a more flexible framework than model category theory,
which is very convenient for our purposes because $\PCh_\bbR$ is not a model category as 
it is not cocomplete. Similarly to model category theory, a homotopical category is a category 
with a choice of weak equivalences (containing all isomorphisms and satisfying the so-called 
$2$-of-$6$ property), however there is no need to introduce compatible classes of 
fibrations and cofibrations or to require the category to be bicomplete. In our context, we introduce the following
canonical homotopical category structures on $\PCh_\bbR$ and $\astdgAlg_\bbC$.
\begin{defi}\label{def:homotopicalCATS}
\begin{itemize}
\item[(i)] A morphism $f: (V,\tau)\to (V^\prime,\tau^\prime)$ in $\PCh_\bbR$
is a weak equivalence if its underlying chain map $f:V\to V^\prime$ is a quasi-isomorphism in $\Ch_\bbR$.

\item[(ii)] A morphism $\kappa : A\to A^\prime$ in $\astdgAlg_\bbC$ is a weak equivalence if its
underlying chain map  is a quasi-isomorphism in $\Ch_\bbC$.
\end{itemize}
\end{defi}

The next result shows that the CCR functor \eqref{eqn:CCRfunctor} has very pleasant
homotopical properties, which in particular ensure that our examples of linear gauge theories
from Section \ref{sec:Poisson} can be quantized consistently. The proof of the following
proposition is slightly technical and hence it will be carried out in detail in Appendix \ref{app:technical}.
\begin{propo}\label{prop:CCRhomotopical}
\begin{itemize}
\item[a)] The CCR functor \eqref{eqn:CCRfunctor} is a homotopical functor, i.e.\ it preserves
the weak equivalences introduced in Definition \ref{def:homotopicalCATS}.

\item[b)] Let $(V,\tau)\in \PCh_\bbR$ be an unshifted Poisson complex 
and $\rho\in \hom(\bigwedge^2V,\bbR)_1$ a $1$-chain.
Then there exists a zig-zag 
\begin{flalign}
\xymatrix@C=2em{
\CCR(V,\tau) ~&~\ar[l]_-{\sim}  A_{(V,\tau,\rho)} \ar[r]^-{\sim} ~&~ \CCR(V,\tau+\partial\rho)
}
\end{flalign}
of weak equivalences in $\astdgAlg_\bbC$.
\end{itemize}
\end{propo}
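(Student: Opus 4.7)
The plan for part~(a) is a PBW-type filtration argument. I would filter the tensor algebra $T_\bbC^\otimes V$ by word length, $F^n = \bigoplus_{k\le n} V_\bbC^{\otimes k}$, which descends to an exhaustive, bounded-below filtration on the quotient $\CCR(V,\tau)$. The defining canonical commutation relation $v_1\otimes v_2 - (-1)^{|v_1|\,|v_2|} v_2\otimes v_1 = i\,\tau(v_1,v_2)\,\oone$ has right-hand side of strictly lower filtration degree, so in the associated graded the (graded) commutators vanish and there is a canonical surjection $\mathrm{Sym}_\bbC(V_\bbC)\twoheadrightarrow \mathrm{gr}^\bullet\CCR(V,\tau)$ from the graded-symmetric algebra. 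A standard PBW argument (available in characteristic zero via the Koszul-signed (anti)symmetrizer $\mathrm{sym}_n = \tfrac{1}{n!}\sum_{\sigma\in S_n}\mathrm{sgn}(\sigma)\,\sigma$) supplies a section of the quotient $T_\bbC^\otimes V\to\CCR(V,\tau)$ whose image is complementary to the CCR ideal, upgrading the surjection to an isomorphism. Given a weak equivalence $f:(V,\tau)\to(V',\tau')$ in $\PCh_\bbR$, the induced map of filtered complexes has associated graded $\mathrm{Sym}_\bbC(f_\bbC)$. Because $\mathrm{Sym}_\bbC$ preserves quasi-isomorphisms between chain complexes of $\bbC$-vector spaces in characteristic zero (a K\"unneth argument, together with the fact that $\mathrm{Sym}^n$ is a retract of $(-)^{\otimes n}$), the associated-graded map is a quasi-isomorphism, and the spectral sequence of the filtration converges strongly, yielding $\CCR(f)$ as a quasi-isomorphism.

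For part~(b), my strategy is to realize $A_{(V,\tau,\rho)}$ as the CCR quantization of a cylinder-like Poisson complex interpolating between $(V,\tau)$ and $(V,\tau+\partial\rho)$. Let $I$ be the normalized chain complex of the standard interval, i.e.\ $I_0 = \bbR\,e_0\oplus\bbR\,e_1$ and $I_1 = \bbR\,\xi$ with $\partial\xi = e_1 - e_0$. Both endpoint evaluations $\ev_0,\ev_1:I\to\bbR$ are quasi-isomorphisms, hence so are the induced chain maps $\id\otimes\ev_0,\id\otimes\ev_1:V\otimes I\to V$ in $\Ch_\bbR$. On $V\otimes I$ I would define a Poisson structure $\tilde\tau$ by prescribing $\tilde\tau|_{(V\otimes e_0)^{\wedge 2}} = \tau$ and $\tilde\tau|_{(V\otimes e_1)^{\wedge 2}} = \tau+\partial\rho$, and using the $1$-chain $\rho$ together with the generator $\xi$ to build cross-terms that make $\tilde\tau:(V\otimes I)\wedge(V\otimes I)\to\bbR$ into a genuine chain map. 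By construction the defect of the $e_0/e_1$-part from being a chain map is exactly $\partial\rho$, which is absorbed by the $\rho$-weighted $\xi$-term, so both evaluations lift to morphisms of Poisson complexes. Setting $A_{(V,\tau,\rho)} := \CCR(V\otimes I,\tilde\tau)$ and applying part~(a) to these two weak equivalences then produces the required zig-zag.

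The main obstacle is the PBW step in part~(a): the surjection $\mathrm{Sym}_\bbC(V_\bbC)\twoheadrightarrow \mathrm{gr}^\bullet\CCR(V,\tau)$ is not automatically an isomorphism because the CCR ideal is generated by relations of inhomogeneous filtration degree, and one has to verify that the (anti)symmetrizer section is compatible with the internal differential on $T_\bbC^\otimes V$ and with the graded-commutator conventions. Once PBW is established, the convergence argument in part~(a) and the verification that the cross-terms in $\tilde\tau$ of part~(b) turn the evaluations into Poisson morphisms are both essentially routine but require careful sign bookkeeping, which I expect to account for the bulk of Appendix~\ref{app:technical}.
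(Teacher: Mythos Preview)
Your approach to part~(a) via the word-length filtration and PBW is sound and essentially coincides with what the paper does. The paper factors $\CCR = \QQQ_{\mathrm{lin}}\circ\mathfrak{Heis}$ through the Heisenberg Lie algebra functor (which is trivially homotopical, since $\mathfrak{Heis}(V,\tau)=V_\bbC\oplus\bbC$ as a complex) and then cites that the unital universal enveloping functor $\QQQ_{\mathrm{lin}}$ is homotopical; the latter is precisely a PBW/spectral-sequence statement of the kind you sketch. So nothing is lost here.

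Part~(b), however, has a genuine gap. In $\PCh_\bbR$ a morphism $f:(W,\sigma)\to(V,\tau')$ must satisfy the \emph{strict} identity $\tau'\circ(f\wedge f)=\sigma$. If $\ev_0:(V\otimes I,\tilde\tau)\to(V,\tau)$ is a Poisson morphism, then $\tilde\tau$ is completely determined as the pullback $\tau\circ(\ev_0\wedge\ev_0)$; in particular $\tilde\tau(v\otimes e_1,w\otimes e_1)=\tau(0,0)=0$. But for $\ev_1$ to be a Poisson morphism to $(V,\tau+\partial\rho)$ you need $\tilde\tau(v\otimes e_1,w\otimes e_1)=(\tau+\partial\rho)(v,w)$. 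These two requirements are incompatible unless $\tau+\partial\rho=0$, so no choice of cross-terms can rescue the construction. The underlying issue is that $\PCh_\bbR$ is too rigid: two non-equal Poisson structures on $V$ cannot both be pushed forward from a common cylinder object \emph{within} $\PCh_\bbR$.

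The paper circumvents this by working one step up in the factorization, at the level of unital Lie $\ast$-algebras. There one builds $H_{(V,\tau,\rho)}=V_\bbC\oplus D\oplus\bbC$ with $D$ the acyclic two-term complex, and lets the Lie bracket land partly in $D$ via $\rho$. Quotienting $D$ onto different multiples of the unit then produces $\mathfrak{Heis}(V,\tau)$ and $\mathfrak{Heis}(V,\tau+\partial\rho)$ as genuine quotients, with quotient maps that are visibly quasi-isomorphisms; applying $\QQQ_{\mathrm{lin}}$ gives the zig-zag. The extra summand $D$ plays the role your $\xi$-term was meant to play, but because the Lie bracket (unlike the Poisson structure) is allowed to take values in it, the two projections are honest morphisms. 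Your cylinder idea can be salvaged if you move it to $\astdguLie_\bbC$ rather than $\PCh_\bbR$; as stated, it does not go through.
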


In our context of linear gauge theories from Section \ref{sec:Poisson}, 
we immediately obtain the following crucial result as a direct consequence 
of Proposition \ref{prop:CCRhomotopical} and Corollary \ref{cor:tauinv}.
\begin{cor}\label{cor:CCRindependenttriv}
Suppose that $ \Lambda^\pm,\widetilde{\Lambda}^\pm \in \hom(\LLL_{\pc/\fc}(M),\LLL_{\pc/\fc}(M))_1$ are two  
compatible pairs of retarded/advanced trivializations 
and denote the corresponding unshifted Poisson structures by
$\tau,\widetilde{\tau}: \LLL(M)\wedge \LLL(M) \to \bbR$.
Then the two CCR quantizations $\CCR(\LLL(M),\tau) \simeq \CCR(\LLL(M),\widetilde{\tau})$ 
are equivalent via a zig-zag of weak equivalences in $\astdgAlg_\bbC$. 
\end{cor}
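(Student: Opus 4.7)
The plan is to chain together the two already-established results cited in the statement, leaving no genuine work to be done beyond bookkeeping. First I would invoke Corollary \ref{cor:tauinv} on the two compatible pairs $\Lambda^\pm$ and $\widetilde{\Lambda}^\pm$ to extract a $1$-chain $\rho \in \hom(\bigwedge^2 \LLL(M), \bbR)_1$ such that $\widetilde{\tau} - \tau = \partial \rho$. Nothing beyond the already proven statement is needed: Corollary \ref{cor:tauinv} is tailored precisely to manufacture such a $\rho$ out of the data of two compatible pairs of trivializations, by pushing the homotopy $\lambda^\pm$ between $\Lambda^\pm$ and $\widetilde{\Lambda}^\pm$ through $\Lambda \mapsto \langle\,\cdot\,,\Lambda(\cdot)\rangle$ and antisymmetrizing.

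Next I would apply Proposition \ref{prop:CCRhomotopical} b) to the unshifted Poisson complex $(\LLL(M), \tau) \in \PCh_\bbR$ together with the $1$-chain $\rho$ just produced. This directly furnishes an intermediate object $A_{(\LLL(M),\tau,\rho)} \in \astdgAlg_\bbC$ together with a zig-zag
\[
\xymatrix@C=2em{
\CCR(\LLL(M), \tau) & \ar[l]_-{\sim} A_{(\LLL(M),\tau,\rho)} \ar[r]^-{\sim} & \CCR\bigl(\LLL(M), \tau + \partial\rho\bigr)
}
\]
of weak equivalences in $\astdgAlg_\bbC$. Using the identity $\tau + \partial \rho = \widetilde{\tau}$ from the first step, the right-hand term is identically $\CCR(\LLL(M), \widetilde{\tau})$, which completes the required equivalence.

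There is no genuine obstacle here: the substantive content has been done in Corollary \ref{cor:tauinv}, which upgrades the ``uniqueness up to contractible choices'' of retarded/advanced trivializations to a chain-homotopy-level comparison of their induced unshifted Poisson structures, and in Proposition \ref{prop:CCRhomotopical} b), whose proof is deferred to Appendix \ref{app:technical} and exhibits the interpolating algebra $A_{(V,\tau,\rho)}$. The role of this corollary is simply to compose these two inputs in order to record the homotopical invariance of the CCR quantization under a change of compatible pair of retarded/advanced trivializations.
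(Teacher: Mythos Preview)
Your proposal is correct and matches the paper's own argument exactly: the paper states the corollary as ``a direct consequence of Proposition \ref{prop:CCRhomotopical} and Corollary \ref{cor:tauinv}'' without further elaboration, and your write-up simply spells out how these two inputs compose.
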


\begin{ex}\label{ex:quantumKG}
Recall from Section \ref{subsec:KG} 
the unshifted Poisson complex $(\LLL^{\KG}_{~}(M),\tau^\KG)\in \PCh_\bbR$
for Klein-Gordon theory on an oriented and time-oriented globally hyperbolic Lorentzian manifold 
$M$, see in particular \eqref{eqn:LLLKG} and \eqref{eqn:tauKG}.
Observe that the quotient map $\LLL^{\KG}_{~}(M) \to H_0(\LLL^{\KG}_{~}(M)) = 
\Omega^0_\cc(M)/(\square-m^2)\Omega^0_\cc(M)$ to the vector space of linear on-shell
observables (regarded as a chain complex concentrated in degree $0$) is a quasi-isomorphism
and that the unshifted Poisson structure \eqref{eqn:tauKG} descends to the quotient
because of $G\,(\square-m^2)=0$. Hence, we obtain a weak equivalence
$(\LLL^{\KG}_{~}(M),\tau^\KG) \stackrel{\sim}{\to} (H_0(\LLL^{\KG}_{~}(M)),\tau^\KG)$ in $\PCh_\bbR$
from our original unshifted Poisson complex to an unshifted Poisson complex concentrated in degree $0$,
which is simply the ordinary Poisson vector space of linear on-shell observables. As a consequence
of Proposition \ref{prop:CCRhomotopical} a), it follows that our 
CCR quantization $\CCR(\LLL^{\KG}_{~}(M),\tau^{\KG})$
is weakly equivalent in $\astdgAlg_\bbC$ to the ordinary CCR quantization 
$\CCR(H_0(\LLL^{\KG}_{~}(M)),\tau^{\KG})$ of Klein-Gordon theory 
as a unital and associative $\ast$-algebra (regarded as a differential graded $\ast$-algebra 
concentrated in degree $0$).
\end{ex}

\begin{ex}\label{ex:quantumYM}
Recall from Section \ref{subsec:YM} 
the unshifted Poisson complex $(\LLL^{\YM}_{~}(M),\tau^\YM)\in \PCh_\bbR$
for linear Yang-Mills theory on an oriented and time-oriented globally hyperbolic Lorentzian manifold 
$M$, see in particular \eqref{eqn:LLLYM} and \eqref{eqn:tauYM}. As a consequence
of Corollary \ref{cor:CCRindependenttriv}, the CCR quantization $\CCR(\LLL^{\YM}_{~}(M),\tau^\YM)$
for our particular choice of retarded/advanced trivializations given in Proposition \ref{propo:YMtrivializations}
is equivalent via a zig-zag of weak equivalences in $\astdgAlg_\bbC$ to the CCR quantization 
$\CCR(\LLL^{\YM}_{~}(M),\widetilde{\tau} = \tau^\YM + \partial\rho)$ for any other choice, see also Remark
\ref{rem:YMhomotopy}. Thus, we obtain a consistent quantization prescription for linear Yang-Mills theory.
\sk

To be very explicit, let us also write out the (graded) commutation relations
of the generators of $\CCR(\LLL^{\YM}_{~}(M),\tau^\YM)$. We use a suggestive notation and
denote the smeared linear quantum observables for
gauge fields by $\widehat{A}(\varphi)$, for $\varphi\in\LLL^\YM_{0}(M)=\Omega^1_\cc(M)$,
the ones for ghost fields by $\widehat{c}(\chi)$, for $\chi\in\LLL^\YM_{-1}(M) = \Omega^0_\cc(M)$,
and the ones for antifields by $\widehat{A}^\ddagger(\alpha)$ and $\widehat{c}^\ddagger(\beta)$, 
for $\alpha\in\LLL^\YM_1(M)=\Omega^1_\cc(M)$ and $\beta \in \LLL^\YM_2(M) = \Omega^0_\cc(M)$.
Then  \eqref{eqn:tauYM} and \eqref{eqn:CCR} yield the following non-vanishing 
(graded) commutation relations
\begin{subequations}
\begin{flalign}
\big[\widehat{A}(\varphi_1),\widehat{A}(\varphi_2)\big] \,&=\, i \,\int_M \varphi_1\wedge\ast G\varphi_2~\oone 
\quad,\qquad\\
\big[\widehat{A}^\ddagger(\alpha),\widehat{c}(\chi)\big]\,&=\, -i\, \int_M \alpha \wedge \ast G\dd\chi ~\oone \, =\, \big[\widehat{c}(\chi),\widehat{A}^\ddagger(\alpha)\big] \quad,
\end{flalign}
\end{subequations}
for all $\varphi_1,\varphi_2\in\LLL^\YM_{0}(M)=\Omega^1_\cc(M)$, $\alpha\in\LLL^\YM_1(M)=\Omega^1_\cc(M)$
and $\chi\in\LLL^\YM_{-1}(M) = \Omega^0_\cc(M)$.
\end{ex}

%%%%%%%%%%%%%%%%%%%%%%%%%%%%%%%%%%%%%%%%%%%%%%%%
%%%%%%%%%%%%%%%%%%%%%%%%%%%%%%%%%%%%%%%%%%%%%%%%

\section{\label{sec:AQFT}Functoriality and homotopy AQFT axioms}
Our results and constructions in the previous sections considered a fixed 
oriented and time-oriented globally hyperbolic Lorentzian manifold $M$.
In order to obtain an algebraic quantum field theory (AQFT), in the original
sense of Haag and Kastler \cite{HK} or in the more modern
sense of Brunetti, Fredenhagen and Verch \cite{BFV}, we have to analyze
functoriality of our constructions with respect to a suitable
class of spacetime embeddings $f:M\to N$. The relevant categories 
are defined as follows:
\begin{itemize}
\item $\Loc$ denotes the category of oriented and time-oriented globally hyperbolic Lorentzian manifolds 
(of a fixed dimension $m\geq 2$) with morphisms $f: M\to N$ given by all orientation and
time-orientation preserving isometric embeddings whose image $f(M)\subseteq N$
is open and causally convex.

\item For any $\overline{M}\in\Loc$, we denote by $\Loc/\overline{M}$ the corresponding slice category. 
Its objects are all $\Loc$-morphisms $m : M\to \overline{M}$ with target $\overline{M}$ 
and its morphisms $f: (m :M\to \overline{M})\to (n: N\to \overline{M})$ 
are all commutative triangles
\begin{flalign}
\xymatrix{
\ar[dr]_-{m}M \ar[rr]^-{f} & & N \ar[dl]^-{n}\\
&\overline{M}&
}
\end{flalign}
in $\Loc$. Note that $\Loc/\overline{M}\simeq \mathbf{COpen}(\overline{M})$ is equivalent to the category of
all causally convex open subsets $U\subseteq \overline{M}$ with morphisms given by subset inclusion.
\end{itemize}

As observed in Examples \ref{ex:quantumKG} and \ref{ex:quantumYM},
our homotopy theoretical constructions naturally define {\em differential graded} $\ast$-algebras
of quantum observables for each spacetime $M$. As a consequence, the relevant
variants of AQFT to describe such models should take values in
the model category $\Ch_\bbC$ of chain complexes in contrast to the usual category
$\Vec_\bbC$ of vector spaces. Based on the recent operadic approach to AQFT
\cite{AQFToperad,involution}, such algebraic structures were systematically investigated in \cite{hAQFT}.
One of the outcomes of these studies is a concept of {\em homotopy 
AQFTs}, i.e.\ homotopy-coherent AQFTs that are obtained by a resolution of the relevant operad.
Since in the present paper our ground field $\bbC$ has characteristic $0$,
the strictification theorem of \cite{hAQFT} implies that every homotopy AQFT
can be strictified and hence all possible variants of homotopy AQFT are equivalent.
To describe our concrete examples in the present paper, it is sufficient
and very convenient to consider the following semi-strict model for homotopy AQFTs, 
where both functoriality and Einstein causality hold strictly, but the time-slice axiom
 is replaced by an appropriate homotopical analog.
\begin{defi}\label{def:hAQFT}
A (semi-strict) {\em homotopy AQFT on $\Loc$}
is a functor $\AAA: \Loc \to \astdgAlg_\bbC$ such that the following hold true: 
\begin{itemize}
\item[(i)] {\em Strict Einstein causality axiom:} For every pair $(f_1 : M_1 \to N, f_2 : M_2\to N)$ of $\Loc$-morphisms
with causally disjoint images, the chain map 
\begin{flalign}
\big[ \AAA(f_1)(-), \AAA(f_2)(-)\big] \,:\, \AAA(M_1) \otimes \AAA(M_2)~ \longrightarrow~ \AAA(N)
\end{flalign} 
is zero, where $[-,-] := \mu - \mu \,\gamma: \AAA(N) \otimes \AAA(N) \to \AAA(N)$ 
denotes the (graded) commutator in $\AAA(N)$.

\item[(ii)] {\em Homotopy time-slice axiom:} For every Cauchy morphism, i.e.\ a $\Loc$-morphism 
$f: M \to N$ such that
the image $f(M)\subseteq N$ contains a Cauchy surface of $N$,
the map $\AAA(f): \AAA(M) \overset{\sim}{\to} \AAA(N)$ is a weak equivalence in $\astdgAlg_\bbC$. 
\end{itemize}
For any $\overline{M} \in \Loc$, a {\em homotopy AQFT on $\overline{M}$} is a functor 
$\AAA: \Loc/\overline{M} \to \astdgAlg_\bbC$ on the slice category such that the evident
analogs of Einstein causality and time-slice hold true.
\end{defi}

\begin{rem}\label{rem:sliceforget}
Homotopy AQFTs on $\Loc$ are chain complex analogs of
theories in the sense of Brunetti, Fredenhagen and Verch  \cite{BFV},
while homotopy AQFTs on a fixed $\overline{M}\in\Loc$ are chain complex
analogs of theories in the sense of Haag and Kastler \cite{HK}.
Note that every homotopy AQFT $\AAA: \Loc \to \astdgAlg_\bbC$
on $\Loc$ defines a homotopy AQFT
$\AAA_{\overline{M}} := \AAA~\mathfrak{U}_{\overline{M}} : \Loc/\overline{M}\to \astdgAlg_\bbC$ 
on every $\overline{M}\in\Loc$ via precomposition with the forgetful functor $\mathfrak{U}_{\overline{M}} 
: \Loc/\overline{M}\to \Loc$. Explicitly, the latter is given on objects by $(m:M\to \overline{M})\mapsto M$
and on morphisms by $(f: (m:M\to \overline{M})\to (n : N\to \overline{M}))\mapsto (f :M\to N)$. 
\end{rem}

In the following let us assume that, in the context of Definition \ref{def:fieldcomplex},
the vector bundles $F_n\to M$ with fiber metrics $h_n$ and also the linear differential operator $Q$
are natural on $\Loc$. Using the associated pullbacks $f^\ast : \FFF_n(N)\to \FFF_n(M)$
of sections along $\Loc$-morphisms $f: M\to N$, this implies that
the assignment $M\mapsto \FFF(M)$ of field complexes \eqref{eqn:fieldcomplex} 
is contravariantly functorial, i.e.\
\begin{flalign}
\FFF\,:\,\Loc^{\op} ~\longrightarrow~\Ch_\bbR\quad.
\end{flalign}
Assuming further that the action \eqref{eqn:action} (or equivalently the linear differential
operator $P$) is natural implies that the 
assignment $M\mapsto \Sol(M)$ of solution complexes \eqref{eqn:solutioncomplex} 
is contravariantly functorial too, i.e.\
\begin{flalign}
\Sol\,:\,\Loc^{\op} ~\longrightarrow~\Ch_\bbR\quad.
\end{flalign}
Using also pushforwards $f_\ast : \FFF_{n,\cc}(M)\to \FFF_{n,\cc}(N)$
of compactly supported sections along $\Loc$-morphisms $f:M\to N$,
one observes that the assignment $M\mapsto \LLL(M)$ of complexes of linear observables
\eqref{eqn:LLLcomplex} is covariantly functorial, i.e.\
\begin{flalign}
\LLL\,:\,\Loc ~\longrightarrow~\Ch_\bbR\quad,
\end{flalign}
and that the integration pairings \eqref{eqn:LLLcomplexevaluation}
are natural in the sense that the diagram
\begin{flalign}
\xymatrix@C=4em{
\ar[d]_-{\id\otimes f^\ast } \LLL(M) \otimes \Sol(N) \ar[r]^-{f_\ast \otimes \id} & \LLL(N)\otimes \Sol(N)\ar[d]^-{\langle\,\cdot\, ,\,\cdot\,\rangle_N^{}}\\
\LLL(M)\otimes \Sol(M) \ar[r]_-{\langle\, \cdot\, ,\, \cdot\, \rangle_M^{}}&\bbR
}
\end{flalign}
commutes, for all $\Loc$-morphisms $f:M\to N$. Furthermore, one immediately observes that
the chain maps $j$ in \eqref{eqn:jmapping} are natural in the sense that the diagram
\begin{flalign}
\xymatrix@C=4em{
\ar[d]_-{j_M^{}} \LLL(M) \ar[r]^-{f_\ast} & \LLL(N)\ar[d]^-{j_N^{}}\\
\Sol(M)[1] &\ar[l]^-{f^\ast} \Sol(N)[1]
}
\end{flalign}
commutes, for all $\Loc$-morphisms $f:M\to N$, and that the 
shifted Poisson structures $\Upsilon$ in \eqref{eqn:shiftedPoisson}
are natural in the sense that the diagram
\begin{flalign}
\xymatrix@C=4em{
\ar[d]_-{f_\ast\otimes f_\ast} \LLL(M)\otimes \LLL(M) \ar[r]^-{\Upsilon_M^{}}& \bbR[1]\ar@{=}[d]\\
\LLL(N)\otimes \LLL(N)\ar[r]_-{\Upsilon_N^{}}& \bbR[1]
}
\end{flalign}
commutes, for all $\Loc$-morphisms $f:M\to N$.
\begin{rem}\label{rem:functorKGandYM}
Note that all our assumptions above on naturality of vector bundles and differential operators 
are satisfied for our examples of interest given by Klein-Gordon theory 
(see Examples \ref{ex:fieldKG} and \ref{ex:solutionKG}) and linear Yang-Mills theory 
(see Examples \ref{ex:fieldYM} and \ref{ex:solutionYM}).
In these examples $f^\ast$ is simply given by pullback of differential forms.
\end{rem}

\begin{rem}\label{rem:sliceforget2}
Using as in Remark \ref{rem:sliceforget} the forgetful functor
$\mathfrak{U}_{\overline{M}} : \Loc/\overline{M} \to \Loc$, all functors and natural transformations
on $\Loc$ that we introduced above can be restricted to the slice category
$\Loc/\overline{M}$, for each $\overline{M}\in\Loc$. This restricted data is sufficient 
when one attempts to construct only a homotopy AQFT on a fixed $\overline{M}\in\Loc$, 
in contrast to a homotopy AQFT on $\Loc$. 
\end{rem}

Our approach to construct unshifted Poisson structures (Definition \ref{def:unshiftedPoisson})
in terms of (compatible pairs of) retarded/advanced trivializations (Definition \ref{def:retadvtrivialization})
has to be supplemented by a suitable naturality axiom. Because the strength of 
our final result will depend on whether we work with $\Loc$ or a slice category $\Loc/\overline{M}$,
for some $\overline{M}\in\Loc$, we shall state our definitions and results below for both cases.
\begin{defi}\label{def:naturaltrivializations}
Let $\CC$ be either $\Loc$ or $\Loc/\overline{M}$, for any $\overline{M}\in\Loc$.
A {\em $\CC$-natural retarded/advanced trivialization} is a family 
\begin{subequations}
\begin{flalign}
\Lambda^\pm_{} \,:=\, \big\{\Lambda^\pm_M \in \hom(\LLL_{\pc/\fc}(M),\LLL_{\pc/\fc}(M))_1\big\}_{M\in\CC}^{}
\end{flalign}
of retarded/advanced trivializations for each $M\in\CC$, such that
\begin{flalign}
f^\ast\, (j_{\pc/\fc}\, \Lambda^\pm_N\,\iota)\, f_\ast=\, (j_{\pc/\fc}\, \Lambda^\pm_M\,\iota)\quad,
\end{flalign}
\end{subequations}
for all $\CC$-morphisms $f:M\to N$, see also \eqref{eqn:jfactorization}.
\end{defi}

\begin{rem}\label{rem:LocstrongerthanLocM}
Using as in Remark \ref{rem:sliceforget} the forgetful functor
$\mathfrak{U}_{\overline{M}} : \Loc/\overline{M} \to \Loc$, any $\Loc$-natural retarded/advanced trivialization
may be restricted to a $\Loc/\overline{M}$-natural retarded/advanced trivialization, for each $\overline{M}\in\Loc$.
Thus, it is in general harder to construct $\Loc$-natural retarded/advanced trivializations 
than $\Loc/\overline{M}$-natural ones.
\end{rem}

\begin{propo}\label{prop:naturalKGandYM}
\begin{itemize}
\item[a)] The unique retarded/advanced trivializations for Klein-Gordon theory
given in Proposition \ref{propo:KG} define a $\Loc$-natural retarded/advanced trivialization.
\item[b)] The retarded/advanced trivializations for linear Yang-Mills theory
given in Proposition \ref{propo:YMtrivializations} define a $\Loc$-natural retarded/advanced
trivialization.
\end{itemize}
\end{propo}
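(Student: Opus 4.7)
The plan is to verify, for both Klein-Gordon and linear Yang-Mills theory, the naturality condition from Definition \ref{def:naturaltrivializations}, namely that for every $\Loc$-morphism $f:M\to N$ the identity $f^\ast\,(j_{\pc/\fc}\,\Lambda^\pm_N\,\iota)\,f_\ast = j_{\pc/\fc}\,\Lambda^\pm_M\,\iota$ holds. Since $\iota$ and $j_{\pc/\fc}$ are canonical inclusions, this reduces in each chain degree to an equality of linear operators from compactly supported to arbitrary sections on $M$, built from the trivialization data, which in both examples consists of the Green's operators $G^\pm$ and, in the Yang-Mills case, also of $\dd$ and $\delta$. The proof therefore rests on two naturality lemmas that I would establish first.

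The main lemma is the naturality of retarded/advanced Green's operators on $\Loc$: for any natural formally self-adjoint Green hyperbolic operator $P$ (concretely $\square-m^2$ on scalars or $\square$ on $1$-forms) and every $\Loc$-morphism $f:M\to N$, the identity $f^\ast\,G^\pm_N\,f_\ast = G^\pm_M$ holds as operators $\FFF_\cc(M)\to \FFF(M)$. I would prove this by showing that $f^\ast\,G^\pm_N\,f_\ast$ satisfies the three defining properties of a retarded/advanced Green's operator on $M$, listed in Section \ref{subsec:Green}, and then invoking the uniqueness statement recalled there. Properties~(i) and~(ii) follow from $f^\ast f_\ast = \id$ on compactly supported sections (as $f$ is an open embedding) together with the compactly supported identity $P_N\,f_\ast = f_\ast\,P_M$ (from locality of $P_N$ combined with the naturality $f^\ast P_N = P_M f^\ast$). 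Property~(iii), the causal support condition, is where the geometry of $\Loc$-morphisms enters: causal convexity of $f(M)\subseteq N$ ensures that any future-directed causal curve in $N$ from $f(\supp\varphi)$ to a point of $f(M)$ lies entirely within $f(M)$ and hence lifts to a future-directed causal curve in $M$, yielding the bound $\supp(f^\ast\,G^+_N\,f_\ast\,\varphi) \subseteq J^+_M(\supp\varphi)$. The auxiliary lemma is that pushforward commutes with $\dd$ and $\delta$ on compactly supported forms, i.e.\ $f_\ast\,\dd = \dd\,f_\ast$ and $f_\ast\,\delta = \delta\,f_\ast$; this follows from locality of these operators, naturality of $\dd$ under smooth maps, and compatibility of $f$ with the Hodge $\ast$ (since $f$ is an orientation- and time-orientation-preserving isometric embedding).

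With both lemmas in place, part~a) is immediate: the Klein-Gordon trivialization has the single non-trivial component $\Lambda^\pm_0 = G^\pm$ for $\square-m^2$, and the naturality condition is precisely the main lemma applied to this operator. For part~b), the three non-trivial components $\Lambda^\pm_{-1} = -G^\pm\,\dd$, $\Lambda^\pm_0 = G^\pm$ and $\Lambda^\pm_1 = -\delta\,G^\pm$ of the Yang-Mills trivialization, with $G^\pm$ the Green's operator for $\square$ on $1$-forms, are each verified by combining the two lemmas; for instance $f^\ast(-G^\pm_N\,\dd_N)\,f_\ast = -f^\ast\,G^\pm_N\,f_\ast\,\dd_M = -G^\pm_M\,\dd_M$, and the remaining two degrees are analogous. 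The only real obstacle I anticipate is the careful verification of the support property in the main lemma, where extension by zero, pullback, and the causal structures on both $M$ and $N$ must be tracked simultaneously; this step is precisely where causal convexity of $\Loc$-morphisms is indispensable.
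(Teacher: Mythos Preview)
Your proposal is correct and follows essentially the same approach as the paper: both reduce the claim to the naturality identity $f^\ast\,G^\pm_N\,f_\ast = G^\pm_M$ for the relevant Green's operators, combined (in the Yang-Mills case) with the compatibility of $\dd$ and $\delta$ with pushforward/pullback along $\Loc$-morphisms. The only difference is that the paper simply cites this naturality of Green's operators as a standard result from the literature, whereas you sketch its proof via the uniqueness of retarded/advanced Green's operators and causal convexity; your treatment is thus more self-contained but otherwise identical in strategy.
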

\begin{proof}
This is an immediate consequence of the standard result 
that the retarded/advanced Green's operators $G^\pm_M : \Omega_\cc^p(M)\to \Omega^p(M)$ 
for the d'Alembert operator $\square$ or the Klein-Gordon operator $\square-m^2$
satisfy the naturality condition $f^\ast\, G_N^\pm\, f_\ast = G_M^{\pm}$, for all
$\Loc$-morphisms $f:M\to N$. See \cite[Lemma 3.2]{BGproc} for a proof.
\end{proof}

\begin{defi}\label{def:naturalunshiftedPoisson}
Let $\CC$ be either $\Loc$ or $\Loc/\overline{M}$, for any $\overline{M}\in\Loc$.
A {\em $\CC$-natural unshifted Poisson structure} on the solution complex functor
$\Sol: \CC^{\op}\to \Ch_\bbR$ is a $0$-cycle $\tau\in  \hom({\textstyle\bigwedge^2\LLL},\bbR)_0$ 
in the chain complex
\begin{flalign}\label{eqn:homnaturalcomplex}
\hom\big({\textstyle\bigwedge^2\LLL},\bbR\big)\,:=\, \lim_{M\in\CC^\op} \hom\big({\textstyle\bigwedge^2\LLL(M)},\bbR\big)\,\in\, \Ch_\bbR \quad,
\end{flalign}
where $\LLL : \CC\to \Ch_\bbR$ is the functor assigning chain complexes of linear observables.
A {\em $\CC$-natural homotopy} between two $\CC$-natural unshifted Poisson structures
$\tau,\widetilde{\tau} \in \hom({\textstyle\bigwedge^2\LLL},\bbR)_0$
is a $1$-chain $\rho\in\hom({\textstyle\bigwedge^2\LLL},\bbR)_1 $, such that
$\widetilde{\tau}-\tau = \partial \rho$.
\end{defi}

\begin{rem}
We decided to state Definition \ref{def:naturalunshiftedPoisson} in a rather
abstract form because this will become useful later. From a more concrete perspective, 
the data of a $\CC$-natural unshifted Poisson structure $\tau \in \hom({\textstyle\bigwedge^2\LLL},\bbR)_0$ 
is given by a family
\begin{subequations}
\begin{flalign}
\big\{\tau_M^{} : \LLL(M)\wedge\LLL(M)\to\bbR\big\}_{M\in\CC}^{}
\end{flalign}
of chain maps, i.e.\ unshifted Poisson structures for each $M\in\CC$,
that satisfies the naturality condition
\begin{flalign}
\xymatrix@C=4em{
\ar[d]_-{f_\ast\wedge f_\ast}\LLL(M)\wedge\LLL(M)\ar[r]^-{\tau_M^{}} & \bbR\ar@{=}[d]\\
\LLL(N)\wedge\LLL(N) \ar[r]_-{\tau_N^{}} & \bbR
}
\end{flalign}
\end{subequations}
for all $\CC$-morphisms $f:M\to N$.  Similarly, a $\CC$-natural homotopy between
$\tau$ and $\widetilde{\tau}$  is a family $\{\rho_M^{} \in \hom(\bigwedge^2 \LLL(M),\bbR)_1\}_{M\in\CC}$
of $1$-chains, such that $\widetilde{\tau}_M^{} - \tau_M^{} = \partial\rho_M^{}$, for all $M\in\CC$,
and $\rho_N^{} \,(f_\ast\wedge f_\ast) = \rho_M^{}$, for all $\CC$-morphisms $f:M\to N$.
Similarly to Remark \ref{rem:LocstrongerthanLocM}, we note that $\Loc$-natural unshifted Poisson
structures and their homotopies are harder to construct than $\Loc/\overline{M}$-natural ones.
\end{rem}

\begin{lem}\label{lem:naturalproperties}
\begin{itemize}
\item[a)] Let $\CC$ be either $\Loc$ or $\Loc/\overline{M}$, for any $\overline{M}\in\Loc$,
and let $\Lambda^\pm$ be a $\CC$-natural compatible pair of
retarded/advanced trivializations. Then the component-wise
construction in Definition \ref{def:unshiftedPoisson}
defines a $\CC$-natural unshifted Poisson structure $\tau$.

\item[b)] Let $\CC= \Loc/\overline{M}$, for any $\overline{M}\in\Loc$.
Then the chain complex \eqref{eqn:homnaturalcomplex} is isomorphic
to the mapping complex $\hom(\bigwedge^2 \LLL(\overline{M}),\bbR)$ corresponding to $\overline{M}$.
As a consequence, every $\Loc/\overline{M}$-natural unshifted Poisson structure
$\tau$ is uniquely determined by an unshifted Poisson structure $\tau_{\overline{M}}^{}$
on $\overline{M}$ and every $\Loc/\overline{M}$-natural homotopy 
$\rho$ is uniquely determined by a homotopy $\rho_{\overline{M}}^{}$ on $\overline{M}$.

\item[c)] Suppose that $\Lambda^\pm$ and $\widetilde{\Lambda}^\pm$ are two $\Loc/\overline{M}$-natural 
compatible pairs of retarded/advanced trivializations, for any $\overline{M} \in \Loc$. 
Then the corresponding $\Loc/\overline{M}$-natural unshifted Poisson structures $\tau,\widetilde{\tau}$
from item~a) are homotopic, i.e.\ $\widetilde{\tau}-\tau = \partial\rho$ 
for some $\Loc/\overline{M}$-natural homotopy $\rho$.
\end{itemize}
\end{lem}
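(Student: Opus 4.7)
The plan is to dispatch the three parts in order: a direct naturality check for (a), a categorical collapse of the defining limit for (b), and an application of (b) to promote Corollary \ref{cor:tauinv} from a single spacetime to slice-natural families for (c).

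For part~(a), I would unpack Definition \ref{def:unshiftedPoisson} as $\tau_M = \langle\,\cdot\,,\,\cdot\,\rangle_M \circ (\id \otimes \Lambda_M)$ with $\Lambda_M = j_{\pc}\,\Lambda^+_M\,\iota - j_{\fc}\,\Lambda^-_M\,\iota$ as in \eqref{eqn:causalpropagator}. Definition \ref{def:naturaltrivializations} applied to both signs yields $f^\ast\,\Lambda_N\,f_\ast = \Lambda_M$ for every $\CC$-morphism $f: M \to N$, while the naturality of the integration pairings \eqref{eqn:LLLcomplexevaluation} gives the adjunction-type identity $\langle\,\cdot\,,\,\cdot\,\rangle_N\,(f_\ast \otimes \id) = \langle\,\cdot\,,\,\cdot\,\rangle_M\,(\id \otimes f^\ast)$. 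Chaining these two identities produces $\tau_N \circ (f_\ast \wedge f_\ast) = \tau_M$, which is the required naturality of $\tau$.

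For part~(b), the key observation is that $\id_{\overline{M}} : \overline{M} \to \overline{M}$ is a terminal object of $\Loc/\overline{M}$, since every object $(m : M \to \overline{M})$ admits a unique morphism to it, namely the triangle with slant $m$. Dualising, this object is initial in $(\Loc/\overline{M})^\op$, and a standard categorical fact states that any limit indexed by a category with an initial object is canonically isomorphic to the value of the diagram at that object. Applied to \eqref{eqn:homnaturalcomplex} this produces the isomorphism $\hom(\bigwedge^2 \LLL, \bbR) \cong \hom(\bigwedge^2 \LLL(\overline{M}), \bbR)$ via $\tau \mapsto \tau_{\overline{M}}$, whose inverse sends $\tau_{\overline{M}}$ to the family $\{\tau_{\overline{M}} \circ (m_\ast \wedge m_\ast)\}_{(m : M \to \overline{M})}$; naturality of the lifted family uses only uniqueness of morphisms to the terminal object. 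Since this is an isomorphism of chain complexes it holds degree-wise, and in particular describes $\Loc/\overline{M}$-natural homotopies in terms of ordinary $1$-chains on $\overline{M}$.

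For part~(c), I would combine (b) with Corollary \ref{cor:tauinv}. Restricting the two $\Loc/\overline{M}$-natural compatible pairs $\Lambda^\pm$ and $\widetilde{\Lambda}^\pm$ to the terminal object $\overline{M}$, Corollary \ref{cor:tauinv} produces a $1$-chain $\rho_{\overline{M}} \in \hom(\bigwedge^2 \LLL(\overline{M}), \bbR)_1$ with $\widetilde{\tau}_{\overline{M}} - \tau_{\overline{M}} = \partial \rho_{\overline{M}}$. Part~(b) applied to $1$-chains then lifts $\rho_{\overline{M}}$ uniquely to a $\Loc/\overline{M}$-natural $1$-chain $\rho$, and the identity $\widetilde{\tau} - \tau = \partial \rho$ holds at $\overline{M}$ by construction and hence everywhere by naturality. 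No genuine obstacle is anticipated; the subtlest bookkeeping point is the variance in (b), and I would also note that this argument has no analogue on $\Loc$ itself, since $\Loc$ lacks a terminal object, which is precisely what forces the restricted uniqueness statement for linear Yang-Mills advertised in the introduction.
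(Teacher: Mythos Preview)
Your proposal is correct and follows essentially the same approach as the paper: part~(a) by observing that all constituents of \eqref{eqn:unshiftedPoisson} are natural, part~(b) via the terminal object of $\Loc/\overline{M}$ collapsing the limit \eqref{eqn:homnaturalcomplex}, and part~(c) by combining (b) with Corollary~\ref{cor:tauinv}. Your write-up is more explicit than the paper's (which dispatches all three parts in a couple of sentences), but the underlying argument is identical.
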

\begin{proof}
Item~a) is immediate because the definition of the unshifted Poisson structure in \eqref{eqn:unshiftedPoisson}
involves only natural maps. Item~c) follows from item~b) and Corollary \ref{cor:tauinv}. It thus remains
to prove item~b), which follows immediately from the fact that the slice category
$\Loc/\overline{M}$ has a terminal object $(\id: \overline{M}\to \overline{M})$, hence
$(\Loc/\overline{M})^\op$ has an initial object. The limit in \eqref{eqn:homnaturalcomplex} 
is then isomorphic to the chain complex $\hom(\bigwedge^2 \LLL(\overline{M}),\bbR)$ corresponding to this object. 
\end{proof}

\begin{rem}\label{rem:issuesLocnatural}
It is currently unclear to us if the analog of Lemma
\ref{lem:naturalproperties}~c) also holds true for the category $\Loc$.
Let us explain this issue in more detail. Suppose that
$ \Lambda^\pm,\widetilde{\Lambda}^\pm$ are two $\Loc$-natural 
compatible pairs of retarded/advanced trivializations and denote the
corresponding $\Loc$-natural unshifted Poisson structures by $\tau,\widetilde{\tau}$.
By Corollary \ref{cor:tauinv}, we obtain that for every $M\in\Loc$ there exists
a $1$-chain $\rho_M^{}\in\hom(\bigwedge^2\LLL(M),\bbR)_1$ such that
$\widetilde{\tau}_M^{} - \tau_M^{} = \partial\rho_M^{}$. However, it is unclear 
whether such homotopies can be chosen to be $\Loc$-natural as $\Loc$ has no
terminal object. (The terminal object in $\Loc/\overline{M}$ 
was crucial to prove Lemma \ref{lem:naturalproperties}~b) and hence c).)
As a consequence, it is currently unclear to us if the particular model for 
linear quantum Yang-Mills theory that we will construct below is, 
up to natural weak equivalences, the only possibility within our approach. 
In particular, we can not exclude the existence of 
a $\Loc$-natural compatible pair of retarded/advanced trivializations
different from the one in Proposition \ref{prop:naturalKGandYM}~b), 
that leads to a non-homotopic $\Loc$-natural unshifted Poisson structure
and hence potentially to a non-equivalent quantization.
\end{rem}

\begin{ex}\label{ex:KGnatural}
Let us apply our general results to Klein-Gordon theory, see 
Examples \ref{ex:fieldKG} and \ref{ex:solutionKG} as well as Section \ref{subsec:KG}.
The $\Loc$-natural compatible pair of retarded/advanced trivializations
from Proposition \ref{prop:naturalKGandYM} a) defines via Lemma \ref{lem:naturalproperties} a)
a $\Loc$-natural unshifted Poisson structure $\tau^{\KG}_{}$, whose components
$\tau^{\KG}_{M}$, for $M\in\Loc$, are given concretely by \eqref{eqn:tauKG}.
Due to the component-wise uniqueness result for retarded/advanced trivializations for Klein-Gordon theory
in Proposition \ref{propo:KG}, it follows that $\tau^{\KG}_{}$ is unique too. Hence, in the case of
Klein-Gordon theory we obtain stronger results than in the general Lemma \ref{lem:naturalproperties}.
\end{ex}

\begin{ex}\label{ex:YMnatural}
Let us now apply our general results to linear Yang-Mills theory, see 
Examples \ref{ex:fieldYM} and \ref{ex:solutionYM} as well as Section \ref{subsec:YM}.
The $\Loc$-natural compatible pair of retarded/advanced trivializations
from Proposition \ref{prop:naturalKGandYM} b) defines via Lemma \ref{lem:naturalproperties} a)
a $\Loc$-natural unshifted Poisson structure $\tau^{\YM}_{}$, whose components
$\tau^{\YM}_{M}$, for $M\in\Loc$, are given concretely by \eqref{eqn:tauYM}.
Unfortunately, as explained in Remark \ref{rem:issuesLocnatural}, we are currently unable
to exclude the existence of other $\Loc$-natural choices 
of compatible pairs of retarded/advanced trivializations that define 
non-homotopic $\Loc$-natural unshifted Poisson structures. The situation
gets much better when we work on a slice category $\Loc/\overline{M}$, for any $\overline{M}\in\Loc$.
In this case Proposition \ref{prop:naturalKGandYM} b) restricts to a 
$\Loc/\overline{M}$-natural compatible pair of retarded/advanced trivializations
and Lemma \ref{lem:naturalproperties} a) defines a $\Loc/\overline{M}$-natural 
unshifted Poisson structure $\tau^{\YM}_{}$. By Lemma \ref{lem:naturalproperties} c),
we know that any other choice of a $\Loc/\overline{M}$-natural compatible pair of 
retarded/advanced trivializations defines a homotopic $\Loc/\overline{M}$-natural 
unshifted Poisson structure. This means that, when restricted to
$\Loc/\overline{M}$, our constructions determine uniquely a homology class $[\tau^{\YM}_{}]$
in $H_0(\hom\big({\textstyle\bigwedge^2\LLL^\YM},\bbR\big))$.
\end{ex}

Let $\CC$ be either $\Loc$ or $\Loc/\overline{M}$, for any $\overline{M}\in\Loc$,
and suppose that we picked a $\CC$-natural unshifted Poisson structure $\tau$.
The assignment $M\mapsto (\LLL(M),\tau_M^{})$ defines a functor
\begin{flalign}
(\LLL,\tau)\,:\, \CC ~\longrightarrow ~\PCh_\bbR 
\end{flalign}
to the category of unshifted Poisson complexes, and post-composition with the CCR quantization functor
\eqref{eqn:CCRfunctor} defines a functor
\begin{flalign}\label{eqn:CCRpostcomposition}
\AAA := \CCR(\LLL,\tau)\,:\, \CC ~\longrightarrow~ \astdgAlg_\bbC 
\end{flalign}
to the category of differential graded $\ast$-algebras.
\sk

In order to analyze homotopical properties of this construction, we endow
both functor categories $\mathbf{Fun}(\CC, \PCh_\bbR )$ and $\mathbf{Fun}(\CC, \astdgAlg_\bbC )$
with the structure of a homotopical category \cite{DHKS,Riehl} in which weak equivalences
are so-called natural weak equivalences.
\begin{defi}\label{def:naturalweakequivalences}
Let $\CC$ be either $\Loc$ or $\Loc/\overline{M}$, for any $\overline{M}\in\Loc$.
\begin{itemize}
\item[(i)] A morphism in $\mathbf{Fun}(\CC, \PCh_\bbR )$ (i.e.\ a natural transformation)
is a {\em natural weak equivalence} if all its components are weak equivalences in 
$\PCh_\bbR$, see Definition \ref{def:homotopicalCATS}.

\item[(ii)] A morphism in
$\mathbf{Fun}(\CC, \astdgAlg_\bbC )$ (i.e.\ a natural transformation)
is a {\em natural weak equivalence} if all its components are weak equivalences in 
$\astdgAlg_\bbC$, see Definition \ref{def:homotopicalCATS}.

\item[(iii)] Let $\hAQFT(\CC)\subseteq \mathbf{Fun}(\CC, \astdgAlg_\bbC )$ denote the full subcategory
of functors satisfying the homotopy AQFT axioms from Definition \ref{def:hAQFT}.
A morphism in $\hAQFT(\CC)$ is a weak equivalence if and only if it is a natural 
weak equivalence in $\mathbf{Fun}(\CC, \astdgAlg_\bbC )$.
\end{itemize}
\end{defi}

\begin{rem}
Note that the weak equivalences in $\hAQFT(\CC)$ agree with those
considered in \cite{hAQFT}.
\end{rem}

The following result generalizes Proposition \ref{prop:CCRhomotopical} to the context
of functor categories.
\begin{propo}\label{propo:natzigzag}
Let $\CC$ be either $\Loc$ or $\Loc/\overline{M}$, for any $\overline{M}\in\Loc$.
\begin{itemize}
\item[a)] Post-composition with the CCR functor defines a homotopical functor
\begin{flalign}
\CCR\circ (-) \,:\,  \mathbf{Fun}(\CC, \PCh_\bbR )~\longrightarrow ~\mathbf{Fun}(\CC, \astdgAlg_\bbC )\quad.
\end{flalign}

\item[b)] Let $(V,\tau)\in \mathbf{Fun}(\CC, \PCh_\bbR )$ and $\rho \in \hom (\bigwedge^2 V,\bbR)_1$
a $\CC$-natural $1$-chain. Then there exists a zig-zag of natural weak equivalences in $\mathbf{Fun}(\CC, \astdgAlg_\bbC )$
connecting $\CCR(V,\tau)$ and $\CCR(V,\tau+\partial\rho)$.
\end{itemize}
\end{propo}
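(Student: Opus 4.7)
The plan is to reduce both claims to the componentwise statements of Proposition~\ref{prop:CCRhomotopical}, exploiting the fact that natural weak equivalences in $\mathbf{Fun}(\CC,\PCh_\bbR)$ and $\mathbf{Fun}(\CC,\astdgAlg_\bbC)$ are, by Definition~\ref{def:naturalweakequivalences}, detected objectwise.

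Part~a) is then essentially formal: given a natural weak equivalence $\eta:(V,\tau)\to (V',\tau')$ in $\mathbf{Fun}(\CC,\PCh_\bbR)$, each component $\eta_M$ is a weak equivalence in $\PCh_\bbR$, so Proposition~\ref{prop:CCRhomotopical}~a) implies that $\CCR(\eta_M)$ is a weak equivalence in $\astdgAlg_\bbC$ for every $M\in\CC$. Since the components of the post-composed natural transformation $\CCR\circ\eta$ are precisely these maps, the latter is a natural weak equivalence, which establishes that $\CCR\circ(-)$ is a homotopical functor.

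For part~b), the main task is to upgrade the intermediate object $A_{(V,\tau,\rho)}$ of Proposition~\ref{prop:CCRhomotopical}~b) to a functor $A_{(V,\tau,\rho)}:\CC\to\astdgAlg_\bbC$ when $(V,\tau,\rho)$ is a $\CC$-natural datum, and to promote the associated zig-zag of weak equivalences to a zig-zag of natural transformations. I would inspect the construction in Appendix~\ref{app:technical} and verify that $A_{(V,\tau,\rho)}$ is built as a quotient of the free differential graded $\ast$-algebra $T_\bbC^\otimes V$ by a two-sided $\ast$-ideal whose generating relations depend only on $V$, $\tau$ and $\rho$. Given such a presentation, naturality of $V$, $\tau$ and $\rho$ ensures that for every $\CC$-morphism $f:M\to N$ the free algebra map $T_\bbC^\otimes V(f)$ descends to the quotients and thereby endows $A_{(V,\tau,\rho)}$ with a functor structure on $\CC$. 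The componentwise weak equivalences $\CCR(V,\tau)\xleftarrow{\,\sim\,}A_{(V,\tau,\rho)}\xrightarrow{\,\sim\,}\CCR(V,\tau+\partial\rho)$ from Proposition~\ref{prop:CCRhomotopical}~b) are, by their construction as quotient-induced maps, compatible with these functor structures on both sides and therefore assemble into natural transformations whose components are weak equivalences, yielding the desired zig-zag in $\mathbf{Fun}(\CC,\astdgAlg_\bbC)$.

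The main obstacle, and essentially the only non-formal step, is the verification of this functoriality: I would need to trace through the precise definition of $A_{(V,\tau,\rho)}$ in Appendix~\ref{app:technical} and confirm that no non-natural auxiliary choices enter into the construction of the ideal or of the structural zig-zag maps. Once this naturality is in place, both items follow by patching the componentwise conclusions of Proposition~\ref{prop:CCRhomotopical} along the $\CC$-indexed diagram, with no additional homological input required.
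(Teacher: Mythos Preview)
Your proposal is correct and follows the same approach as the paper: part~a) is exactly the paper's argument, and for part~b) the paper likewise appeals to the explicit construction in Appendix~\ref{app:technical} and observes that it is manifestly natural in the data $(V,\tau,\rho)$.

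One small correction to your speculation that will matter when you carry out the inspection: the intermediate object is not a quotient of $T_\bbC^\otimes V$. In the appendix the zig-zag is built at the level of unital Lie $\ast$-algebras as $\mathfrak{Heis}(V,\tau)\xleftarrow{\pi_0} H_{(V,\tau,\rho)}\xrightarrow{\pi_1}\mathfrak{Heis}(V,\tau+\partial\rho)$, where $H_{(V,\tau,\rho)} = V_\bbC\oplus D\oplus\bbC$ with $D$ a \emph{fixed} acyclic two-term complex and Lie bracket determined by $\tau$ and $\rho$; the maps $\pi_s$ are quotients by ideals generated by relations living entirely in $D\oplus\bbC$. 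One then applies $\QQQ_{\mathrm{lin}}$ to obtain the zig-zag in $\astdgAlg_\bbC$. Since $D$ is constant in $M$ and the remaining ingredients are the natural data $V(M),\tau_M,\rho_M$, functoriality and naturality of the $\pi_s$ are immediate --- precisely the conclusion you anticipated, just via a slightly different presentation than the one you guessed.
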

\begin{proof}
Item a)~is an immediate consequence of the component-wise definition 
of natural weak equivalences in Definition \ref{def:naturalweakequivalences}
and the result in Proposition \ref{prop:CCRhomotopical}  a)~that the CCR functor
is a homotopical functor.
\sk

Let us now focus on item b). By Proposition \ref{prop:CCRhomotopical} b) and the explicit construction in
Proposition \ref{propo:zigzag}, we obtain for each object $M \in \CC$ a zig-zag
\begin{flalign}\label{eqn:zigzagnaturalTMP}
\CCR(V(M),\tau_M^{}) ~\stackrel{\sim}{\longleftarrow}~ \QQQ_\mathrm{lin} (H_{(V(M),\tau_M^{},\rho_M^{})}) 
~\stackrel{\sim}{\longrightarrow}~ \CCR(V(M),\tau_M^{} + \partial\rho_M^{})
\end{flalign}
of weak equivalences in $\astdgAlg_\bbC$. From our construction of the object
$H_{(V(M),\tau_M^{},\rho_M^{})}$ in Proposition \ref{propo:zigzag}, 
one immediately observes that \eqref{eqn:zigzagnaturalTMP} 
are the components of a zig-zag of natural weak equivalences.
\end{proof}

Together with Lemma \ref{lem:naturalproperties}~c),
Proposition \ref{propo:natzigzag}~b) implies the following important result.
\begin{cor}\label{cor:naturalzigzag}
Fix any $\overline{M} \in \Loc$ and suppose that $\Lambda^\pm$ and $\widetilde{\Lambda}^\pm$ are 
two $\Loc/\overline{M}$-natural compatible pairs of retarded/advanced trivializations. Denote
the corresponding $\Loc/\overline{M}$-natural unshifted Poisson structures from 
Lemma \ref{lem:naturalproperties}~a) by $\tau$ and $\widetilde{\tau}$.
Then the two functors $\AAA := \CCR(\LLL,\tau)$ and $\widetilde\AAA := \CCR(\LLL,\widetilde\tau)$ 
are equivalent via a zig-zag of natural weak equivalences in $\mathbf{Fun}(\Loc/\overline{M}, \astdgAlg_\bbC )$. 
\end{cor}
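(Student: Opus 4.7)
The plan is to combine Lemma \ref{lem:naturalproperties}~c) with Proposition \ref{propo:natzigzag}~b), so that the proof reduces to quoting both results in sequence. First I would invoke Lemma \ref{lem:naturalproperties}~c), applied to the two $\Loc/\overline{M}$-natural compatible pairs $\Lambda^\pm$ and $\widetilde{\Lambda}^\pm$, in order to obtain a $\Loc/\overline{M}$-natural $1$-chain $\rho \in \hom(\bigwedge^2 \LLL,\bbR)_1$ witnessing the homotopy $\widetilde{\tau} - \tau = \partial \rho$. This is the step where the terminal object $(\id:\overline{M}\to\overline{M})$ of $\Loc/\overline{M}$ is crucial, since (as flagged in Remark \ref{rem:issuesLocnatural}) the analogous statement on $\Loc$ is not available to us.

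Next, I would apply Proposition \ref{propo:natzigzag}~b) with $\CC = \Loc/\overline{M}$, $V = \LLL$, the functor $(\LLL,\tau) \in \mathbf{Fun}(\Loc/\overline{M}, \PCh_\bbR)$ supplied by Lemma \ref{lem:naturalproperties}~a), and the $\Loc/\overline{M}$-natural $1$-chain $\rho$ produced in the previous step. This yields a zig-zag
\begin{flalign*}
\CCR(\LLL,\tau) ~\stackrel{\sim}{\longleftarrow}~ \QQQ_{\mathrm{lin}}\big(H_{(\LLL,\tau,\rho)}\big) ~\stackrel{\sim}{\longrightarrow}~ \CCR(\LLL,\tau+\partial\rho)
\end{flalign*}
of natural weak equivalences in $\mathbf{Fun}(\Loc/\overline{M}, \astdgAlg_\bbC)$. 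Since $\tau + \partial\rho = \widetilde{\tau}$ by construction of $\rho$, the right-hand term equals $\CCR(\LLL,\widetilde{\tau}) = \widetilde{\AAA}$, while the left-hand term is $\CCR(\LLL,\tau) = \AAA$, giving the desired zig-zag connecting $\AAA$ and $\widetilde{\AAA}$.

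There is no real obstacle in this argument since the substantive work has already been carried out in Lemma \ref{lem:naturalproperties} and Proposition \ref{propo:natzigzag}. The only points worth checking explicitly are bookkeeping-style: that the functor $(\LLL,\tau): \Loc/\overline{M} \to \PCh_\bbR$ (and likewise its tilded counterpart) is genuinely well-defined, which is Lemma \ref{lem:naturalproperties}~a) together with the CCR post-composition construction \eqref{eqn:CCRpostcomposition}, and that the homotopy $\rho$ produced by Lemma \ref{lem:naturalproperties}~c) is indeed $\Loc/\overline{M}$-natural so that Proposition \ref{propo:natzigzag}~b) applies verbatim. Both are immediate from the cited results, so the corollary follows without further analysis.
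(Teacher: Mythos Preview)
Your proposal is correct and follows exactly the paper's own approach: the corollary is stated as an immediate consequence of Lemma~\ref{lem:naturalproperties}~c) together with Proposition~\ref{propo:natzigzag}~b), and you have spelled out precisely this two-step argument. The additional bookkeeping checks you mention are sound and in the spirit of the paper's conventions.
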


The next lemma provides conditions on $(\LLL,\tau): \CC \to \PCh_\bbR$ 
which imply that $\AAA := \CCR(\LLL,\tau): \CC \to \astdgAlg_\bbC$ 
fulfills the homotopy AQFT axioms from Definition \ref{def:hAQFT}. 
\begin{lem}\label{lem:hAQFT}
Let $\CC$ be either $\Loc$ or $\Loc/\overline{M}$, for any $\overline{M} \in \Loc$,
and consider a functor $(\LLL,\tau): \CC \to \PCh_\bbR$.
\begin{itemize}
\item[a)]  If for every pair $(f_1 : M_1 \to N, f_2 : M_2\to N)$ of 
$\CC$-morphisms with causally disjoint images the chain map
\begin{flalign}
\tau\,(f_{1\,\ast}\otimes f_{2\,\ast})\,:\, \LLL(M_1)\otimes \LLL(M_2)~\longrightarrow~\LLL(N)
\end{flalign}
is zero, then the functor $\AAA := \CCR(\LLL,\tau): \CC \to \astdgAlg_\bbC$ satisfies Einstein causality.

\item[b)] If for every Cauchy morphism $f:M\to N$ the chain map
$f_\ast : \LLL(M)\to \LLL(N)$ is a quasi-isomorphism, then the functor 
$\AAA := \CCR(\LLL,\tau): \CC \to \astdgAlg_\bbC$ satisfies time-slice. 
\end{itemize}
\end{lem}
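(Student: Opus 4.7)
My plan is to treat the two items separately, with part b) being essentially an application of machinery already developed, and part a) requiring a standard but careful reduction-to-generators argument.

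For part b), the argument is short. By the hypothesis of $\CC$-naturality of $\tau$, for every $\CC$-morphism $f:M\to N$ the chain map $f_\ast : \LLL(M) \to \LLL(N)$ is compatible with the Poisson structures, i.e.\ $\tau_N\,(f_\ast \wedge f_\ast) = \tau_M$, so that $f_\ast : (\LLL(M),\tau_M) \to (\LLL(N),\tau_N)$ is a morphism in $\PCh_\bbR$. If $f$ is moreover a Cauchy morphism, then by assumption $f_\ast$ is a quasi-isomorphism, hence a weak equivalence in $\PCh_\bbR$ by Definition \ref{def:homotopicalCATS}. Proposition \ref{prop:CCRhomotopical}~a) (the CCR functor is homotopical) then immediately yields that $\AAA(f) = \CCR(f_\ast)$ is a weak equivalence in $\astdgAlg_\bbC$, which is the homotopy time-slice axiom.

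For part a), the key is to reduce the Einstein causality condition to the generators of $\AAA(N) = \CCR(\LLL(N),\tau_N)$, which by construction are of the form $\AAA(f_i)(v_i)$ for $v_i \in \LLL(M_i)$ (after pushforward to $\LLL(N)$ via $f_{i\ast}$). On such generators the graded commutator in $\AAA(N)$ is controlled by the defining CCR \eqref{eqn:CCR}, namely
\begin{flalign*}
\big[\AAA(f_1)(v_1),\AAA(f_2)(v_2)\big] \,=\, i\,\tau_N\big(f_{1\ast}(v_1),f_{2\ast}(v_2)\big)\,\oone\quad,
\end{flalign*}
which vanishes by the hypothesis that $\tau_N\,(f_{1\ast}\otimes f_{2\ast}) = 0$ for causally disjoint $f_1,f_2$. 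I then propagate this vanishing from generators to arbitrary elements of $\AAA(M_1) \otimes \AAA(M_2)$ using the graded Leibniz rule for the commutator, $[ab,c] = a[b,c] + (-1)^{|b||c|}[a,c]\,b$ (and similarly in the other slot), together with the fact that $\AAA(f_1)$ and $\AAA(f_2)$ are $\ast$-algebra homomorphisms. Proceeding by induction on the tensor word-length of representatives in $T_\bbC^\otimes \LLL(M_i)$, every pair of elements in the images of $\AAA(f_1)$ and $\AAA(f_2)$ commutes in the graded sense, so the commutator chain map is zero. Since the CCR generators and the unit span $\AAA(N)$ multiplicatively, this suffices.

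The only step requiring some care is the graded-Leibniz induction in part a): one must check that the sign bookkeeping is consistent across the quotient by the ideal $\mathcal{I}_{(\LLL(N),\tau_N)}$ and that everything descends to $\AAA(M_1) \otimes \AAA(M_2)$ as a chain map (not merely on homogeneous generators). This is however entirely routine and mirrors the classical (ungraded) AQFT argument from CCR quantization of Poisson vector spaces, so I do not anticipate any conceptual obstacle.
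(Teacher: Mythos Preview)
Your proposal is correct and follows the same approach as the paper: for b) you invoke that $\CCR$ is homotopical (Proposition \ref{prop:CCRhomotopical}~a)), and for a) you unwind the one-line ``direct consequence of the canonical commutation relations \eqref{eqn:CCR}'' into the standard reduction-to-generators plus graded Leibniz induction. The paper's own proof is simply the terse two-sentence version of what you have written.
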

\begin{proof}
Item~a) is a direct consequence of the canonical commutation relations in \eqref{eqn:CCR}. 
Item~b) follows from the fact that $\CCR$ is a homotopical functor, 
see Proposition \ref{prop:CCRhomotopical}~a). 
\end{proof}

We are now ready to state and prove the main result of the present paper.
\begin{theo}\label{thm:KGandYMhAQFT}
\begin{itemize}
\item[a)] Let $\tau^\KG$ denote the unshifted Poisson structure defined 
by Lemma \ref{lem:naturalproperties}~a) from the unique $\Loc$-natural compatible pair 
of retarded/advanced trivializations for Klein-Gordon theory, see Proposition \ref{prop:naturalKGandYM}~a). 
Then the functor $\AAA^\KG := \CCR(\LLL^\KG,\tau^\KG): \Loc \to \astdgAlg_\bbC$ 
is a homotopy AQFT on $\Loc$, i.e.\ $\AAA^\KG \in \hAQFT(\Loc)$.

\item[b)] Let $\tau^\YM$ denote the unshifted Poisson structure defined 
by Lemma \ref{lem:naturalproperties}~a) from the $\Loc$-natural compatible pair 
of retarded/advanced trivializations for linear Yang-Mills theory, see Proposition \ref{prop:naturalKGandYM}~b). 
Then the functor $\AAA^\YM := \CCR(\LLL^\YM,\tau^\YM): \Loc \to \astdgAlg_\bbC$ 
is a homotopy AQFT on $\Loc$, i.e.\ $\AAA^\YM \in \hAQFT(\Loc)$. 
The restriction $\AAA^\YM_{\overline{M}}:= \AAA^{\YM}_{}~\mathfrak{U}_{\overline{M}}
\in \hAQFT(\Loc/\overline{M})$ given in Remark \ref{rem:sliceforget} 
defines a homotopy AQFT on each $\overline{M}\in\Loc$. Up to natural weak equivalence, 
these homotopy AQFTs on $\overline{M}$ do not depend on the choice of 
a $\Loc$-natural compatible pair of retarded/advanced trivializations.
\end{itemize}
\end{theo}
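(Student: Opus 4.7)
The plan is to apply Lemma \ref{lem:hAQFT}, which reduces the homotopy AQFT axioms for a functor of the form $\AAA = \CCR(\LLL,\tau)$ to two conditions on the underlying Poisson-complex functor: vanishing of $\tau$ on causally disjoint pushforwards (giving strict Einstein causality) and quasi-isomorphism of $f_\ast$ for Cauchy morphisms $f$ (giving the homotopy time-slice axiom). The proof of the theorem therefore splits into three essentially independent tasks: verifying Einstein causality for $(\LLL^\KG,\tau^\KG)$ and $(\LLL^\YM,\tau^\YM)$, verifying time-slice for the same two functors, and establishing the uniqueness claim in part~b) on each slice category $\Loc/\overline{M}$.

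Einstein causality will follow at once from the support properties of the causal propagator. Inspecting the formulas \eqref{eqn:tauKG} and \eqref{eqn:tauYM}, every non-vanishing component of $\tau^\KG_N$ or $\tau^\YM_N$ is an integral of the form $\langle s_1, G s_2\rangle_N$ with $G = G^+ - G^-$ the causal propagator for $\square - m^2$ or $\square$ acting on compactly supported forms on $N$. If $s_1 = f_{1\,\ast}\varphi_1$ and $s_2 = f_{2\,\ast}\varphi_2$ for $\Loc$-morphisms $f_1,f_2$ with causally disjoint images, then property (iii) of the Green's operators forces $\supp(G s_2) \subseteq J^+_N(f_2(M_2)) \cup J^-_N(f_2(M_2))$, which is disjoint from $\supp(s_1) \subseteq f_1(M_1)$; hence the integrand vanishes identically. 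The same argument handles the ghost-antifield pairing in the Yang-Mills case.

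For the homotopy time-slice axiom I have to show that for every Cauchy morphism $f:M\to N$ the pushforward $f_\ast : \LLL(M) \to \LLL(N)$ is a quasi-isomorphism; Proposition \ref{prop:CCRhomotopical}~a) then promotes this to a weak equivalence $\AAA(f)$ in $\astdgAlg_\bbC$. For Klein-Gordon this is classical: the only non-vanishing homology $H_0(\LLL^\KG) = \Omega_\cc^0 /(\square - m^2)\Omega_\cc^0$ is isomorphic to initial data on a Cauchy surface, and a Cauchy morphism induces an isomorphism of such initial-data spaces. For linear Yang-Mills, the explicit identification of the four non-trivial homologies of $\LLL^\YM$ recalled in Section \ref{subsec:YM} reduces the problem to checking that $f_\ast$ induces isomorphisms on the compactly supported de Rham-type cohomology groups $H_{\cc,\mathrm{dR}}^\bullet$ and on the vector space $H_0(\LLL^\YM)$ of linear gauge-invariant on-shell observables. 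The latter is the well-known time-slice for Maxwell-type theories, and the former follows because a Cauchy morphism is a homotopy equivalence: both $M$ and $N$ are diffeomorphic to $\bbR\times\Sigma$ for a common Cauchy surface $\Sigma \subseteq f(M)$, and pushforward of compactly supported forms intertwines this equivalence with the inverse isomorphism on cohomology.

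The uniqueness statement in~b) is the cleanest part. Given any other $\Loc$-natural compatible pair $\widetilde{\Lambda}^\pm$ with associated $\tau^\YM$-analog $\widetilde{\tau}$, restriction along the forgetful functor $\mathfrak{U}_{\overline{M}} : \Loc/\overline{M}\to \Loc$ (Remark \ref{rem:LocstrongerthanLocM}) produces two $\Loc/\overline{M}$-natural compatible pairs on the slice category, to which Corollary \ref{cor:naturalzigzag} applies directly and yields the required zig-zag of natural weak equivalences in $\mathbf{Fun}(\Loc/\overline{M}, \astdgAlg_\bbC)$. The main technical obstacle I anticipate lies in the Yang-Mills time-slice verification, since it forces one to control $f_\ast$ on compactly supported de Rham cohomology in several degrees at once; this is exactly the point where the full strength of the Cauchy-morphism hypothesis (rather than an arbitrary causally convex open embedding) enters the argument.
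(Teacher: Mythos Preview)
Your proposal is correct and follows essentially the same route as the paper: reduce both axioms to Lemma~\ref{lem:hAQFT}, verify Einstein causality via the support properties of the causal propagator, verify time-slice by computing $H_\bullet(f_\ast)$ degree-wise (classical for Klein-Gordon; for Yang-Mills, degree $0$ is the standard Maxwell time-slice and degrees $\pm 1$ reduce to compactly supported de Rham cohomology under a Cauchy morphism), and obtain the slice-category uniqueness from Corollary~\ref{cor:naturalzigzag}. The only cosmetic difference is that the paper phrases the $n=\pm 1$ step via Poincar\'e duality plus homotopy invariance of ordinary de Rham cohomology, which is a slightly cleaner justification than appealing directly to homotopy invariance of the compactly supported theory.
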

\begin{proof}
{\em Item a):} Example \ref{ex:KGnatural} defines a functor $(\LLL^\KG,\tau^\KG) : \Loc\to \PCh_\bbR$
and hence by post-composition with $\CCR$ a functor $\AAA^\KG := \CCR(\LLL^\KG,\tau^\KG): \Loc \to \astdgAlg_\bbC$.
It remains to prove that this functor satisfies the homotopy AQFT axioms from Definition \ref{def:hAQFT},
which we shall do by checking the sufficient conditions on $(\LLL^\KG,\tau^\KG)$ from Lemma \ref{lem:hAQFT}.
We deduce from the explicit expressions for $\tau^\KG$ in \eqref{eqn:tauKG}
and the support properties of retarded/advanced Green's operators (see Section \ref{subsec:Green})
that the hypothesis of Lemma \ref{lem:hAQFT}~a) is fulfilled, hence $\AAA^\KG$ satisfies Einstein causality.
\sk

In order to prove time-slice, recall from Example \ref{ex:quantumKG} that the quotient maps
$(\LLL^\KG(M),\tau^\KG_M) \to (H_0(\LLL^\KG(M)),\tau^\KG_M)$ are weak 
equivalences in $\PCh_\bbR$ for every $M \in \Loc$. This clearly defines
a natural weak equivalence $(\LLL^\KG,\tau^\KG) \to (H_0(\LLL^\KG),\tau^\KG)$
in $\mathbf{Fun}(\Loc,\PCh_{\bbR})$, hence we may equivalently
prove that $(H_0(\LLL^\KG),\tau^\KG)$ fulfills the hypothesis of Lemma \ref{lem:hAQFT}~b),
i.e.\ $H_0(f_\ast) : H_0(\LLL^\KG(M))\to H_0(\LLL^\KG(N))$ is a quasi-isomorphism
(i.e.\ an isomorphism because both chain complexes are concentrated in degree zero)
for every Cauchy morphism $f:M\to N$. Because $H_0(\LLL^\KG)$ describes
the usual vector spaces of linear on-shell observables for Klein-Gordon theory, 
this follows from standard results in the literature, see e.g.\  \cite[Theorem~3.3.1]{QFTmodels}.
This shows that $\AAA^\KG: \Loc \to \astdgAlg_\bbC$ satisfies the
homotopy AQFT axioms, hence it is a homotopy AQFT on $\Loc$.
\sk\sk

{\em Item b):} Example \ref{ex:YMnatural} defines a functor $(\LLL^\YM,\tau^\YM) : \Loc\to \PCh_\bbR$
and hence by post-composition with $\CCR$ a functor 
$\AAA^\YM := \CCR(\LLL^\YM,\tau^\YM): \Loc \to \astdgAlg_\bbC$.
We prove that this functor satisfies the homotopy AQFT axioms from Definition \ref{def:hAQFT}
by checking the sufficient conditions on $(\LLL^\YM,\tau^\YM)$ from Lemma \ref{lem:hAQFT}.
We deduce from the explicit expressions for $\tau^\YM$ in \eqref{eqn:tauYM}
and the support properties of retarded/advanced Green's operators (see Section \ref{subsec:Green})
that the hypothesis of Lemma \ref{lem:hAQFT}~a) is fulfilled, hence $\AAA^\YM$ satisfies Einstein causality.
\sk

Our next aim is to prove that the hypothesis of Lemma \ref{lem:hAQFT}~b) is fulfilled too, 
which would imply that $\AAA^\YM$ satisfies time-slice. 
Let $f:M\to N$ be any Cauchy morphism and consider the chain map $f_\ast: \LLL^\YM(M) \to \LLL^\YM(N)$,
where $\LLL^{\YM}$ is concretely given in \eqref{eqn:LLLYM}. We have to prove that 
the induced map $H_n(f_\ast): H_n(\LLL^\YM(M)) \to H_n(\LLL^\YM(N))$ in homology
is an isomorphism, for every $n\in\bbZ$. From the explicit computation of homologies
performed in Section \ref{subsec:YM}, we find that the only non-trivial homologies
are in degrees $n=-1,0,1$, hence we can restrict our attention to these cases.
The homologies in degrees $n=\pm1$ are compactly supported de Rham cohomologies,
hence $H_n(f_\ast)$ is an isomorphism in these degrees because of Poincar{\'e} duality,
homotopy invariance of de Rham cohomology and the fact that every 
Cauchy morphism $f: M \to N$ is in particular a homotopy equivalence. 
In degree $n=0$, the linear map $H_0(f_\ast): H_0(\LLL^\YM(M)) \to H_0(\LLL^\YM(N))$ 
is the usual push-forward along $f$ of linear gauge-invariant on-shell observables 
for linear Yang-Mills theory, which is known to be an isomorphism, see e.g.\ 
\cite{SDH,BDS14,BDHS14,FewsterLang,BeniniMaxwell,BSSdiffcoho}. 
\sk

Summing up, this shows that $\AAA^\YM: \Loc \to \astdgAlg_\bbC$ satisfies the
homotopy AQFT axioms, hence it is a homotopy AQFT on $\Loc$.
The statement about uniqueness (up to natural weak equivalences)
for the restricted linear Yang-Mills homotopy AQFTs $\AAA^{\YM}_{\overline{M}}\in \hAQFT(\Loc/\overline{M})$, 
for each $\overline{M}\in\Loc$, is a consequence of Corollary \ref{cor:naturalzigzag}.
\end{proof}

\begin{rem}\label{rem:hAQFTKG}
Note that our particular model in Theorem \ref{thm:KGandYMhAQFT}~a)
for Klein-Gordon theory as a homotopy AQFT on $\Loc$ is given by a functor
$\AAA^{\KG}$ that assigns a differential graded $\ast$-algebra to each $M\in\Loc$.
This is seemingly different to the usual description of Klein-Gordon theory as
a functor with values in ordinary $\ast$-algebras, see e.g.\ \cite{BDH}.
These two descriptions are however equivalent via a natural weak equivalence in
the homotopical category $\hAQFT(\Loc)$. Concretely, in the proof of
Theorem \ref{thm:KGandYMhAQFT}~a) we observed that there exists a
natural weak equivalence $(\LLL^\KG,\tau^\KG) \to (H_0(\LLL^\KG),\tau^\KG)$ 
between our unshifted Poisson complexes for Klein-Gordon theory and their
zeroth homologies, which are the structures of interest in the usual description of Klein-Gordon theory. 
Proposition \ref{propo:natzigzag}~a) then implies
that $\AAA^{\KG} = \CCR(\LLL^\KG,\tau^\KG) \simeq \CCR (H_0(\LLL^\KG),\tau^\KG)$
is a natural weak equivalence in $\hAQFT(\Loc)$, i.e.\ our description
of Klein-Gordon theory as a homotopy AQFT is equivalent to the usual one in e.g.\ \cite{BDH}.
\end{rem}

\begin{rem}\label{rem:hAQFTYM}
As we have already indicated in Remark \ref{rem:issuesLocnatural} and Example \ref{ex:YMnatural},
at the moment we cannot exclude the possibility that there exists another $\Loc$-natural compatible 
pair of advanced/retarded trivializations for linear Yang-Mills theory
that defines a non-homotopic $\Loc$-natural unshifted Poisson structure, and hence a potentially
non-equivalent homotopy AQFT $\widetilde{\AAA}^{\YM}\in\hAQFT(\Loc)$.
Note that potential differences would be very subtle
because, as a consequence of Theorem \ref{thm:KGandYMhAQFT}~b), the restrictions
$\widetilde{\AAA}^{\YM}_{\overline{M}}, \AAA^{\YM}_{\overline{M}}\in\hAQFT(\Loc/\overline{M})$
to every $\overline{M}\in\Loc$ are naturally weakly equivalent homotopy AQFTs on $\overline{M}$.
\sk

In contrast to the situation for Klein-Gordon theory explained in Remark \ref{rem:hAQFTKG},
our model in Theorem \ref{thm:KGandYMhAQFT}~b) for linear Yang-Mills theory as a homotopy
AQFT on $\Loc$ is {\em not} naturally weakly equivalent to existing models
in the literature that consider only gauge-invariant on-shell observables, 
see e.g.\ \cite{SDH,BDS14,BDHS14,FewsterLang,BeniniMaxwell,BSSdiffcoho}. This is because,
on a generic $M\in\Loc$, the complex of linear observables $\LLL^{\YM}(M)$ has 
non-trivial homology in degrees $n=-1,0,1$, while the usual models in the literature consider only
its zeroth homology. In the terminology of the BRST/BV formalism, one can say that our 
description of linear Yang-Mills theory as a homotopy AQFT $\AAA^{\YM}\in\hAQFT(\Loc)$ takes fully into account 
all ghost and antifield observables, while the traditional models consider only the $0$-truncation
of the antifield number $0$ sector of the theory. In particular, notice that the difference 
between $\AAA^\YM(M)$ and $\CCR(H_0(\LLL^\YM(M)),\tau^\YM_M)$, for a generic $M\in \Loc$,
is already visible on the level of the zeroth homology: 
The $\ast$-algebra $\CCR(H_0(\LLL^\YM(M)),\tau^\YM_M)$ 
is generated only by linear gauge-invariant on-shell observables, while 
the $\ast$-algebra $H_0(\AAA^\YM(M))$ contains also classes that are obtained by
multiplying in $\AAA^\YM(M)$ an equal number of ghost field and antifield linear observables. 
\end{rem}

%%%%%%%%%%%%%%%%%%%%%%%%%%%%%%%%%%%%%%%%%%%%%%%%
%%%%%%%%%%%%%%%%%%%%%%%%%%%%%%%%%%%%%%%%%%%%%%%%

\section*{Acknowledgments}
We would like to thank the anonymous referees for valuable 
comments that helped us to improve this manuscript.
We also would like to thank Chris Fewster, Owen Gwilliam, Igor Khavkine, 
Fran\c{c}ois Petit and Nic Teh for useful discussions and comments on this work.
The work of M.B.\ was supported by a research grant funded by 
the Deutsche Forschungsgemeinschaft (DFG, Germany). 
S.B.\ is supported by a PhD scholarship of the Royal Society (UK).
A.S.\ gratefully acknowledges the financial support of 
the Royal Society (UK) through a Royal Society University 
Research Fellowship, a Research Grant and an Enhancement Award.

\appendix

\section{\label{app:technical}Technical details for Proposition \ref{prop:CCRhomotopical}}
In this technical appendix we shall use quite freely the techniques and results developed in
\cite{BruinsmaSchenkel}. Let us first recall from \cite{BruinsmaSchenkel} that the CCR functor \eqref{eqn:CCRfunctor} admits a factorization
\begin{flalign}\label{eqn:CCRheisQlin}
\xymatrix@C=4em{
\ar@/^1.2pc/[rr]^-{\CCR} \PCh_\bbR \ar[r]_-{\mathfrak{Heis}}& \astdguLie_\bbC \ar[r]_-{\QQQ_\mathrm{lin}} & \astdgAlg_\bbC
}
\end{flalign}
through the homotopical category of  differential graded {\em unital Lie} $\ast$-algebras,
where $\mathfrak{Heis} : \PCh_\bbR\to  \astdguLie_\bbC$ is the Heisenberg Lie algebra functor
and $\QQQ_\mathrm{lin} : \astdguLie_\bbC \to \astdgAlg_\bbC $ is the unital universal 
enveloping algebra functor. Our strategy for proving  Proposition \ref{prop:CCRhomotopical}
is to prove the analogous statements for the Heisenberg Lie algebra functor
$\mathfrak{Heis} : \PCh_\bbR\to  \astdguLie_\bbC$. This will
imply our desired results, because of the following
\begin{lem}[\cite{BruinsmaSchenkel}]\label{lem:Qlinhomotopical}
$\QQQ_\mathrm{lin} : \astdguLie_\bbC \to \astdgAlg_\bbC $  is a homotopical functor.
\end{lem}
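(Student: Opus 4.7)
The plan is to reduce the claim to the homotopical behaviour of the symmetric algebra functor by invoking a Poincar\'e--Birkhoff--Witt (PBW) type identification. For any $L \in \astdguLie_\bbC$, write $\bar L := L/\bbC\oone$ for the quotient chain complex obtained by dividing out the distinguished central unit. The key input, which is proven in \cite{BruinsmaSchenkel} in the conventions used here, is that the graded symmetrization map
\begin{flalign*}
\mathrm{sym} \,:\, \mathrm{Sym}_\bbC(\bar L) ~\longrightarrow~ \QQQ_\mathrm{lin}(L) \quad, \qquad x_1 \cdots x_n ~\longmapsto~ \frac{1}{n!}\sum_{\sigma \in \Sigma_n} \epsilon(\sigma; x)\, x_{\sigma(1)} \otimes \cdots \otimes x_{\sigma(n)}
\end{flalign*}
is a natural isomorphism of underlying chain complexes in $\Ch_\bbC$, where $\epsilon(\sigma;x)$ denotes the Koszul sign associated with the graded permutation. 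Equivalently, underlying every $\QQQ_\mathrm{lin}(L)$ is the graded symmetric algebra on $\bar L$, functorially in $L$.

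The second ingredient is the standard fact that $\mathrm{Sym}_\bbC : \Ch_\bbC \to \Ch_\bbC$ preserves quasi-isomorphisms. Since $\bbC$ has characteristic zero, $\mathrm{Sym}_\bbC(V) = \bigoplus_{n \geq 0} \mathrm{Sym}^n_\bbC(V)$ with each $\mathrm{Sym}^n_\bbC(V) = (V^{\otimes n})^{\Sigma_n}$ a retract of $V^{\otimes n}$ via the $\Sigma_n$-averaging projector. Tensor powers of quasi-isomorphisms are quasi-isomorphisms because $\Ch_\bbC$ is a symmetric monoidal model category in which every object is cofibrant; retracts of quasi-isomorphisms are quasi-isomorphisms; and direct sums preserve them as well.

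Putting the two ingredients together yields the lemma. Given a weak equivalence $f : L \to L'$ in $\astdguLie_\bbC$, any morphism of unital Lie $\ast$-algebras satisfies $f(\oone_L) = \oone_{L'}$, so the short exact sequences $0 \to \bbC \to L \to \bar L \to 0$ assemble into a morphism of short exact sequences of chain complexes whose left component is the identity. The associated long exact sequence in homology (or a direct five-lemma argument) shows that $\bar f : \bar L \to \bar L'$ is a quasi-isomorphism. By the second ingredient, $\mathrm{Sym}_\bbC(\bar f)$ is a quasi-isomorphism in $\Ch_\bbC$, and the naturality of the PBW identification above lets us identify this map with the underlying chain map of $\QQQ_\mathrm{lin}(f)$. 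Hence $\QQQ_\mathrm{lin}(f)$ is a quasi-isomorphism in $\Ch_\bbC$, i.e.\ a weak equivalence in $\astdgAlg_\bbC$.

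The main obstacle is the PBW identification stated in the first ingredient, which requires care with the graded Koszul signs and with the identification of the distinguished central element with the algebra unit; however this is exactly the content of the cited result from \cite{BruinsmaSchenkel}, and once it is granted the remainder of the argument is entirely formal.
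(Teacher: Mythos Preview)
Your argument is correct and is precisely the standard PBW-based proof. Note, however, that the paper itself does not prove this lemma: it simply records the statement with a citation to \cite{BruinsmaSchenkel}, where the result (including the natural chain-complex identification $\mathrm{Sym}_\bbC(\bar L)\cong \QQQ_\mathrm{lin}(L)$ that you invoke) is established. So there is no proof in the paper to compare against; your write-up is a faithful unpacking of the cited argument rather than an alternative to it.

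One minor caveat worth making explicit: your short exact sequence $0\to\bbC\to L\to\bar L\to 0$ presupposes that the unit map $\eta:\bbC\to L$ is injective, i.e.\ $\oone_L\neq 0$. This is harmless here because the PBW identification you rely on already requires $\oone_L\neq 0$ (if $\oone_L=0$ then $\QQQ_\mathrm{lin}(L)=0$ while $\mathrm{Sym}_\bbC(\bar L)=\mathrm{Sym}_\bbC(L)$), and in the setting of \cite{BruinsmaSchenkel} the relevant unital Lie $\ast$-algebras have nonzero unit by construction. It would strengthen the exposition to say this explicitly.
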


Let us recall that the Heisenberg Lie
algebra $\mathfrak{Heis}(V,\tau)\in \astdguLie_\bbC $ associated to an unshifted Poisson complex $(V,\tau)\in \PCh_\bbR$
is given by the chain complex
\begin{flalign}
\mathfrak{Heis}(V,\tau) \,:=\, V_\bbC \oplus \bbC\quad,
\end{flalign}
together with the Lie bracket 
$[-,-] : (V_\bbC \oplus \bbC)\otimes (V_\bbC \oplus \bbC)\to V_\bbC \oplus \bbC $ determined by
\begin{flalign}
\big[v_1\oplus c_1 , v_2\oplus c_2 \big] \,=\, 0\oplus i\, \tau(v_1,v_2)\quad, 
\end{flalign}
for all $v_1,v_2\in V$ and $c_1,c_2\in\bbC$, and unit $\eta : \bbC\to V_\bbC \oplus \bbC$ given
by $\oone:= \eta(1) = 0\oplus 1$. 
The $\ast$-involution on $V_\bbC \oplus \bbC$ is determined by $v^\ast = v$, for all $v\in V$,
and complex conjugation on $\bbC$.
To a morphism $f: (V,\tau)\to (V^\prime,\tau^\prime)$
in $\PCh_\bbR$ it assigns the $\astdguLie_\bbC$-morphism $\mathfrak{Heis}(f) : 
\mathfrak{Heis}(V,\tau)\to \mathfrak{Heis}(V^\prime,\tau^\prime)$ determined by 
$f_\bbC\oplus\id : V_\bbC \oplus \bbC\to V_\bbC^\prime \oplus \bbC$, where
$f_\bbC := f\otimes\id : V\otimes \bbC\to V^\prime \otimes \bbC$ denotes the complexification
of the chain map $f$. 
\sk

The following lemma follows  directly from the definitions.
\begin{lem}
$\mathfrak{Heis} : \PCh_\bbR\to  \astdguLie_\bbC$ is a homotopical functor.
Together with Lemma \ref{lem:Qlinhomotopical}, this proves Proposition \ref{prop:CCRhomotopical} a).
\end{lem}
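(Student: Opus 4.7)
The plan is to unravel both notions of weak equivalence and reduce the claim to two elementary facts about chain complexes, observing that the Poisson bracket plays no role whatsoever.

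First, I would invoke Definition \ref{def:homotopicalCATS}: a weak equivalence in $\PCh_\bbR$ is nothing more than a morphism $f : (V,\tau)\to(V',\tau')$ whose underlying chain map $f : V\to V'$ is a quasi-isomorphism in $\Ch_\bbR$, and a weak equivalence in $\astdguLie_\bbC$ is tested on the underlying chain complex in $\Ch_\bbC$. Since, by the explicit formula recalled just before the lemma, the morphism $\mathfrak{Heis}(f)$ has underlying chain map $f_\bbC\oplus \id_\bbC : V_\bbC\oplus\bbC \to V'_\bbC \oplus \bbC$, the Poisson structures $\tau,\tau'$ do not enter at all. The entire statement therefore reduces to checking that $f_\bbC\oplus \id_\bbC$ is a quasi-isomorphism in $\Ch_\bbC$ whenever $f$ is a quasi-isomorphism in $\Ch_\bbR$.

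Next, I would verify the two facts that make this reduction immediate. Complexification $(-)_\bbC = (-)\otimes_\bbR \bbC$ preserves quasi-isomorphisms, since $\bbC$ is flat over the field $\bbR$ and the K\"unneth formula yields $H_n(V_\bbC)\cong H_n(V)\otimes_\bbR \bbC$ naturally in $V$, sending $H_n(f)$ to $H_n(f)\otimes\id$. Furthermore, homology commutes with finite direct sums of chain complexes, so $H_n(f_\bbC\oplus\id_\bbC) = H_n(f_\bbC)\oplus H_n(\id_\bbC)$ is an isomorphism in every degree. Combining these two observations gives the claim.

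No real obstacle is expected here: the two needed facts are standard, and the whole content is that $\mathfrak{Heis}$ is, on underlying chain complexes, just the functor $V\mapsto V_\bbC\oplus \bbC$, which is manifestly homotopical. To conclude, I would combine this with Lemma \ref{lem:Qlinhomotopical} and the factorization $\CCR = \QQQ_\mathrm{lin}\circ \mathfrak{Heis}$ in \eqref{eqn:CCRheisQlin}: the composition of two homotopical functors is homotopical, which yields Proposition \ref{prop:CCRhomotopical}~a).
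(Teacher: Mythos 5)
Your argument is correct and matches the paper's approach: the paper simply states that the lemma ``follows directly from the definitions,'' and you have spelled out precisely the elementary facts (weak equivalences are tested on underlying chain complexes, complexification over the field $\bbR$ preserves homology, and homology commutes with finite direct sums) that make this immediate.
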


The main technical result of this appendix is 
\begin{propo}\label{propo:zigzag}
Let $(V,\tau)\in \PCh_\bbR$ be an unshifted Poisson complex 
and $\rho\in \hom(\bigwedge^2V,\bbR)_1$ a $1$-chain.
Then there exists a zig-zag 
$\mathfrak{Heis}(V,\tau)   \stackrel{\sim}{\leftarrow} H_{(V,\tau,\rho)} \stackrel{\sim}{\rightarrow}
\mathfrak{Heis}(V,\tau+\partial\rho)$
of weak equivalences in $\astdguLie_\bbC$.
Together with Lemma \ref{lem:Qlinhomotopical}, this proves Proposition \ref{prop:CCRhomotopical} b).
\end{propo}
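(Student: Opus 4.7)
My plan is to construct $H_{(V,\tau,\rho)}$ as a central extension of the abelian differential graded Lie algebra $V_\bbC$ by an ``interval'' complex $P$ (quasi-isomorphic to $\bbC$) that interpolates between two copies of the Heisenberg center, with the interpolating $2$-cocycle built directly from $\tau$ and the homotopy $\rho$. Concretely, I would take $P$ to be the chain complex with $P_0 = \bbC e_0 \oplus \bbC e_1$ and $P_{-1} = \bbC\iota$, equipped with the differential $de_0 = \iota = -de_1$. Then $H_0(P) = \bbC\cdot[e_0+e_1]$ and $H_n(P) = 0$ otherwise, so the two projections $q_0, q_1 : P \to \bbC$ sending $e_0 \mapsto 1$ (respectively $e_1 \mapsto 1$) and all other generators to $0$ are quasi-isomorphisms.

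I would then define $H_{(V,\tau,\rho)}$ to have underlying chain complex $V_\bbC \oplus P$, with $P$ declared central, unit $\oone_H := e_0 + e_1$, and the $\ast$-involution extending the one on $V$ by $e_j^\ast = e_j$, $\iota^\ast = \iota$. The bracket on the $V_\bbC$-generators is
\[
[v_1, v_2]_H \,:=\, i\tau(v_1, v_2)\, e_0 + i(\tau + \partial\rho)(v_1, v_2)\, e_1 - i\rho(v_1, v_2)\, \iota\quad.
\]
Jacobi and graded antisymmetry are automatic because $P$ is central and $\tau$, $\partial\rho$, $\rho$ are antisymmetric on $\bigwedge^2 V$, and the $\ast$-axiom $[x,y]^\ast = -[x^\ast, y^\ast]$ follows from the reality of $\tau$ and $\rho$ together with $i^\ast = -i$. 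The non-trivial point is the chain-map property of the bracket: splitting by input degree, one checks that $d[v_1, v_2]_H$ picks up $de_0 = \iota$ and $de_1 = -\iota$ and equals $-i(\partial\rho)(v_1,v_2)\,\iota$, while $[dv_1, v_2]_H + (-1)^{|v_1|}[v_1, dv_2]_H$ picks up only the $\iota$-component and also equals $-i(\partial\rho)(v_1,v_2)\,\iota$ by the definition of $\partial$ on the hom complex, using $\partial\tau = \partial(\tau+\partial\rho) = 0$.

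The two zig-zag morphisms $p_0, p_1 : H_{(V,\tau,\rho)} \to \mathfrak{Heis}(V,\tau), \mathfrak{Heis}(V,\tau+\partial\rho)$ are then defined as $\id_{V_\bbC} \oplus q_0$ and $\id_{V_\bbC} \oplus q_1$ respectively. Unpacking the definitions, $p_0$ is a unital Lie $\ast$-algebra morphism because it selects the $e_0$-component of the bracket, yielding $i\tau(v_1,v_2)\,\oone \in \mathfrak{Heis}(V,\tau)$; similarly $p_1$ selects the $e_1$-component to give $i(\tau+\partial\rho)(v_1,v_2)\,\oone$; and the chosen unit $\oone_H = e_0+e_1$ maps to the respective Heisenberg units under both $p_0$ and $p_1$. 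Each $p_j$ is a quasi-isomorphism because it is the direct sum of $\id_{V_\bbC}$ with the quasi-isomorphism $q_j : P \to \bbC$, giving the required zig-zag of weak equivalences in $\astdguLie_\bbC$.

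The main difficulty I anticipate is pure sign bookkeeping: keeping straight the sign conventions of the differentials on $V$, $P$, $\bigwedge^2 V$ and on the hom complex, the Koszul rule for the chain-map property of the bracket, and the sign in the $\ast$-axiom. Once these are fixed, every verification reduces to the single cocycle identity $A - B = \partial C$ in $\hom(\bigwedge^2 V,\bbC)$ with $A := i\tau$, $B := i(\tau+\partial\rho)$, $C := -i\rho$, together with the observation that $P$ is an explicit resolution of $\bbC$. There is no deeper conceptual obstacle beyond exhibiting the explicit cocycle above.
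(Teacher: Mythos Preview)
Your proposal is correct and takes essentially the same approach as the paper: both build $H_{(V,\tau,\rho)}$ as a central extension of $V_\bbC$ by a three-dimensional complex quasi-isomorphic to $\bbC$, with the extension cocycle assembled from $\tau$, $\partial\rho$ and $\rho$. The only difference is the choice of basis for that central complex---the paper writes it as $D\oplus\bbC$ with generators $x,y,\oone$ and bracket $i\partial\rho\,x+i\rho\,y+i\tau\,\oone$, which is related to your $P$ by the linear change $e_0\leftrightarrow \oone-x$, $e_1\leftrightarrow x$, $\iota\leftrightarrow -y$; under this identification your projections $p_0,p_1$ coincide with the paper's quotient maps $\pi_0,\pi_1$.
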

\begin{proof}
We construct an explicit object $H_{(V,\tau,\rho)} \in \astdguLie_\bbC$ that allows us to
exhibit the desired zig-zag of weak equivalences. Let us introduce the acyclic chain complex
\begin{flalign}
D \,:=\,  \Big(
\xymatrix@C=2.5em{
\stackrel{(-1)}{\bbC} &\ar[l]_-{\id} \stackrel{(0)}{\bbC} 
}
\Big)\,\in\,\Ch_\bbC
\end{flalign}
and the notations $x:= 1\in D_0$ and $y:= \dd x = 1\in D_{-1}$.
We define $H_{(V,\tau,\rho)} \in \astdguLie_\bbC$  by
\begin{flalign}
H_{(V,\tau,\rho)}\,:=\, V_\bbC \oplus D \oplus \bbC\quad,
\end{flalign}
together with the unit $\oone := 0\oplus0\oplus 1$ and the Lie bracket
\begin{flalign}
\big[v_1\oplus \alpha_1 \oplus c_1 , v_2\oplus\alpha_2\oplus c_2\big]
\,:= \, 0 \oplus\big(i\, \partial \rho(v_1,v_2)\, x + i\, \rho(v_1,v_2)\,y\big) \oplus i\,\tau(v_1,v_2)\quad.
\end{flalign}
For any real number $s\in\bbR$, we let $\mathcal{I}_s \subseteq H_{(V,\tau,\rho)}$
be the differential graded unital Lie $\ast$-algebra ideal generated by the two relations
\begin{flalign}\label{eqn:LieidealTMP}
0\oplus x\oplus 0 \,=\, 0 \oplus 0\oplus s\quad,\qquad
0\oplus y\oplus 0\,=\, 0\quad.
\end{flalign}
Note that the corresponding quotient
\begin{flalign}
 H_{(V,\tau,\rho)}\big/ \mathcal{I}_s  \,\cong\, \mathfrak{Heis}(V , \tau + s\,\partial \rho)
\end{flalign}
is isomorphic to the Heisenberg Lie algebra of $(V , \tau + s\,\partial \rho)\in\PCh_\bbR$. 
We still have to show that the quotient map
\begin{flalign}\label{eqn:pistmp}
\pi_s\,:\, H_{(V,\tau,\rho)}~\longrightarrow~ \mathfrak{Heis}(V , \tau + s\,\partial \rho)
\end{flalign}
is a weak equivalence in $\astdguLie_\bbC$. From the explicit
form of the relations in \eqref{eqn:LieidealTMP}, we observe that $\pi_s = \id_V \oplus q_s$
with $q_s : D\oplus \bbC \to \bbC$ given by $q_s : (c_1 x + c_2 \,y)\oplus c_3\mapsto
s\, c_1 + c_3$. This is clearly a quasi-isomorphism in $\Ch_\bbC$, hence 
\eqref{eqn:pistmp} is a weak equivalence for any $s\in\bbR$. The desired zig-zag follows
by taking $s=0$ and $s=1$.
\end{proof}

%%%%%%%%%%%%%%%%%%%%%%%%


\begin{thebibliography}{10}

\bibitem[Bar15]{Bar}
C.~B\"ar,
``Green-hyperbolic operators on globally hyperbolic spacetimes,''
Commun.\ Math.\ Phys.\  {\bf 333}, no.\ 3, 1585 (2015)
[arXiv:1310.0738 [math-ph]].


\bibitem[BGP07]{BGP}
C.~B\"ar, N.~Ginoux and F.~Pf\"affle, 
{\it Wave equations on Lorentzian manifolds and quantization},
Eur.\ Math.\ Soc., Z\"urich (2007) 
[arXiv:0806.1036 [math.DG]].


\bibitem[BG11]{BGproc}
C.~B\"ar and N.~Ginoux, 
``Classical and quantum fields on Lorentzian manifolds,'' 
Springer Proc.\ Math.\ {\bf 17}, 359 (2011) 
[arXiv:1104.1158 [math-ph]].


\bibitem[BBSS17]{BBSS} 
C.~Becker, M.~Benini, A.~Schenkel and R.~J.~Szabo,
``Abelian duality on globally hyperbolic spacetimes,''
Commun.\ Math.\ Phys.\  {\bf 349}, no.\ 1, 361 (2017)
[arXiv:1511.00316 [hep-th]].
    

\bibitem[BSS16]{BSSdiffcoho} 
C.~Becker, A.~Schenkel and R.~J.~Szabo,
``Differential cohomology and locally covariant quantum field theory,''
Rev.\ Math.\ Phys.\  {\bf 29}, no\. 01, 1750003 (2016)
[arXiv:1406.1514 [hep-th]].
  

\bibitem[Ben16]{BeniniMaxwell} 
M.~Benini,
``Optimal space of linear classical observables for Maxwell $k$-forms 
via spacelike and timelike compact de Rham cohomologies,''
J.\ Math.\ Phys.\  {\bf 57}, no.\ 5, 053502 (2016)
[arXiv:1401.7563 [math-ph]].


\bibitem[BD15]{QFTmodels}
M.~Benini and C.~Dappiaggi, 
``Models of free quantum field theories on curved backgrounds,'' 
in: R.~Brunetti, C.~Dappiaggi, K.~Fredenhagen and J.~Yngvason (eds.),
{\it Advances in algebraic quantum field theory},
Springer Verlag, Heidelberg (2015).
  

\bibitem[BDH13]{BDH} 
M.~Benini, C.~Dappiaggi and T.~P.~Hack,
``Quantum field theory on curved backgrounds -- A primer,''
Int.\ J.\ Mod.\ Phys.\ A {\bf 28}, 1330023 (2013)
[arXiv:1306.0527 [gr-qc]].


\bibitem[BDHS14]{BDHS14} 
M.~Benini, C.~Dappiaggi, T.~P.~Hack and A.~Schenkel,
``A $C^\ast$-algebra for quantized principal $U(1)$-connections on globally hyperbolic Lorentzian manifolds,''
Commun.\ Math.\ Phys.\  {\bf 332}, 477 (2014)
[arXiv:1307.3052 [math-ph]].


\bibitem[BDM14]{BDM}
M.~Benini, C.~Dappiaggi and S.~Murro,
``Radiative observables for linearized gravity on asymptotically flat spacetimes and their boundary induced states,''
J.\ Math.\ Phys.\ {\bf 55}, 082301 (2014)
[arXiv:1404.4551 [gr-qc]].


\bibitem[BDS14]{BDS14} 
M.~Benini, C.~Dappiaggi and A.~Schenkel,
``Quantized Abelian principal connections on Lorentzian manifolds,''
Commun.\ Math.\ Phys.\  {\bf 330}, 123 (2014)
[arXiv:1303.2515 [math-ph]].
  

\bibitem[BS17]{fibcat} 
M.~Benini and A.~Schenkel,
``Quantum field theories on categories fibered in groupoids,''
Commun.\ Math.\ Phys.\  {\bf 356}, no.\ 1, 19 (2017)
[arXiv:1610.06071 [math-ph]].


\bibitem[BS19a]{BSreview} 
M.~Benini and A.~Schenkel,
``Higher structures in algebraic quantum field theory,''
Fortsch.\ Phys.\  {\bf 67}, no.\ 8-9, 1910015 (2019)
[arXiv:1903.02878 [hep-th]].


\bibitem[BSS18]{BeniniSchenkelSchreiber} 
M.~Benini, A.~Schenkel and U.~Schreiber,
``The stack of Yang–Mills fields on Lorentzian manifolds,''
Commun.\ Math.\ Phys.\  {\bf 359}, no.\ 2, 765 (2018)
[arXiv:1704.01378 [math-ph]].


\bibitem[BSS15]{BeniniSchenkelSzabo} 
M.~Benini, A.~Schenkel and R.~J.~Szabo,
``Homotopy colimits and global observables in Abelian gauge theory,''
Lett.\ Math.\ Phys.\  {\bf 105}, no.\ 9, 1193 (2015)
[arXiv:1503.08839 [math-ph]].


\bibitem[BSW17]{AQFToperad} 
M.~Benini, A.~Schenkel and L.~Woike,
``Operads for algebraic quantum field theory,''
arXiv:1709.08657 [math-ph].


\bibitem[BSW19a]{involution} 
M.~Benini, A.~Schenkel and L.~Woike,
``Involutive categories, colored $\ast$-operads and quantum field theory,''
Theor.\ Appl.\ Categor.\  {\bf 34}, 13 (2019)
[arXiv:1802.09555 [math.CT]].
  
  
\bibitem[BSW19b]{hAQFT}
M.~Benini, A.~Schenkel and L.~Woike, 
``Homotopy theory of algebraic quantum field theories,''
Lett.\ Math.\ Phys.\  {\bf 109}, no.\ 7, 1487 (2019)
[arXiv:1805.08795 [math-ph]]. 


\bibitem[BS19b]{BruinsmaSchenkel}
S.~Bruinsma and A.~Schenkel, 
``Algebraic field theory operads and linear quantization,'' 
Lett.\ Math.\ Phys.\  {\bf 109}, no.\ 11, 2531 (2019)
[arXiv:1809.05319 [math-ph]]. 


\bibitem[BFV03]{BFV} 
R.~Brunetti, K.~Fredenhagen and R.~Verch,
``The generally covariant locality principle: A new paradigm for local quantum field theory,''
Commun.\ Math.\ Phys.\ {\bf 237}, 31 (2003)
[math-ph/0112041].


\bibitem[CPTVV17]{DAG2}
D.~Calaque, T.~Pantev, B.~To{\"e}n, M.~Vaqui{\'e} and G.~Vezzosi,
``Shifted Poisson structures and deformation quantization,''
J.\ Topol.\ {\bf 10}, no.\ 2, 483 (2017)
[arXiv:1506.03699 [math.AG]].


\bibitem[CG17]{CostelloGwilliam}
K.~Costello and O.~Gwilliam,
{\it Factorization algebras in quantum field theory},
New Mathematical Monographs {\bf 31}, 
Cambridge University Press, Cambridge (2017).


\bibitem[DL12]{Dappiaggi}
C.~Dappiaggi and B.~Lang,
``Quantization of Maxwell's equations on curved backgrounds and general local covariance,''
Lett.\ Math.\ Phys.\  {\bf 101}, 265 (2012)
[arXiv:1104.1374 [gr-qc]].
  
  
\bibitem[DHKS04]{DHKS}
W.~G.~Dwyer, P.~S.~Hirschhorn, D.~M.~Kan and J.~H.~Smith,
{\it Homotopy limit functors on model categories and homotopical categories},
Math.\ Surveys Monogr.\ {\bf 113}, 
Amer.\ Math.\ Soc., Providence, RI (2004).


\bibitem[FH13]{FewsterHunt}
C.~J.~Fewster and D.~S.~Hunt,
``Quantization of linearized gravity in cosmological vacuum spacetimes,''
Rev.\ Math.\ Phys.\  {\bf 25}, 1330003 (2013)
[arXiv:1203.0261 [math-ph]].
  
  
\bibitem[FL16]{FewsterLang} 
C.~J.~Fewster and B.~Lang,
``Dynamical locality of the free Maxwell field,''
Annales Henri Poincar\'e {\bf 17}, no.\ 2, 401 (2016)
[arXiv:1403.7083 [math-ph]].
  
  
\bibitem[FR12]{FredenhagenRejzner} 
K.~Fredenhagen and K.~Rejzner,
``Batalin-Vilkovisky formalism in the functional approach to classical field theory,''
Commun.\ Math.\ Phys.\  {\bf 314}, 93 (2012)
[arXiv:1101.5112 [math-ph]].
  
  
\bibitem[FR13]{FredenhagenRejzner2} 
K.~Fredenhagen and K.~Rejzner,
``Batalin-Vilkovisky formalism in perturbative algebraic quantum field theory,''
Commun.\ Math.\ Phys.\  {\bf 317}, 697 (2013)
[arXiv:1110.5232 [math-ph]].


\bibitem[HK64]{HK}
R.~Haag and D.~Kastler,
``An algebraic approach to quantum field theory,''
J.\ Math.\ Phys.\  {\bf 5}, 848 (1964).


\bibitem[HS13]{HackSchenkel}
T.~P.~Hack and A.~Schenkel,
``Linear bosonic and fermionic quantum gauge theories on curved spacetimes,''
Gen.\ Rel.\ Grav.\  {\bf 45}, 877 (2013)
[arXiv:1205.3484 [math-ph]].


\bibitem[Hir03]{Hirschhorn}
P.~S.~Hirschhorn,
{\it Model categories and their localizations},
Math.\ Surveys Monogr.\ {\bf 99}, 
Amer.\ Math.\ Soc., Providence, RI (2003).


\bibitem[Hol08]{HollandsYM}
S.~Hollands,
``Renormalized quantum Yang-Mills fields in curved spacetime,''
Rev.\ Math.\ Phys.\  {\bf 20}, 1033 (2008)
[arXiv:0705.3340 [gr-qc]].
  
  
\bibitem[Hov99]{Hovey}
M.~Hovey,
{\it Model categories}, 
Math.\ Surveys Monogr.\ {\bf 63}, 
Amer.\ Math.\ Soc., Providence, RI (1999).


\bibitem[Kha16]{Khavkine}
I.~Khavkine,
``Cohomology with causally restricted supports,''
Annales Henri Poincar\'e {\bf 17}, no.\ 12, 3577 (2016)
[arXiv:1404.1932 [math-ph]].


\bibitem[Kha19]{Khavkine2}
I.~Khavkine,
``Compatibility complexes of overdetermined PDEs of finite type, with applications to the Killing equation,''
Class.\ Quant.\ Grav.\  {\bf 36}, no.\ 18, 185012 (2019)
[arXiv:1805.03751 [gr-qc]].


\bibitem[LurHTT]{HTT}
J.~Lurie,
{\it Higher topos theory}, 
Annals of Mathematics Studies {\bf 170}, 
Princeton University Press (2009).


\bibitem[LurHA]{HA}
J.~Lurie,
{\it Higher algebra},
book draft available at
\url{http://www.math.harvard.edu/~lurie/papers/HA.pdf}


\bibitem[PTVV13]{DAG}
T.~Pantev, B.~To{\"e}n, M.~Vaqui{\'e} and G.~Vezzosi,
``Shifted symplectic structures,''
Publ.\ Math.\ Inst.\ Hautes {\'E}tudes Sci.\ {\bf 117}, 271 (2013)
[arXiv:1111.3209 [math.AG]]. 


\bibitem[Pri18]{Pridham}
J.~Pridham,
``An outline of shifted Poisson structures and deformation 
quantisation in derived differential geometry,''
arXiv:1804.07622 [math.DG].


\bibitem[Rie14]{Riehl}
E.~Riehl,
{\it Categorical homotopy theory},
New Mathematical Monographs {\bf 24},
Cambridge University Press, Cambridge (2014).


\bibitem[SDH14]{SDH} 
K.~Sanders, C.~Dappiaggi and T.~P.~Hack,
``Electromagnetism, local covariance, the Aharonov-Bohm effect and Gauss' law,''
Commun.\ Math.\ Phys.\ {\bf 328}, 625 (2014)
[arXiv:1211.6420 [math-ph]].


\bibitem[Sch13]{Schreiber}
U.~Schreiber,
``Differential cohomology in a cohesive infinity-topos,''
current version available at \url{https://ncatlab.org/schreiber/show/differential+cohomology+in+a+cohesive+topos}
[arXiv:1310.7930 [math-ph]].


\bibitem[TZ18]{TehraniZahn} 
M.~T.~Tehrani and J.~Zahn,
``Background independence in gauge theories,''
arXiv:1804.07640 [math-ph].
  
  
\bibitem[Wei94]{Weibel}
C.~A.~Weibel,
{\it An introduction to homological algebra},
Cambridge Studies in Advanced Mathematics {\bf 38},
Cambridge University Press, Cambridge (1994).

\end{thebibliography}
\end{document}